\documentclass{article}

\newif\ifclassllncs{} \newif\ifclassbeamer{} \newif\ifclassmimosis{} \makeatletter%
\@ifclassloaded{llncs}%
  {\classllncstrue}%
  {\classllncsfalse}%
\@ifclassloaded{beamer}%
{\classbeamertrue}%
{\classbeamerfalse}%
\@ifclassloaded{mimosis}%
{\classmimosistrue}%
{\classmimosisfalse}%
\makeatother%


\newif\ifhidecomment{}
\hidecommenttrue{}
\hidecommentfalse{} 

\usepackage[T1]{fontenc}
\usepackage[utf8]{inputenc}
\usepackage{lmodern}
\usepackage[english]{babel}

\usepackage{graphics}
\usepackage{setspace}
\usepackage{xcolor}
\usepackage{xparse}

\usepackage{metalogo}

\usepackage{etoolbox}

\usepackage{bookmark}

\usepackage[oldstyle,scale=0.7]{sourcecodepro}

\ifclassllncs{}
    \usepackage[caption=false]{subfig}
    ergreg
\else 
    \ifclassmimosis{}
    \else
        \usepackage{caption} 
        \captionsetup*[table]{skip=-10pt,belowskip=-10pt}
        \captionsetup*[figure]{skip=5pt,belowskip=-10pt}
        \usepackage{subfig}
    \fi
\fi

\ifclassllncs{}
     
\fi


\usepackage{amsmath}
\usepackage{amsthm} 
\usepackage{amssymb}
\usepackage{pifont}

\usepackage{thmtools,thm-restate}

\usepackage{fontawesome}

\usepackage{mathtools}
\usepackage{interval}
\intervalconfig{soft open fences}

\usepackage{hyperref}
\usepackage{cleveref}

\usepackage{hypcap}

\usepackage{tabularx}
\usepackage{booktabs}

\usepackage{hyphenat}

\usepackage{msc}

\usepackage{ifthen}

\ifclassmimosis{}
    \usepackage[%
    autocite     = plain, backend      = biber, doi          = true, url
    = true, giveninits   = true, hyperref     = true, maxbibnames  = 99,
    maxcitenames = 99, sortcites    = true, style        = numeric, ]{biblatex}
    \input{bibliography-mimosis} 
    
\else
    \ifclassllncs{}
        \usepackage[backend=biber,style=lncs,style=numeric,isbn=false,doi=false,url=false,maxnames = 6]{biblatex}
    \else
        \usepackage[backend=biber,style=numeric,isbn=false,doi=false,url=false,maxnames = 6 ]{biblatex}
    \fi
    \renewbibmacro{in:}{}
    \AtEveryBibitem{\clearfield{pages}} 
\fi

\usepackage{csquotes}

\ifclassmimosis{}
    
\fi
\usepackage[n, advantage, operators, sets, adversary, landau, probability, notions, logic, ff, mm, primitives, events, complexity, oracles, asymptotics, keys]{cryptocode}

\providecommand{\probsubunder}[2]{\ensuremath{\underset{#1}{\operatorname{\probname}}[#2]}}
\providecommand{\condprobsubunder}[3]{\ensuremath{\probsubunder{#1}{#2\,\left|\,#3\vphantom{#1}\right.}}}

\usepackage{tikz} 
\usetikzlibrary{arrows.meta,backgrounds,positioning,arrows,bending,shapes.multipart,decorations.pathreplacing,fit,shapes.geometric}

\usepackage{multirow} 

\usepackage{tikzducks}
\usepackage{tikzpeople}

\usepackage{ragged2e}
\usepackage{xspace}

\overfullrule=2cm

\ifclassbeamer{}
    \usepackage[duration=30]{pdfpc}
\else
    
\fi

\newif\ifshowauthor{}
\showauthortrue{}

\ifclassllncs{}
\else
    \ifclassbeamer{}
    \else
        \theoremstyle{definition} 
        \newtheorem{definition}{Definition} 
        
        \newtheorem{theorem}{Theorem} 
        \newtheorem{corollary}{Corollary}
        \newtheorem{lemma}{Lemma} 
         
        \newtheorem{example}{Example} 
    \fi
\fi

\definecolor{amaranthred}{rgb}{0.83,0.13,0.18}
\definecolor{forestgreenweb}{rgb}{0.13,0.55,0.13}
\newcommand{\cmark}{\textcolor{forestgreenweb}{\ding{51}}}
\newcommand{\xmark}{\textcolor{amaranthred}{\ding{55}}}

\usepackage{orcidlink}

\newcommand{\squad}{\hspace{0.5em}} 



\newcommand\Setup{\mathsf{Setup}} 
\newcommand\Precmp{\mathsf{Precmp}} \newcommand\Chal{\mathsf{Chal}}
\newcommand\Resp{\mathsf{Resp}} \newcommand\Vrfy{\mathsf{Vrfy}}
 \renewcommand{\O}{\mathcal{O}}
\newcommand{\A}{\mathcal{A}} \renewcommand{\P}{\mathcal{P}}
\newcommand{\V}{\mathcal{V}}

\renewcommand{\S}{{\cal{} S}}

\newcommand{\dbposeprot}{\textsf{PoSE-DB}} 
\newcommand{\poseprot}{\textsf{PoSE}}

\newcommand{\I}{\ensuremath{\mathcal{I}}}

\newcommand\llp{\mathsf{llp}}

\newcommand\LS{\mathsf{Left}} 
\newcommand\RS{\mathsf{Right}}
\newcommand\CS{\mathsf{Center}} 
\newcommand\Base{\mathsf{Base}}

\newcommand{\In}{\mathsf{In}} 
\newcommand{\Out}{\mathsf{Out}}
\newcommand{\tr}{\triangleright}

\newcommand{\harrow}{\overset{H}{\to}}

\newcommand\Con{\mathsf{Con}}


\def\formalposeprotocol{
\begin{figure}[!ht]\centering
\procedureblock{}{ \textbf{Prover}[\rho]
\< \<
\hspace{-13pt} \textbf{Verifier}[\rho, \I] \pclb
\pcintertext[dotted] { Initialisation phase } 
\sigma{} \gets{} \Precmp(\rho, \Upsilon) \< \sendmessageleft*{\Upsilon} \< \Upsilon{} \sample{} \I \pclb
\pcintertext[dotted] { Interactive phase } 
 \text{for }i:=1 \text{ to }r \< \< x_i \gets{} \Chal(\rho) \\
\< \sendmessageleft*{x_i} \< t^b_i \gets{} clock() \\
y_i \gets{} \Resp(\rho,\sigma, x_i) \< \sendmessageright*{y_i} \< t^e_i \gets{} 
clock() \pclb 
\pcintertext[dotted] { Verification phase }
\< \< \hspace{-60pt} \forall{} i \colon{} \Vrfy(\rho, \Upsilon, x_i, y_i) =
\true{} \\
\< \< \hspace{-10pt} \forall{} i \colon{} t^e_i - t^b_i < \Delta{} }
\caption{\label{formal-pose}\dbposeprot{} protocol session
}
\end{figure}
}

\def\securityexperiment{
	\begin{figure}[!ht]
		\centering
	\begin{tabular}{l}
		\procedure{}{
		 (\rho,\I) \gets{} \Setup(m,w); \squad \Upsilon{} \sample{} \I\\
		 \pcfor{} i:=1 \text{ to }r \pcdo{}:\\
			 \pcind\pcind{} \sigma_{i} \gets{} \A_0(1^w,\rho,\Upsilon,\set{x_j | j
			 < i})\\ \pcind\pcind{} x_i\gets{} \Chal(\rho); \squad
			 y_i \gets{} \A_1^{\O}(1^w,\rho,\sigma_i,x_i) \\
		 \pcreturn{} \; \forall{} i \colon{} \mathsf{Vrfy}(\rho,\Upsilon,
		 x_i,y_i)=\true{} }
	\end{tabular}
	\caption{\label{fig:securityexperiment}Security experiment \( \mathsf{Exp}^{m,r,w}_{\A_0,\A_1}\).}
\end{figure}
}

\def\securityexperimentmsc{
	\begin{figure}[!ht]
		\centering
		\vspace*{-1em}
		\scalebox{0.7}{
		\begin{msc}[msc keyword=, draw frame = none, instance distance=3cm,%
			head top distance=0cm, foot distance=0cm,first level height=0cm,
			 environment distance=0cm,foot height=0cm, level height=0.4cm,
			label distance=0.5ex]{}
		\declinst{v}{}{\( V \)}
		\declinst{p}{}{\( \A_1\) }
		\declinst{a}{}{\(\A_0\)}
		\nextlevel[1]
		\mess{\( \Upsilon \)}{v}{p}
		\mess*{\( \Upsilon \)}{p}{a}
		\nextlevel[2]
		\mess{\( \sigma_1 \)}{a}{p}
		\mess{\( x_1 \)}{v}{p}[1]
		\measure[l]{\( \Delta \)}{v}{v}[3]
		\nextlevel[1]
		\mess*{\( x_1 \)}{p}{a}[1]
		\nextlevel[1]
		\mess{\( y_1 \)}{p}{v}[1]
		\nextlevel[1]
		\measure[l]{\( \Delta \)}{v}{v}[3]
		\mess{\( \sigma_2 \)}{a}{p}
		\mess{\( x_2 \)}{v}{p}[1]
		\nextlevel[1]
		\mess*{\( x_2 \)}{p}{a}[1]
		\nextlevel[1]
		\mess{\( y_2 \)}{p}{v}[1]
		\nextlevel[1]
		\mess{\( \sigma_3 \)}{a}{p}
	\end{msc}}
	\vspace*{-1em}
		\caption{\label{fig:secexp:msc} 
		The y-axis denotes a timeline. Because \(A_0\) is far, \(A_1\) cannot 
		relay the verifier's challenge to \(A_0\) and wait for a response. 
		\(A_0\)	does help \(A_1\) 
		in the other phases of the protocol execution.
	}
	\end{figure}
}

\def\simulator{
\begin{figure}[!ht]
	\centering
	\procedure{}{ \text{run }\A_1^{\O_h}(1^w,\rho,\sigma,o_1), \ldots,
	\A_1^{\O_h}(1^w, \rho, \sigma,o_m) \text{ in parallel}: \\ 
	\pcfor{} j=1 \text{ to } \max{\set{t_1, \ldots, t_n}} + 1: 
	\text{ Let }L\text{ be an empty list} \\
	\squad{} \pcfor{} i=1 \text{ to } m: \\
	\squad\squad{} \pcif{} \A_1^{\O_h}(1^w, \rho, \sigma,o_i) \text{ has not
	finished}: \\
	\squad\squad\squad{} \text{run } \A_1^{\O_h}(1^w, \rho, \sigma, o_i) \text{until
	the }j \text{-th query (or output) } Q_{ij}\\
	\squad\squad\squad{} \text{- if }Q_{ij} \text{ is an output, add \((0, i, Q_{ij})\) to \(L\) } \\
	\squad\squad\squad{} \text{- if }Q_{ij} \text{ is repeated, answer with the
	previous response} \\
	\squad\squad\squad{} \text{- if }Q_{ij} \text{ is a possible pre-label of \(v\),
	add \((1, v, Q_{ij})\) to \(L\) }\\
	\squad\squad\squad{} \text{- else ask \(\O_h(Q_{ij})\), relay the answer to } \A_1 \text{ and store it}\\
	\squad\pcfor{} \text{ each } (t, v, q) \text{ in }L: \text{ (ordered in inverse topological order w.r.t. \(G\))} \\
	\squad\squad{} \text{output } (t, v, q) \\
	\squad\squad{} \text{if \(t = 1\) ask } \O_h(q) \text{, relay the answer to } \A_1 \text{ and store it}}
	\caption{\label{fig:S}Procedure for \(\S^{\O_h}_\A(1^w, \rho, \sigma) \)}
\end{figure}
}

\def\distantattacker{
\begin{figure}[!ht]
	\hspace{0.7cm}
	\begin{tikzpicture}[scale=.7, transform shape]
		\tikzstyle{session1arrow}=[<->,semithick]
		\tikzstyle{session2arrow}=[->, semithick, dashed]
		
		\node (proximity) at (0,0) [circle, draw=black, very thick, 
		minimum width=4cm]{};
		
		\node (verifier) at 
		(-1,0){\includegraphics[scale=.03]{}};

		\node (ghost) at 
		(-1,-2.5){};
		
		\node (device) at (1,0) 
		{\includegraphics[scale=.03]{}};
		\draw [session1arrow] (device) to node [above] [near start] {} 
		(verifier);
		
		\node [above of =device, node distance=0.5cm] {Device};
		\node [above of =verifier, node distance=0.5cm] {Verifier};
	\end{tikzpicture}
	\hfill
	\begin{tikzpicture}[scale=.7, transform shape]
		\tikzstyle{session1arrow}=[<->,semithick]
		\tikzstyle{session2arrow}=[<-, semithick, dashed]
		
		\node (proximity) at (0,0) [circle, draw=black, dotted, very thick, 
		minimum width=4cm]{};
		
		\node (verifier) at 
		(-1,0){\includegraphics[scale=.03]{images/honest.png}};
		
		\node (device) at (1,0) 
		{\includegraphics[scale=.03]{images/attacker.png}};
		\draw [session1arrow] (device) to node [above] [near start] {} 
		(verifier);

		\node (attacker) at (3,-2) 
		{\includegraphics[scale=.03]{images/attacker.png}};

		\node [above of =device, node distance=0.5cm] {Device};
		\node [above of =verifier, node distance=0.5cm] {Verifier};
		\node [above of =attacker, node distance=0.5cm] {Attacker};

		\node (hole_) at (0,-2) [circle, fill=white,very thick,
		minimum width=1cm]{};
		\node (hole) at (0,-2){};

		\draw [session2arrow,bend right] (verifier) to node {} (hole);
		\draw [session2arrow,bend right] (device) to node {} (hole);
		\draw [session2arrow,bend right] (device) to node {} (hole);
		\draw [session2arrow,bend left] (attacker) to node {} (hole);

	\end{tikzpicture}
	\caption{\label{fig-isolation} The isolation assumption (on the left)
	assumes no interference in the prover-verifier communication. The distant
	attacker assumption (on the right) lets the attacker interfere from
	far away.}
\end{figure}
}

\def\graphconstruction{
\begin{figure}[!ht]
	\centering
	 \scalebox{0.7}{
	\begin{tikzpicture}[ ->, >=stealth', shorten >=1, shorten <=1, auto, thick,
	 main node/.style={ fill=blue!20, draw, minimum size=30, node distance=60 },
	 simple edge/.style={ double } ] 
	 
	 \node[main node, circle,fill=red!20]
	 (LeftGn) {\(G_n\)}; \node[main node, rectangle] (CenterHn) [above right
	 of=LeftGn] {\(H_n\)}; \node[main node, rectangle] (RightCenterHn1) [below
	 right= 0.5 and 1 of CenterHn] {\(H_{n-1}\)};
	
	\node[main node, circle,fill=red!20] (RightLeftGn1) [below left
	of=RightCenterHn1] {\(G_{n-1}\)}; \node[main node, circle,fill=red!20]
	(RightRightGn1) [below right of=RightCenterHn1] {\(G_{n-1}\)};
	\scoped[on background layer]{ \node[draw,ellipse,fit=(RightCenterHn1)
	(RightLeftGn1) (RightRightGn1),fill=red!20,opacity=0.3,inner xsep=-1mm, inner
	ysep=-1mm] (RightGn) [below of= RightCenterHn1] {};}
	
	\scoped[on background layer]{ \node[draw,ellipse,fit=(RightLeftGn1)
	(RightRightGn1),fill=red!20,opacity=0.5,inner xsep=18mm, inner
	ysep=14mm] (Gnp1) [below=-1.4 of CenterHn] {};} 
	\node (NameGnp1) [below left= 2 and 2 of CenterHn] {\(G_{n + 1}\)};

	\path
	 (LeftGn) edge[bend left] (CenterHn.west) (CenterHn.east) +(0,-0.3)
	 edge[bend left] node[midway, above=1pt] {H} (RightCenterHn1.west)
	 (CenterHn.east) +(0,+0.3) edge[dashed,bend left] (RightLeftGn1.north)
	 (RightLeftGn1.north) +(0.2,0) edge (RightCenterHn1.west)
	 (RightCenterHn1.east) edge[dashed] (RightRightGn1) ; 
	\end{tikzpicture}
	}
	 \caption{\label{main-graph}\(G_{n + 1}\), continuous arrows represent simple
	edges \( \to \), \(H\) arrows represent connector subgraphs \( \harrow \) and
	dashed arrows represent the operator \( \tr \).
	}
 \end{figure}
}

\def\protocolscomparison{
	\begin{figure}[!ht]
		\centering
		\begin{tabular}{ccccc}
			Protocol & No Isol. & No Hw. & Comm. \\ 
			\toprule
		\cite{Seshadri2004} & \xmark{} & \cmark{} & \(\O(1)\) \\
		\cite{Kil2009} & \cmark{} & \xmark{} & \(\O(1)\) \\
		\cite{Perito2010} & \xmark{} & \cmark{} & \(\O(n)\) \\
		\cite{Parno2010} & \cmark{} & \xmark{} & \(\O(1)\) \\
		\cite{Ateniese2014} & \xmark{} & \cmark{} & \(\O(1)\) \\
		\cite{Karvelas2014} & \xmark{} & \cmark{} & \(\O(1)\) \\
		\cite{Karame2015} & \xmark{} & \cmark{} & \(\O(n)\) \\
		\labelcref{sec:construction-random} & \cmark{} & \cmark{} & \(\O(n)+ 
		\O(r)\) \\
		\labelcref{sec:protocol-graph} & \cmark{} & \cmark{} & \(\O(1) + \O(r)\) 
		\\
		\end{tabular}
	\caption{\label{tab:prots}
 Comparison of erasure and attestation protocols in terms of their 
 communication complexity (Comm) and their capacity to 
 \emph{avoid} the use of the device 
 isolation
 assumption (No Isol.) and specialised hardware (No Hw.).
	The value of \(n\) refers to the memory size and 
	\(
	r\) is a security parameter that refers to the number of rounds during the fast phase (see \Cref{sec:model}).
	}
\end{figure}
}

\addbibresource{bib.bib}

\begin{document}

\title{\texorpdfstring{Software-Based Memory Erasure with relaxed isolation requirements: Extended Version\footnote{This is an extended version of~\cite{Bursuc2024}. Here we propose a third memory-erasure protocol and show the proofs missing in the original article.}
}
{Software-Based Memory Erasure with relaxed isolation requirements: Extended Version}}
\ifshowauthor{
\author{
Sergiu Bursuc\orcidlink{0000-0002-0409-5735}\\ 
University of Luxembourg\\
\texttt{sergiu.bursuc@uni.lu}
\and
Reynaldo
Gil-Pons\orcidlink{0000-0003-1804-3319}
\thanks{Reynaldo Gil-Pons was supported by the Luxembourg 
National
Research Fund, Luxembourg, under the grant AFR-PhD-14565947.}\\ 
University of Luxembourg\\
\texttt{reynaldo.gilpons@uni.lu}
\and
Sjouke Mauw\orcidlink{0000-0002-2818-4433}\\ 
University of Luxembourg\\
\texttt{sjouke.mauw@uni.lu}
\and
Rolando Trujillo-Rasua\orcidlink{0000-0002-8714-4626} 
\thanks{Rolando Trujillo-Rasua was funded by a Ramon y Cajal grant from the Spanish 
Ministry of Science and Innovation and the European Union (REF:\@ 
RYC2020-028954-I).}\\ 
Rovira i Virgili University\\ 
\texttt{rolando.trujillo@urv.cat}
 }
} \else{
    \author{Anonymous}
}\fi

\date{}
\maketitle

\begin{abstract}
A Proof of Secure Erasure (\poseprot) is a communication protocol where a
verifier seeks evidence that a prover has erased its memory within the time
frame of the protocol execution. Designers of \poseprot{} protocols have long
been aware that, if a prover can outsource the computation of the memory erasure
proof to another device, then their protocols are trivially defeated. As a
result, most software-based \poseprot{} protocols in the literature assume that
provers are isolated during the protocol execution, that is, provers cannot
receive help from a network adversary. Our main contribution is to show that
this assumption is not necessary. We introduce formal models for \poseprot{}
protocols playing against provers aided by external conspirators and develop three
\poseprot{} protocols that we prove secure in this context. We reduce the
requirement of isolation to the more realistic requirement that the
communication with the external conspirator is relatively slow. Software-based
protocols with such relaxed isolation assumptions are especially pertinent for
low-end devices, where it is too costly to deploy sophisticated protection
methods.
\end{abstract}

\section{Introduction}

Internet of Things (IoT) devices are specially vulnerable to malware infection
due to their ubiquity, connectivity and limited computational
resources~\cite{wired10}. Once infected, an IoT device becomes both a victim and
a useful weapon to launch further attacks on more advanced infrastructure and
services. Detecting whether an IoT device is infected with malware is thus
essential to maintaining a secure computer network. The challenge for the
defender is operating within the resource constraints of IoT devices, which make
them ill-suited for active security defences and health
monitoring~\cite{Meneghello2019}. 

A pragmatic approach to ensure the absence of malware is secure erasure,
consisting of putting a device back into a clean state by wiping out its 
memory. This approach was first introduced 
in~\cite{Perito2010} as a prerequisite for secure software update. 
Although memory erasure can be achieved via direct hardware manipulation, here
we are interested in Secure Erasure protocols (\poseprot), 
whereby a verifier
instructs a resource-constrained device, called the prover, to erase its memory
and to prove that it indeed has done so.
Notice that a \poseprot{} protocol is not a data erasure
tool~\cite{data-erasure-1996}, despite both being aimed at erasing memory. The
latter is meant to erase sensitive data from memory in an irreversible manner,
i.e., in a way that is unrecoverable by advanced forensics techniques. The
former is a lightweight communication protocol whereby a verifier attests
whether a (possibly) compromised prover has filled its memory with random data. 

Because provers are potentially infected with malware, \poseprot{} protocols in
general cannot rely on cryptographic secrets stored in the prover. The exception
are protocols that rely on secure hardware~\cite{Ankergaard2021},
such as a Trusted Platform Module.
Not all devices are manufactured with secure hardware, though. 
Examples are 
those designed for low price and low energy consumption.
Hence, in this article, we don't assume secure
hardware on provers and we seek a software-based solution. 

In the absence of cryptographic secrets, software-based
\poseprot{} protocols have historically relied on the assumption that provers 
are 
isolated during 
the protocol
execution, that is, provers cannot receive external help. 
The isolation assumption has been deemed necessary so far to prevent a 
malicious prover from outsourcing the erasure 
proof to another device. 
Ensuring isolation is
cumbersome, though. It requires closing all communication channels between
provers and potential conspirators, for example, by jamming or using a Faraday cage.

The goal of this work is to reduce the requirement of isolation to the more
realistic requirement that the external conspirator (which we call a
\emph{distant attacker}) is further away from the verifier than the prover (see
\Cref{fig-isolation} for an illustration). That is, we do not restrict the
communication capabilities of the (possibly corrupt) prover with the external
adversary, but their position relative to each other. In practice, this can be
accomplished relying on round-trip-time measurements, to ensure that the
responses received to a sequence of challenges come from the device whose memory
we aim to erase. Such measurements are used in distance-bounding protocols,
whose goal is to ensure the authenticity of communication with a nearby
device~\cite{GilPons2022a}. For ensuring our distant-attacker assumption, we can
therefore build upon the principles of design in distance-bounding protocols, of
which there exist already various proof-of-concepts and real-life
implementations. While the messages exchanged in classic distance-bounding
protocols typically consist of single bits~\cite{Rasmussen2010}, recent progress
in the communication infrastructure allows fast exchange of longer messages.
This was proved feasible in~\cite{Boureanu2020}, assuming that attackers are
using only commercial off-the shelf hardware. Furthermore, the relay resistant
protection mechanism in the EMV protocol~\cite{EMVCo2021} and later
improvements~\cite{Radu2022} measure the round-trip-time of 32-bit packets in
order to bound distance. In the context of electric vehicle charging
systems,~\cite{Conti2022} also considers a bigger than binary alphabet for their
messages. In our proposed protocols, we will exploit this feature in order to
challenge random blocks of the device's memory, rather than bits. 

\distantattacker{}

\noindent \emph{Overview and contributions.} To demonstrate that secure memory
erasure is possible without assuming isolation, we introduce and formalize a
class of \poseprot{} protocols that employs a distance-bounding mechanism, which
we call \dbposeprot, and put forward three protocols in this class together with
formal security proofs against a corrupt prover that communicates with an
external attacker. 
\dbposeprot{} protocols will have an interactive phase consisting of several
challenge-response rounds. For each round, they will measure the round-trip-time
between the challenge and response, and verify that the response is correct,
aiming for two security goals. First, the prover cannot relay its communication
with the verifier to the distant attacker without failing the distance-bounding
check, compelling the prover to compute the responses locally. Second, if the
prover locally computes the responses to the verifier's challenges, then it must
have erased its memory even if aided by an external attacker. The first
contribution of this work is a formalization of those security requirements
(\Cref{sec:model}).

Our second contribution is the development of three \dbposeprot{} protocols with
formal security proofs. The first protocol (\Cref{sec:construction-random}) is
based on a straightforward idea, already used in the first secure erasure
protocol proposed by Perito and Tsudik~\cite{Perito2010}: the verifier sends a
random sequence of bits to the prover, who should store it in its memory; then
the verifier queries random memory blocks to check that they are stored. We
adapt this idea into a \dbposeprot{} protocol, where we query one random block
per round, checking that the reply is correct and is received within the
specified time bound. Our main contribution in this case is a formal security
proof without resorting to the device isolation assumption. Although the
protocol is simple, the proof is not trivial, and we introduce general proof
methods in \Cref{sec:prelim} that help the analysis by allowing to focus on a
single challenge-response round.  

Sending a sequence of bits as large as the prover's memory may be undesirable in
some contexts, e.g.\ when the communication channels have limited capacity.
Instead, in our second protocol (\Cref{sec:protocol-graph}), the verifier sends
a small seed to the prover, which uses it to compute the labelling of a suitably
chosen graph and store the labels of some nodes. In the interactive phase, we
run several fast rounds to verify that those labels are stored. We identify a
depth-robustness property for the graph that ensures the security of our
proposed protocol: we show that a cheating prover needs to spend significant
time to recompute any missing labels, so it will be caught by our time
measurement. Our third protocol (\Cref{sec:graph:lightweight}) is also based on
graphs, but trades-off security for performance.

Although graph-labelling techniques have been used in related works,
e.g.\ for memory hardness~\cite{Alwen2016,Alwen2017}, proofs of
space~\cite{Dziembowski2015,Ateniese2014,Pietrzak2018}, and classic secure
erasure~\cite{Dziembowski2011,Karvelas2014}, 
none of them can directly be applied in our context.
This is because we
want to erase the full memory of the device, and we consider a stronger attacker
model. In all the previous protocols, the gap between the memory filled by the
prover and the full memory of the device is too high, either because they
consider a different case study where this is not a concern (e.g.\ for proofs of
space), or because the class of graphs is not strong enough (e.g.\ the
proof in~\cite{Karvelas2014} can only erase \(\frac{1}{32}\) of the full memory
because of this reason). Furthermore, the protocols that directly target memory
erasure, i.e.~\cite{Dziembowski2011,Karvelas2014}, are also not adequate for
our attacker model; they either assume protected memory~\cite{Dziembowski2011}
or the device isolation assumption~\cite{Karvelas2014}. Indeed,
~\cite{Dziembowski2011} assumes at least part of a secret key not to be leaked,
while in our model we allow full compromise of the memory content. In the protocol
from~\cite{Karvelas2014}, the prover needs to compute the graph labelling on the
fly after being challenged to return a label, which is unsuitable for distance 
bounding.

We therefore propose a new class of graphs that allows our protocol to proceed
in two phases, first compute and store the labels, and then reply to several
rounds of challenges, thus proving the labels are stored. We prove that the
resulting protocol can securely erase all but a small proportion of the
prover's memory. Furthermore, our protocol provides its security guarantees in
the presence of a distant attacker, without resorting to device isolation or
protected memory assumptions. The undirected graph we propose
(\Cref{sec:graphs}) satisfies a classic property of depth-robustness and
can be computed in-place, i.e.\ by using only a constant memory overhead for the
prover. This tightness constraint for the graph is especially important if we
want to erase the full memory of the device. Our graph is tighter in this
respect than other depth-robust graphs used recently in the
literature~\cite{Karvelas2014,Dziembowski2015,Alwen2016,Pietrzak2018}, and we
believe this makes the graph of interest in other domains too, such as proof of
space and memory hardness.

\section{Related Work}\label{sec:related}

We review memory erasure
and attestation protocols for resource-constrained devices that don't use 
secure hardware. See~\cite{Kuang2022} 
for a review of protocols that use it. We will see that the main limitation of 
current software-based protocols, as highlighted in a
recent survey~\cite{Ankergaard2021}, is the assumption that there is a secure
communication channel between verifier and prover, called the \emph{device
	isolation assumption}. 

\noindent \emph{Software-based memory isolation.}
Some protocols use software-based memory isolation techniques that continuously monitors the programs running on the device. In this
class, SPEED\cite{Ammar2018} and SIMPLE\cite{Ammar2020a} perform memory
erasure and attestation, respectively. These protocols could, in theory, achieve
our security goals. In practice, they add a significant overhead 
to any program running on the platform, as the software TPM continuously
monitors the platform to ensure memory isolation at all times. Since memory
erasure can start from any state of the device, we do not need to impose
constraints on programs running on it. SPEED uses distance-bounding to ensure
that only a nearby verifier can start the memory erasure procedure. We use
distance-bounding in the opposite direction, where the verifier is the
challenger aiming to ensure the prover is near. 

\noindent \emph{Time-optimal memory erasure routine.}
SWATT is an attestation protocol~\cite{Seshadri2004} based on computing a
checksum function over the memory of the prover within a prescribed amount of
time. The idea is that, had the prover's memory been compromised, the prover 
would
take noticeably longer to generate the correct attestation proof. Several 
attacks on SWATT
were shown in~\cite{Castelluccia2009}, the main problem being that the checksum
function and the timing constraints are rather ad hoc. This problem was later
treated in~\cite{Armknecht2013}, which proposes a formal model of
software-based attestation and a generic protocol similar to SWATT, but, this 
time with a 
formal security proof. 
The problem here is
that, if the local device is not isolated, the attacker could compute the
response on a separate device and return it in time. 
Another problem is that the global running time of the erasure
procedure is too high in order to detect small changes to the memory. The
attacker could use the available time to optimize procedures and recompute the
data that is missing. That is why~\cite{Armknecht2013} suggests the idea of a
multi-round protocol, requiring shorter computation time in each round. This is
an idea we develop further in this paper as part of a distance-bounding 
mechanism.

The first protocol for secure memory erasure based on cryptographic techniques
is introduced in~\cite{Perito2010}. The paper proposes a protocol based on
computing a (time-optimal) MAC over randomly generated data that fills the memory of the prover,
along with an informal adversarial model and argument of security. This protocol
uses a high bandwidth to transmit the random data and cannot prevent the
computation to be delegated to an external attacker without isolating the
device, like in~\cite{Seshadri2004,Armknecht2013}. 
 
\noindent \emph{Graph pebbling games.} A recent series of papers use graph
pebbling techniques to achieve time-memory trade-offs, e.g.\ in proofs of secure
erasure~\cite{Dziembowski2011,Karvelas2014}, proofs of space, or memory-hardness
results. The works aiming for memory erasure cannot be applied in our context, 
as discussed in the introduction. We should mention, in addition, 
that~\cite{Karvelas2014} also proposes an erasure protocol that is not based on 
graphs, but on hard-to-invert hash functions. In addition to assuming device 
isolation, the security of this protocol is proved assuming that, in the memory
challenge phase, the adversary cannot query a hash function. We are not 
aware how
this could be enforced in practice.

\noindent \emph{Distance-bounding techniques.}
A \poseprot{} protocol that relies on distance bounding in a spirit similar to ours was
proposed in~\cite{Trujillo-Rasua2019}. It uses a cyclic tree automaton that 
occupies the full prover's memory. 
To obtain the erasure proof, the verifier asks the prover to transition into a 
new state of the automaton, based on random input chosen by the verifier, and 
to transmit the label of the new state. This round of exchanges is timed by the 
verifier to bound its distance to the prover. 
However, we observe that the fact that the response comes fast in each round is 
not
sufficient to counter pre-computation attacks or an adaptive attacker. For
example, depending on previous challenges, the adversary can 
get rid of states in the automaton that are unreachable by future verifier's 
challenges.
This attack was overlooked because the protocol was analysed within 
a symbolic 
model, rather than a computational model as we do in this paper. 

As summarized in \Cref{tab:prots}, there does not exist in the literature a 
software-based memory 
erasure (or attestation) protocol whose security can be formally proven without 
relying on 
the device isolation assumption. \Cref{sec:construction-random} and 
\Cref{sec:protocol-graph} in this article introduce the first three
protocols of this type. 
\protocolscomparison{}

\noindent\emph{Formal security definitions.} We briefly review existing security
definitions related to our problem to motivate the need for a new definition in
the context of the distant attacker. The definition of memory attestation
in~\cite{Armknecht2013} considers an experiment where a computationally
unbounded adversary \( \A \) produces a computationally bounded prover \( \P \)
which represents the state of the device to be attested. The protocol is run
between a verifier \( \V \) and \( \P \), where \( \V \) aims to ensure that \(
\P \) is in a desired state; it is deemed secure if a cheating prover would be
caught with high probability. This model has built-in the device isolation
requirement, since the local prover \( \P \) does not interact with \( \A \)
throughout the protocol. Since we are interested in memory erasure and not
attestation, in addition to strengthening the attacker model, we may also relax
the requirement on the state of the prover, requiring only that it should utilize
enough memory at some point in the protocol. The definition of secure memory
erasure in~\cite{Karvelas2014} makes this relaxation on the prover state.
However, as in~\cite{Armknecht2013}, the prover \( \P \) is assumed
computationally bounded, which amounts to the device isolation assumption. In
order to lift this assumption, it is not sufficient to simply make \( \P \)
computationally unbounded, but we need to consider two adversaries \( \A \) and
\( \P \) that can communicate throughout the protocol, and make \( \A \)'s
memory unbounded. 

A symbolic Dolev-Yao style model for secure erasure is proposed
in~\cite{Trujillo-Rasua2019}. Their main contribution is a formal proof of the
necessity of the distant attacker assumption in some situations and a
construction of a protocol that can be proven secure symbolically. 
However,
the symbolic model in~\cite{Trujillo-Rasua2019} cannot be used to quantify the
size of the adversarial state, or the adversary's probability of success. Hence,
it cannot be used to faithfully analyse existing \poseprot{} protocols, which
use information theoretical constructions rather than symbolic ones.

\section{A Formal Model for \dbposeprot{} protocols}\label{sec:model}

In this section, we introduce a class of memory erasure protocols that aim to 
resist collusion between a corrupt prover and a distant attacker, and a formal model that allows to prove their security. We call this class of protocols
\emph{Proofs of Secure Erasure with Distance Bounding} (\dbposeprot). We denote by \([n]\) the set \( \set{1,\ldots,n} \), by \(a \sample A\) the
uniformly random sampling of \(a\) from the set \(A\) and by \(o \gets
F(x,\ldots)\) the output of an algorithm \(F\) running on given inputs. For two
bitstrings \(a, b \in \set{0, 1}^*\), we denote by \(a \concat b\) their
concatenation. An adversary is a probabilistic Turing machine, generally denoted
by \( \A \). It may be endowed with oracles for restricted access to some
resources. In \Cref{sec:construction}, we will use the random oracle methodology
to model and reason about the access of the adversary to a hash
function~\cite{Bellare1993}. An adversary with access to a (possibly empty) list
of oracles \(\O \) is denoted by \(\A^\O \). 

\subsection{Proof of Secure Erasure with Distance Bounding}

The key feature of a \dbposeprot{} protocol is the use of a distance-bounding
mechanism over several challenge-response rounds to prevent the prover from
outsourcing the erasure proof to the distant attacker. \Cref{formal-pose}
depicts the generic scheme that we consider for a \dbposeprot{} protocol. We
consider the following protocol parameters as global constants: block size
(\(w\)), size of memory in blocks (\(m\)), number of rounds (\(r\)) and time
threshold (\(\Delta \)). The size of the memory to be erased is therefore \(m
\cdot w\). The time threshold is chosen at deployment so that the distant
attacker assumption holds within round-trip-time bounded by \( \Delta \). In a
setup phase which happens once before the protocol sessions, the verifier is
instantiated with certain additional parameters necessary to run the protocol:
the space used to draw initialization parameters for each session (\( \I \),
which could be a set of bitstrings or hash functions) and some
auxiliary data (\( \rho \)) common for all sessions.
Each \dbposeprot{} session then runs in three phases: 

\noindent \emph{Initialization phase:} the local device (playing the role of the
prover) has to perform a prescribed sequence of computation steps and store its
result \( \sigma \) in its internal memory. Such a value is meant to fill the 
prover's memory, leaving no room for data previously stored in the device. 

\noindent \emph{Interactive phase:} 
verifier and prover interact over a number
of challenge-response rounds; the verifier measures the round-trip-time of 
those exchanges and
stores all the challenges and their
corresponding responses. 

\noindent \emph{Verification phase:} the verifier accepts the proof if all
challenge-response pairs from the interactive phase satisfy a prescribed
verification test, and if the round-trip-times are below the time threshold 
\( \Delta \).

\formalposeprotocol{}

Notice that there are no identities exchanged during the protocol, nor
pre-shared cryptographic material. Like in existing software-based memory
attestation and erasure protocols, we assume the existence of an out-of-the-band
authentication channel, such as visual inspection, that allows the verifier to
identify the prover. In \Cref{def:memory-challenge}, we formally specify
\dbposeprot{} protocols as a set of algorithms to be executed by the prover and
the verifier. We do not specify the way in which messages are exchanged or the
time verification step. These are handled by the security definition as
described below, considering a Dolev-Yao model with a distant attacker that
cannot act within the challenge-response round.

\begin{definition}\label{def:memory-challenge} A \emph{proof of secure erasure
with distance bounding} (\dbposeprot) protocol is defined by a tuple of
algorithms \(\sequence{\Setup, \Precmp, \Chal, \Resp, \Vrfy}\) and parameters 
\( \sequence{m , w, r, \Delta} \) as illustrated in \Cref{formal-pose}. We have:
\begin{itemize}
  \item \((\rho,\I) \gets \Setup(m, w)\): computes some data \( \rho 
  \)
  necessary to run the protocol and a parameter space \(\I \) to be used for
  instantiating protocol sessions;
  \item \(\Upsilon \sample \I \): sample data uniformly from \(\I \) for each
  protocol session; some protocols may instantiate a hash function at this
  step;
  \item \(\sigma \gets \Precmp(\rho, \Upsilon)\): computes a value of size \( m
  \cdot w \), to be stored in memory;
  \item \(x \gets \Chal(\rho) \): generates a uniformly random challenge; 
  \item \(y \gets \Resp(\rho, \sigma, x) \): computes the response to the
  challenge. This should be a very-lightweight operation consistent with the 
  design principles of distance bounding, such as a lookup operation. 
  \item \( \Vrfy(\rho,\Upsilon, x, y)\): determines if \( y \) is the correct
  response to challenge \( x \) 
\end{itemize}

\end{definition}

\begin{example}
	As a running example, consider the simple idea (similar to Perito and
	Tzudik's protocol~\cite{Perito2010}) of filling the memory of the device
	with random data, then challenging it to return randomly chosen blocks of
	that data during the interactive phase. We call this the \emph{unconditional
	\dbposeprot{} protocol}, as its security proof in
	\Cref{sec:construction-random} does not rely on cryptographic assumptions.
	For such a protocol, the parameter space \(\I \) is set to \(\bin^{m\cdot w}
	\). This means that every protocol session will start with a random sequence
	of length \(\bin^{m\cdot w} \), which is equal to the memory size of the
	prover. The complete specification of this protocol is as follows. 
\begin{definition}[The unconditional \dbposeprot{} 
protocol]\label{def:unconditional}
	\ 
	\begin{itemize}
		\item \( \Setup(m, w)\): return \(\rho = \emptyset,~\I = \bin^{m\cdot w} 
		\)
		\item \( \psi \sample \bin^{m\cdot w} \) 
		\item \(\Precmp(\rho, \psi )\): parse \(\psi \text{ as } t_1 \concat
		\ldots \concat t_m\) with \( t_i\in\bin^w \), and return \( \sigma = t_1
		\concat \ldots \concat t_m \)
		\item \(\Chal(\rho) \): return \( x \sample [m]\)
		\item \(\Resp(\rho, \sigma, x) \): return \(t_x\), which was stored
		in \( \sigma \)
		\item \( \Vrfy(\rho, \psi, x, y)\): return true if and only if \( y = t_x \)
	\end{itemize}	
\end{definition}
\end{example}

Note that the erasure procedure itself running on the local device cannot be
overwritten by \(\sigma \). 
Hence, in practice, the device should allocate memory to store and execute 
the erasure 
procedure, and the goal should be for this procedure to introduce minimal memory overhead. This is a necessary condition in any memory 
erasure protocol. The erasure procedure for the unconditionally secure
protocol consists in simply storing and fetching blocks from the
memory. We expect its memory overhead to be minimal. For the graph-based protocol that we introduce in \Cref{sec:protocol-graph}, the in-place property we devise for the graph in \Cref{sec:graphs} is aimed at keeping this overhead as small as possible.

\subsection{Formalizing Secure Erasure Against Distant Attackers}\label{sec:security} 

\securityexperimentmsc{}

To define secure erasure, we formally split the adversary in two: we use \( \A_0 \) to denote the distant attacker, and \( \A_1 \) to denote the local
(possibly corrupt) device. Like in a Dolev-Yao model, our adversary \(\A=( \A_0
, \A_1) \) is in full control of the network and can corrupt agents.
It can eavesdrop, inject and modify messages sent to the network. One
limitation for the pair of attackers is given by the physical constraints of the
communication medium, which are leveraged by the distance-bounding mechanism,
and the assumption that \( \A_0 \) is distant, i.e.\ sufficiently far from the device. In our security definition we
abstract away from the distance-bounding check by not letting \( \A_0 \) act
between the sending of the challenge and the receipt of the corresponding
response in each round. The intuition of this abstraction is illustrated in
\Cref{fig:secexp:msc}: because \(\A_0\) is far and the function \(\Resp \)
should be computed fast, \(\A_0\) does not have time to respond to the
verifier's challenge in time. 
We do allow \( \A_0 \), before each round of the fast phase, to precompute some
state \( \sigma_i \) to be used by \( \A_1 \) during the \(i\)th round of the
fast phase. Assuming that \( \sigma_i \) is computed by \( \A_0 \) is without
loss of generality, since \( \A_0 \) is unbounded and has at least as much
knowledge as \(\A_1\), which forwards all information to \( \A_0 \). Therefore,
we assume that right before the challenge \(x_i\), the attacker's available
memory on the device is filled by \(\sigma_i\) and this is the only information
that \( \A_1 \) can use to compute the response in the \(i\)th round of the fast
phase.

\securityexperiment{}

\Cref{fig:securityexperiment} formalizes the environment described above in the
form of a security experiment. As \(\A_1\) is acting within the fast phase, we
may bound its number of computation steps. This is especially interesting for
protocols that may rely on computational security assumptions. The only
computational assumption that we will use in this paper is the random oracle
assumption for two of our three protocols. Therefore, in the security definition,
if \( \A_1 \) has access to a hash function, we will restrict its use through an
oracle and will count the number of calls \(\A_1\) makes to it. This restriction
is formalized by the parameter \( q \) in \Cref{def:resources} that bounds
resources for an adversary. A second parameter in this definition is \( M \),
which bounds the size of the memory used by \( \A \) on the device throughout
the protocol, i.e.\ the maximum value of \( \sigma_i \) in the security
experiment. We consider an attacker successful if it passes the protocol while
not erasing a significant proportion of the \( m\cdot w \) bits of memory on the
device. Assume the portion of memory that is needed by the adversary to store
malware or any other information is \( y \). If the adversary needs to use more
than \( m\cdot w- y \) bits to successfully execute the protocol, then the
device becomes ``clean'', as the memory required by the attacker is erased.
Therefore, the goal of the attacker is to use at most \(M = m\cdot w - y \) bits
of space during the protocol execution.

\begin{definition}\label{def:resources} An adversary \( \A=(\A_0,\A_1) \)
	against the memory challenge game is called \emph{\( (M,q) \)-bounded} if and only if 
	in
	any execution of \( \mathsf{Exp}^{m,r,w}_{\A_0,\A_1} \) and any round \( i 
	\) we
	have that \(|\sigma_i| \leq M\) and \(\A_1\) makes at most \(q\) queries to 
	\(
	\O \). 
\end{definition}

To evaluate the security of a given \dbposeprot{} protocol, we will consider the
class of \( (M,q)\)-bounded adversaries and determine the probability of any
adversary from this class to win the security game.

\begin{definition}[\dbposeprot{} security]\label{def:security} Assume some 
fixed parameters \( (m,r,w) \) for
	the experiment from \Cref{fig:securityexperiment}. An adversary 
	\((\A_0,\A_1)\) 
	\emph{wins
		the memory-challenge game} with probability \( \zeta \) if and only if 
	\(
	\prob{\mathsf{Exp}^{m,r,w}_{\A_0,\A_1}=\text{true}}=\zeta 
	\),
	where probability is taken over the randomness used by the experiment. 
\end{definition}

There are similarities 
between our security definition presented above and the definition of
secure proof of space~\cite{Dziembowski2015, Pietrzak2018}, the main
difference being that in proofs of space \( \A_1 \) is isolated from
\( \A_0 \) in the interactive challenge phase. We discuss more details
about this relation and our modelling choices in the appendix.

\section{Initial results}\label{sec:prelim}

Before specifying and analysing our three \dbposeprot{} protocols, we provide
useful initial results on the security of \dbposeprot{} protocols in
general. Concretely, we show that \dbposeprot{} security increases
proportionally to the number of rounds. This will allow us to simplify the
security analysis by proving a level of security for the protocol executed in a
single round, and then generically deriving the corresponding security
guarantees over multiple rounds. Throughout all our proofs we consider
deterministic adversaries \( \A \). This is without loss of generality: since our
security definition upper-bounds the success probability of \( \A \), we can always
consider that \( \A \) is initialized with its best random tape. In this setting, by
fixing its best-case random-tape, we can consider \( \A \) deterministic. This
idea is well known and has been applied elsewhere~\cite{Unruh2007}. Note that we
still have randomness left in the security experiment, coming from probabilistic
choices in honest algorithms and the random oracle.

Let \(\Upsilon \) be an element from the initialization space \(\I \) of a
\dbposeprot{} protocol. For an adversary \(\A=(\A_0,\A_1)\), denote by
\(\probsubunder{\Upsilon \sample \I}{\A^r}\) the probability that \( \A \) wins
the memory game with \(r\) rounds. For a fixed \(\Upsilon \), we denote the
corresponding winning conditional probability by \(\condprob{\A^r}{\Upsilon}\).
We say that an adversary \( \A=(\A_0,\A_1) \) playing the memory challenge game
with fixed \( \Upsilon\in \I \) is \emph{uniform} if, for any sequence of
challenges \( x_1,\ldots,x_r \), \( \A_0 \) returns the same state \(\sigma_i\)
in each round, i.e. 
\( \sigma_1=\cdots =\sigma_r\). The following lemma allows
us to focus on uniform adversaries when proving the security of a \dbposeprot{}
protocol. The main idea of the proof is that, since the challenge in each round
is chosen independently, the best that \( \A_0 \) can do in any round is to
choose a state \( \sigma_\mu \) that maximizes the success probability of \(
\A_1 \) for a random challenge. The proof of this lemma is in the appendix. The
same holds for all results we state without proof in the rest of this work.

\begin{restatable}{lemma}{lemmauniform}\label{lemma:uniform} For any \( (M,q)
\)-bounded adversary \( \A \) against the \dbposeprot{} security experiment,
there is a uniform \( (M,q) \)-bounded adversary \(\bar{\A}\) that wins the
experiment with at least the same probability: \( \probsubunder{\Upsilon
\sample \I}{\A^r} \leq \probsubunder{\Upsilon \sample \I}{\bar{\A}^r} \).
\end{restatable}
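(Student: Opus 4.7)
The plan is to exploit that the $r$ challenges are drawn independently from $\Chal(\rho)$ and reduce the analysis to a single round. The intuition, stated informally in the paragraph preceding the lemma, is that an optimal $\A_0$ cannot gain from observing previous challenges, because future challenges are independent of past ones. First I would fix $\Upsilon \in \I$ (and, for protocols that use a hash function, also the random-oracle realization, averaging over both at the end), and, for any state $\sigma$ of size at most $M$, define the single-round success probability
\[
p(\sigma) = \Pr\nolimits_{x \sample \Chal(\rho)}\!\left[\Vrfy(\rho, \Upsilon, x, \A_1^{\O}(\rho, \sigma, x)) = \true\right].
\]
Let $\sigma^{*}$ be a state of size at most $M$ that maximises $p(\sigma)$, and set $p^{*} = p(\sigma^{*})$.

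Next I would bound $\A$'s conditional success probability round by round. Because $\A$ is deterministic (we fix its best random tape, as remarked at the start of \Cref{sec:prelim}), the state $\sigma_i$ output by $\A_0$ in round $i$ is a deterministic function of $\Upsilon$ and of $x_1,\ldots,x_{i-1}$. Conditional on that history, $x_i$ is a fresh independent draw, so the conditional probability of winning round $i$ is exactly $p(\sigma_i) \leq p^{*}$. Telescoping over the $r$ rounds then yields $\condprob{\A^r}{\Upsilon} \leq (p^{*})^{r}$. I would then build $\bar{\A}$ as the uniform adversary in which $\bar{\A}_0(\Upsilon)$ outputs $\sigma^{*}$ in every round and $\bar{\A}_1 = \A_1$. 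This is feasible because $\A_0$ is computationally unbounded; $|\sigma^{*}| \leq M$ by construction and the oracle budget is inherited from $\A_1$, so $\bar{\A}$ is still $(M,q)$-bounded and trivially uniform. By independence of the $r$ challenges, $\condprob{\bar{\A}^r}{\Upsilon} = (p^{*})^{r}$, and combining the two bounds and averaging over $\Upsilon \sample \I$ gives the lemma.

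The main subtlety I expect is justifying that $\bar{\A}_0$ really can commit to the optimal $\sigma^{*}$ within the model while remaining $(M,q)$-bounded, especially when the optimum depends on the hash-function realisation. Here I would lean on the explicit remark in \Cref{sec:security} that $\A_0$ is unbounded and has at least as much knowledge as $\A_1$, since $\A_1$ relays its view (including oracle answers) back to $\A_0$ between rounds; this lets $\bar{\A}_0$ select $\sigma^{*}$ without incurring any oracle queries beyond the $q$ already charged to $\A_1$, so the resource-bound preservation comes for free.
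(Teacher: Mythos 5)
Your proposal is correct and takes essentially the same approach as the paper: fix $\Upsilon$, define the single-round success probability of $\A_1$ as a function of the pre-challenge state, pick a maximizing state, build a uniform adversary that always outputs it, and conclude by the round-by-round independence of the challenges. The only cosmetic difference is that the paper restricts the maximization to the set $\Sigma$ of states that $\A_0$ actually produces (which suffices and keeps $\bar{\A}_0$'s output inside $\A_0$'s range), whereas you maximize over all states of size at most $M$; both give a valid uniform $(M,q)$-bounded $\bar{\A}$, and your "telescoping" step is the same computation the paper spells out as an induction over rounds.
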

\begin{proof}
    It is sufficient to prove that, for any fixed \( \Upsilon \in \I \), we have
    \( \condprob{\A^r}{\Upsilon} \leq \condprob{\bar{\A}^r}{\Upsilon}\). In this
    case, since \( \A=(\A_0,\A_1) \) is deterministic and \( \Upsilon \) is fixed,
    for every \( i \) the state\( \sigma_{i+1} \) returned by \( \A_0 \) is
    uniquely determined by the sequence of challenges \( x_1,\ldots,x_i \) in the
    experiment \( \mathsf{Exp}^{m,r,w}_{\A_0,\A_1} \). Let \( \Sigma \) be the set
    of all possible states \( \sigma_i \) that can be output in any round by \( \A_0
    \) over all possible sequences of challenges. We denote by \(
    \condprob{\A_1}{\Upsilon, \sigma} \) the probability that \( \A_1 \) answers a
    single challenge correctly when given the state \( \sigma \) as input: 
    \[
    \condprob{\A_1}{\Upsilon, \sigma} = \condprob{\Vrfy(\rho,\Upsilon, x,y)}{\substack{x\gets \Chal(\rho)\\ y \gets \A_1^{\O_h}(1^w,\rho,\sigma,x)} }
    \]
    where the probability is taken over the randomness used by the challenge
    generation algorithm. We define \( \sigma_\mu \) to be the state in \( \Sigma \)
    for which the probability of winning given a random challenge is maximal
    \(\sigma_\mu=\argmax_{\sigma\in \Sigma} \condprob{\A_1}{\Upsilon, \sigma}\).
    Let \(\bar{\A}=(\bar{\A_0},\A_1) \), where \( \bar{\A_0} \) always returns \(
    \sigma_\mu \), independent of the set of challenges it obtains as input. Note
    that \( \bar{\A} \) is \( (M_0,q) \)-bounded, since the state returned by \(
    \bar{\A_0} \) is among the possible states returned by \( \A_0 \), and the
    adversary \( \A_1 \) is the same. We will now show that \( \bar{\A} \) achieves
    at least the same probability of success as \( \A \). If \(\bar{x}_k\) is a
    sequence of challenges \(\set{x_1, x_2, \ldots, x_k}\)
    we denote by: 
    \begin{itemize}
    \item \( \condprob{\A^k}{\Upsilon, \bar{x}_k} \) the probability that the
    adversary \( \A \) gets a correct answer in the first \( k \) rounds given that
    the first \( k \) challenges are \( \bar{x}_k \). As \( \A \) is 
    deterministic, this
    probability is either \(0\) or \(1\).
    \item \( \condprob{\A^t(\bar{x}_k)}{\Upsilon, \bar{x}_k} \) the probability
    that the adversary \( \A \) gets a correct answer in \( t \) successive rounds
    starting from round \( k+1 \) given that the challenges in the first \( k \)
    rounds are \( \bar{x}_k \). 
    \item \( \condprob{\bar{x}_k}{\Upsilon} \) the probability, taken over the
    randomness used by the challenge algorithm, that the first \( k \) challenges
    are \( \bar{x}_k\).
    \end{itemize}
    For a vector of challenges \( \bar{x}_k\) we have
    \(\condprob{\A^1(\bar{x}_k)}{\Upsilon,\bar{x}_k} =
    \condprob{\A_1}{\Upsilon,\sigma, \bar{x}_k} \) where \(\sigma \gets
    \A_0(\bar{x}_k)\) and:
    \[
     \condprob{\A_1}{\Upsilon,\sigma, \bar{x}_k}= \condprob{\A_1}{\Upsilon,\sigma}\leq \condprob{\A_1}{\Upsilon, \sigma_\mu}
    \] where the equality comes from the independence between \(A_1\) and 
    \(\bar{x}_k\) given \(\Upsilon,\sigma \); the inequality follows
    by definition of \( \sigma_\mu \). Using the inequality above, we prove by
    induction that, for any \( k\geq 0 \), we have \(
    \condprob{\A^{k+1}}{\Upsilon}\leq \condprob{\bar{\A}^{k+1}}{\Upsilon} \). 
    The case \( k=0 \) is trivial. For \( k>1 \), we obtain:
    \[
    \begin{split}
    \condprob{\A^{k + 1}}{\Upsilon} &
     = \sum_{\bar{x}_k}\condprob{\bar{x}_k}{\Upsilon} 
    \cdot \condprob{\A^k}{\Upsilon, \bar{x}_k} \cdot \condprob{\A^1(\bar{x}_k)}{\Upsilon, \bar{x}_k} \\
    & \leq \sum_{\bar{x}_k}\condprob{\bar{x}_k}{\Upsilon} 
    \cdot \condprob{\A^k}{\Upsilon, \bar{x}_k} \cdot \condprob{\A_1}{\Upsilon, \sigma_\mu}\\
     & = \condprob{\A^{k}}{\Upsilon} \cdot \condprob{\A_1}{\Upsilon, \sigma_\mu} \\
     & \leq \condprob{\bar{\A}^k}{\Upsilon} \cdot \condprob{\A_1}{\Upsilon, \sigma_\mu} 
    = \condprob{\bar{\A}^{k + 1}}{\Upsilon} \qedhere
    \end{split}
    \]
\end{proof}

Next, we show that there is a direct relationship between the winning
probability of an adversary in one round and the winning probability in multiple
rounds. The analysis is simplified by the fact that we can focus on uniform
adversaries, according to \Cref{lemma:uniform}. If the initialization parameters
are fixed, it follows immediately that running the experiment
for \( r \) rounds exponentially decreases the cheating ability of the
adversary. 

\begin{restatable}{lemma}{lemmauniformtwo}\label{lemma:uniform2} For any uniform
adversary \( \A \) and any \( \Upsilon\in \I \), we have \(
\condprob{\A^r}{\Upsilon} = \condprob{\A^1}{\Upsilon} ^r \). 
\end{restatable}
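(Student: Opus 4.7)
The plan is to leverage the uniformity hypothesis to reduce the $r$-round success probability to a product of identical one-round probabilities, using the independence of the per-round challenges generated by $\Chal(\rho)$.

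First, I would fix $\Upsilon \in \I$ and unpack what uniformity buys us. By the definition given just before the lemma, for any sequence of challenges $x_1, \ldots, x_r$, the uniform adversary $\A_0$ outputs the same state $\sigma_1 = \cdots = \sigma_r = \sigma$. Because we have reduced to deterministic adversaries, $\sigma$ is completely determined by $\Upsilon$ (and $\rho$, which is fixed by $\Setup$), and the response $y_i = \A_1(1^w, \rho, \sigma, x_i)$ depends only on $\sigma$ and $x_i$. Consequently, the indicator event $E_i$ that round $i$ succeeds, namely $\Vrfy(\rho, \Upsilon, x_i, y_i) = \true$, is a deterministic function of $x_i$ alone (with $\Upsilon, \sigma$ fixed).

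Next I would observe that in the experiment $\mathsf{Exp}^{m,r,w}_{\A_0,\A_1}$, the challenges $x_1, \ldots, x_r$ are drawn independently by successive calls to $\Chal(\rho)$. Hence the events $E_1, \ldots, E_r$ are mutually independent, and each has the same marginal probability, namely $\condprob{\A^1}{\Upsilon}$ (the one-round success probability against the same fixed $\Upsilon$ and $\sigma$). Multiplying,
\[
\condprob{\A^r}{\Upsilon} = \prob{E_1 \cap \cdots \cap E_r} = \prod_{i=1}^{r}\prob{E_i} = \condprob{\A^1}{\Upsilon}^r.
\]

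There is no real obstacle here; the only subtlety is confirming that uniformity plus determinism collapses any history-dependence in $\sigma_i$, so that the round experiments become i.i.d.\ trials in the randomness of $\Chal$. Once that observation is made explicit, the identity follows by the elementary product rule for independent events. I would present the argument as a short induction on $r$ for maximal clarity, paralleling the telescoping computation used in the proof of \Cref{lemma:uniform}, with the inductive step $\condprob{\A^{k+1}}{\Upsilon} = \condprob{\A^{k}}{\Upsilon} \cdot \condprob{\A^1}{\Upsilon}$ being immediate from the independence of $x_{k+1}$ from $x_1, \ldots, x_k$ and the invariance of $\sigma$.
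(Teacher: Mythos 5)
Your proposal is correct and is essentially the same argument as the paper's (one-line) proof: the paper writes \(\condprob{\A^r}{\Upsilon} = \condprob{\A^r}{\Upsilon, \sigma\gets\A_0} = \condprob{\A}{\Upsilon}^r\), which is exactly your observation that uniformity plus determinism fixes \(\sigma\) across rounds, making the per-round success events i.i.d.\ over the independent draws of \(\Chal\). You simply spell out the steps the paper compresses.
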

\begin{proof}
    \[
    \condprob{\A^{r}}{\Upsilon} = \condprob{\A^{r}}{\Upsilon, \sigma \gets \A_0} = 
    \condprob{\A}{\Upsilon}^r \qedhere
    \]
\end{proof}

When lifting this lemma to uniformly chosen parameters from \(\I \), it will be
necessary to account for the fact that the (one-round) adversary may be lucky on
a small proportion of those random choices. The following definition and
proposition tolerate this, by bounding the success probability of the
adversary only for a subset of good parameters, and showing the effect after \(
r \) rounds. 
\begin{definition} For a set \(\I_\mathsf{good} \subseteq \I \) and a value
 \(\zeta < 1\), we say that \( \A \)'s \emph{winning probability is bounded
 by \( \zeta \) within \(\I_\mathsf{good}\)} if and only if \( \forall \Upsilon \in
 \I_\mathsf{good} \colon \condprob{\A^1}{\Upsilon} \leq \zeta \). 
 \end{definition}

If the proportion of cases \( \I \setminus \I_\mathsf{good} \) in which the adversary is lucky is negligible, then for
a large enough number of rounds the success probability of the adversary is also
negligible, as shown by the next proposition.

\begin{restatable}{proposition}{oneroundtworrounds}\label{oneround2r-rounds} Given \( \zeta < 1 \), \(\I_\mathsf{good}\subseteq \I \) and a uniform
adversary \( \A \) whose winning probability is bounded to \( \zeta \) within
\(\I_\mathsf{good}\), we have
\(
 \probsubunder{\Upsilon \sample \I}{\A^r} \leq \zeta^r + \frac{|\I \setminus \I_\mathsf{good}|}{|\I|}
\).
\end{restatable}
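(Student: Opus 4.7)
The plan is to decompose the probability $\probsubunder{\Upsilon \sample \I}{\A^r}$ via a case split on whether the sampled initialization parameter falls inside $\I_\mathsf{good}$ or not. Since $\Upsilon$ is drawn uniformly from $\I$, the marginal probability of the event $\Upsilon \in \I \setminus \I_\mathsf{good}$ is exactly $\frac{|\I \setminus \I_\mathsf{good}|}{|\I|}$, and this will directly produce the additive slack term in the bound.

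First I would write
\[
\probsubunder{\Upsilon \sample \I}{\A^r} \;=\; \frac{1}{|\I|} \sum_{\Upsilon \in \I_\mathsf{good}} \condprob{\A^r}{\Upsilon} \;+\; \frac{1}{|\I|}\sum_{\Upsilon \in \I \setminus \I_\mathsf{good}} \condprob{\A^r}{\Upsilon}.
\]
For the second sum, I bound each conditional probability trivially by $1$, contributing at most $\frac{|\I \setminus \I_\mathsf{good}|}{|\I|}$. For the first sum, I apply \Cref{lemma:uniform2} (which is available since $\A$ is uniform by hypothesis) to rewrite $\condprob{\A^r}{\Upsilon} = \condprob{\A^1}{\Upsilon}^r$, and then use the definition of ``winning probability bounded by $\zeta$ within $\I_\mathsf{good}$'' to conclude that each term is at most $\zeta^r$. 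Summing over $\Upsilon \in \I_\mathsf{good}$ yields a contribution of at most $\frac{|\I_\mathsf{good}|}{|\I|}\cdot \zeta^r \leq \zeta^r$.

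Combining the two bounds gives exactly $\zeta^r + \frac{|\I \setminus \I_\mathsf{good}|}{|\I|}$, as required. There is no real obstacle here: the proof is a straightforward application of the law of total probability together with the two preceding lemmas. The only mild point to double-check is that \Cref{lemma:uniform2} is applied on a per-$\Upsilon$ basis (so the uniformity of $\A$ is what permits the factorization for each fixed $\Upsilon \in \I_\mathsf{good}$) and that the crude bound $\condprob{\A^r}{\Upsilon} \leq 1$ on the complement is tight enough, which it is because the complement itself is assumed small.
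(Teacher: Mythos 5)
Your proof is correct and follows essentially the same approach as the paper: both rely on \Cref{lemma:uniform2} to factor $\condprob{\A^r}{\Upsilon}$ as $\condprob{\A^1}{\Upsilon}^r$, and both bound the contribution from $\I \setminus \I_\mathsf{good}$ trivially by $1$ per element. Your decomposition is slightly more direct (you split the sum over $\I$ into $\I_\mathsf{good}$ vs.\ its complement immediately, whereas the paper first splits on whether $p_\Upsilon > \zeta$ and then intersects with $\I_\mathsf{good}$), but the ingredients and the resulting bound are identical.
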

\begin{proof}
    From \Cref{lemma:uniform2}, we have:
    \[
    \probsubunder{\Upsilon \sample \I}{\A^r} 
    = \frac{1}{|\I|}\sum_{\Upsilon\in \I} \condprob{\A^r}{\Upsilon}
    = \frac{1}{|\I|}\sum_{\Upsilon\in \I} \condprob{\A}{\Upsilon} ^ r
    \]
    
    To reduce notation in what follows, we let \(p_\Upsilon = \condprob{\A
    }{\Upsilon} \) and omit \( \Upsilon\in \I \) when we sum over all \( \Upsilon \)
    in \( \I \). We have: 
    \[
    \begin{split}
    & \sum_{\Upsilon} \condprob{\A }{\Upsilon} ^ r = \sum_{\Upsilon} p_\Upsilon ^ r 
    = \sum_{\Upsilon, p_\Upsilon > \zeta} p_{\Upsilon}^r + \sum_{\Upsilon, p_\Upsilon \leq \zeta} p_{\Upsilon}^r\\ 
    & \leq \sum_ {\Upsilon,~p_\Upsilon > \zeta} 1 + \sum_{\Upsilon} \zeta^r
    \leq\sum_ {\substack{\Upsilon,~p_\Upsilon > \zeta \\ 
    \Upsilon \in \I_\mathsf{good}}} 1 
    + \sum_ {\substack{\Upsilon,~p_\Upsilon > \zeta \\ 
    \Upsilon \notin /I_\mathsf{good}}} 1 + \sum_{\Upsilon} \zeta^r \\
    & \leq 0+ |\I \setminus \I_\mathsf{good}| + |\I| \cdot \zeta^r
    \end{split}
    \]
    where for the last inequality we use that \( \A \)'s winning probability is
    bounded to \( \zeta \) within \( \I_\mathsf{good} \) to deduce that the first
    sum is empty.
    Combining the two results above, we deduce \( \probsubunder{\Upsilon \sample
    \I}{\A^r} \leq \zeta^r + \frac{|\I \setminus \I_\mathsf{good}|}{|\I|} \) as
    claimed.
\end{proof}

\section{The unconditional \dbposeprot{} protocol}\label{sec:construction-random} 
This section provides the security analysis for the unconditional \dbposeprot{}
protocol presented in \Cref{sec:model}, \Cref{def:unconditional}. Since this
protocol does not use any hash function, the bound on the number of oracle calls
by \(\A_1\) is not relevant. Such an adversary is \((M,\infty)\)-bounded. As a
consequence, there is no computational bound on the adversary \(\A_1\), but only
a memory bound. Recall that the value \( y = m\cdot w - M \) is the amount of
memory that the adversary cannot erase (where it may store malware).

\begin{restatable}{theorem}{boundoneroundrandom}\label{bound-one-round-random}
  Assume that the unconditional \dbposeprot{} protocol is instantiated with
  parameters \((m,1,w)\). 
  Let \( \A \) be any adversary with measure \((M,\infty)\). Then, there
  exists a set \( \I^1_{\mathsf{good}}\subseteq \I \) such that \( \A \)'s
  winning probability is bounded to \( 1 - m^{-1} \) within
  \(\I^1_{\mathsf{good}}\) and \(\abs{\I^1_{\mathsf{good}}} \geq 2^{m\cdot w}
  \cdot (1 - 2 ^{- y}) \). Furthermore, if \(y \geq m + w\), then there
  exists a set \( \I^2_{\mathsf{good}}\subseteq \I \) such that \( \A \)'s
  winning probability is bounded to \( 1 - \left\lceil \frac{y - m - w + 1}{w}
  \right\rceil m^{-1} \) within \(\I^2_{\mathsf{good}}\) and
  \(\abs{\I^2_{\mathsf{good}}} \geq 2^{m\cdot w} \cdot (1 - (m^2 + m)\cdot 2
  ^ {-w})\).
 \end{restatable}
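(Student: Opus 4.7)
The plan is to bound the number of initialization strings $\psi \in \bin^{m\cdot w}$ on which the adversary wins with high probability, and to show that both bad sets are small by direct counting. Since $r=1$ and, by the convention introduced in \Cref{sec:prelim}, the adversary $\A = (\A_0,\A_1)$ can be taken deterministic, for each fixed $\psi$ its behaviour is entirely described by the state $\sigma = \A_0(\psi)$ of size at most $M = m\cdot w - y$ together with the deterministic guesses $a_i(\sigma) = \A_1(1^w,\rho,\sigma,i) \in \bin^w$ for $i \in [m]$. The single-round winning probability for a fixed $\psi$ is then exactly $|\{i : a_i(\sigma) = \psi_i\}|/m$, so the task reduces to counting how many initialization strings are well predicted by the adversary.

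For part~1, the bad set $\I^1_{\mathsf{bad}} = \I \setminus \I^1_{\mathsf{good}}$ consists of those $\psi$ on which all $m$ coordinates are guessed correctly. In that case $\psi_i = a_i(\sigma)$ for every $i$, so $\psi$ is determined by $\sigma$ alone. Since $\sigma$ takes at most $2^M = 2^{m\cdot w - y}$ values, $|\I^1_{\mathsf{bad}}| \leq 2^{m\cdot w - y}$, yielding $|\I^1_{\mathsf{good}}| \geq 2^{m\cdot w}(1 - 2^{-y})$.

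For part~2, with $k = \lceil (y-m-w+1)/w \rceil$, the bad set $\I^2_{\mathsf{bad}}$ consists of those $\psi$ on which at most $k - 1$ coordinates are guessed incorrectly. I would encode each bad $\psi$ by the tuple $(\sigma, T, \psi|_T)$, where $T = \{i : a_i(\sigma) \neq \psi_i\}$ is the set of incorrect positions, with $|T| \leq k - 1$. This map is injective, since from $(\sigma, T, \psi|_T)$ one recovers $\psi_i = a_i(\sigma)$ for $i \notin T$ and reads $\psi_i$ directly from $\psi|_T$ for $i \in T$. Counting encodings yields
\[
|\I^2_{\mathsf{bad}}| \;\leq\; 2^M \cdot \sum_{j=0}^{k-1} \binom{m}{j}(2^w - 1)^j \;\leq\; 2^M \cdot 2^m \cdot 2^{w(k-1)}.
\]

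The main obstacle is the arithmetic with the ceiling in the definition of $k$. The direction I need is $k - 1 < (y - m - w + 1)/w$, which after multiplication by $w$ and using integrality of $w(k-1)$ gives $w(k-1) \leq y - m - w$. Substituting this back collapses the displayed bound to $2^{m\cdot w - w}$, so $|\I^2_{\mathsf{good}}| \geq 2^{m\cdot w}(1 - 2^{-w})$; since $m^2 + m \geq 1$, this is in turn at least $2^{m\cdot w}(1 - (m^2 + m) \cdot 2^{-w})$, matching the claimed bound.
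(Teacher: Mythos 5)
Your Part~1 is essentially the paper's argument: a $\psi$ on which the adversary is correct at every block is uniquely determined by its state $\sigma$, so at most $2^{M} = 2^{m\cdot w - y}$ such strings exist, which is exactly the paper's counting of the fibers $R_\sigma$ of $\A_0$.

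Your Part~2 is correct but takes a genuinely different and cleaner route. The paper's proof in \Cref{appendix:prot:random} works with Hamming-sphere cardinalities $S_c = \sum_{j=0}^{c}\binom{m}{j}(2^w-1)^j$: it first chooses $c$ maximal with $S_c \leq 2^y$, proves via two technical lemmas that this $c$ is at least $\lceil(y-m-w+1)/w\rceil$ and that $S_{c-1}$ is appropriately small relative to $\binom{m}{c}(2^w-1)^c$, and then combines these to bound $\lvert\I^2_{\mathsf{good}}\rvert$. That chain of inequalities needs the auxiliary hypothesis $2^w \geq m+3$ (stated in the proof but not in the theorem). You instead fix $k = \lceil(y-m-w+1)/w\rceil$ directly and injectively encode each bad $\psi$ (at most $k-1$ errors) as $(\sigma, T, \psi|_T)$, giving $\lvert\I^2_{\mathsf{bad}}\rvert \leq 2^M \sum_{j=0}^{k-1}\binom{m}{j}(2^w-1)^j \leq 2^M\cdot 2^m\cdot 2^{w(k-1)}$, and then the single ceiling manipulation $w(k-1) \leq y-m-w$ collapses this to $2^{m\cdot w - w}$. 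This is shorter, avoids the Hamming-sphere machinery and the hidden $2^w \geq m+3$ assumption, and in fact yields the strictly sharper bound $\lvert\I^2_{\mathsf{good}}\rvert \geq 2^{m\cdot w}(1 - 2^{-w})$, which you correctly note dominates the stated $2^{m\cdot w}(1-(m^2+m)2^{-w})$. The only thing you lean on without spelling out is that, for $\psi$ to be reconstructible from $(\sigma, T, \psi|_T)$, one needs each $i\in T$ to record a value in $\{0,1\}^w\setminus\{a_i(\sigma)\}$ (hence the $(2^w-1)^j$ factor, which you do use); and that the count of $\sigma$'s is taken as $2^M$, matching the paper's convention. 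Both are fine.
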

\begin{proof}
  Let \( \sigma \) be a bitstring of size \( M \). Let
  \(R_\sigma = \{\psi_1, \ldots \psi_k\} \) be the set of all bitstrings
  such that \( \A_0(1^w,\rho,\psi_i) = \sigma \). We will lower-bound the number
  of queries related to elements in \( R_\sigma \) for which \( \A_1 \) gives
  an incorrect answer. To prove the first result, we count for how many
  bitstrings there is at least one error. Call this set
  \(\I^1_{\mathsf{good}}\). Let \( \psi' \) be the concatenation of the blocks in
  \( \{\A_1(1^w, \rho, \sigma, 1),\ldots, \A_1(1^w, \rho, \sigma, m)\} \),
  i.e.\ a bitstring of size \(m\cdot w\). Since \( \psi' \) can be equal to at
  most one string in \( R_\sigma \), \(\A_1\) is wrong for at least one input
  \(q\in\set{1,\ldots,m}\) for at least \( \abs{R_\sigma} -1 \) of the
  bitstrings in \( R_\sigma \). Summing up for all possible sets \( R \) and
  for all possible \( \sigma \), we deduce that \( \A_1 \) is wrong for at
  least one query w.r.t.\ at least \(2^{m\cdot w} - 2 ^{m\cdot w - y}\)
  bitstrings (as there are at most \(2^{m\cdot w - y}\) different sets \(
  R_\sigma \)). 
  We obtain \(\abs{\I^1_{\mathsf{good}}} \geq 2^{m\cdot w} \cdot (1 -
  2 ^{- y}) \) and:
  \[
   \forall r \in \I^1_{\mathsf{good}} \colon \condprobsubunder{q \sample [n]}{\A_1(1^w, \rho, \sigma, q) 
    \text{ is correct}}{\sigma} \leq 1 - m^{-1} 
  \] where \(\sigma \gets \A_0(1^w,\rho,\psi)\).
  For a proof of the second result of the theorem, see \Cref{appendix:prot:random}.
 \end{proof}
 
 From \Cref{oneround2r-rounds} and \Cref{bound-one-round-random}, we obtain: 
 
 \begin{corollary}
  If we execute the unconditional \dbposeprot{} protocol for \( r \) rounds in presence of any \( (M,\infty) \)-bounded adversary,
  then \(\prob{\mathsf{Exp}^{m,r,w}_{\A_0,\A_1} = \true} \leq {(1 - m^{-1})}^r
  + 2^{M - m \cdot w}\). Furthermore, when \( M \leq m \cdot w - m - w \) the
  bound improves to \( {( 1 - \left\lceil \frac{m\cdot w - m - w - M + 1}{w}
  \right\rceil \cdot m^{-1})}^r + m \cdot (m + 1) \cdot 2 ^{-w}\).
 \end{corollary}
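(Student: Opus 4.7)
The plan is to combine Lemma~\ref{lemma:uniform}, Theorem~\ref{bound-one-round-random}, and Proposition~\ref{oneround2r-rounds} in sequence. Given any $(M,\infty)$-bounded adversary $\A$ against the $r$-round experiment, Lemma~\ref{lemma:uniform} provides a uniform $(M,\infty)$-bounded adversary $\bar{\A}$ whose winning probability dominates that of $\A$, so it suffices to upper-bound the winning probability of $\bar{\A}$. This initial reduction is what allows us to invoke Proposition~\ref{oneround2r-rounds}, whose hypotheses demand uniformity.

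Next, I would view $\bar{\A}$ as an adversary against the one-round experiment with parameters $(m,1,w)$ and apply Theorem~\ref{bound-one-round-random}. Setting $y = m\cdot w - M$, the theorem supplies a subset $\I^1_{\mathsf{good}} \subseteq \I$ on which the per-round winning probability is bounded by $\zeta = 1 - m^{-1}$, together with the size bound $|\I^1_{\mathsf{good}}| \geq 2^{m\cdot w}(1 - 2^{-y})$. Rearranging gives $|\I \setminus \I^1_{\mathsf{good}}|/|\I| \leq 2^{-y} = 2^{M - m\cdot w}$. Feeding $\zeta$ and this residual into Proposition~\ref{oneround2r-rounds} directly produces the first claimed inequality. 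For the improved bound, the hypothesis $M \leq m\cdot w - m - w$ is exactly the condition $y \geq m + w$ that activates the second half of Theorem~\ref{bound-one-round-random}; reapplying the same pipeline with the tighter threshold $\zeta = 1 - \lceil (y - m - w + 1)/w \rceil m^{-1}$, the set $\I^2_{\mathsf{good}}$, and its complementary density $m(m+1)\cdot 2^{-w}$, and then substituting $y = m\cdot w - M$ inside the ceiling, yields the second stated bound.

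The corollary is a clean composition of the three earlier results, so the argument carries no real obstacle. The only care needed is the bookkeeping to translate between the memory-overhead variable $y$ used inside Theorem~\ref{bound-one-round-random} and the adversarial memory bound $M$ that appears in the corollary's statement, and to check that the precondition $y \geq m+w$ for the tighter one-round bound coincides with the stated range $M \leq m\cdot w - m - w$.
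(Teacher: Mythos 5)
Your proposal is correct and matches the paper's intended derivation: reduce to a uniform adversary via Lemma~\ref{lemma:uniform}, apply Theorem~\ref{bound-one-round-random} with the substitution \(y = m\cdot w - M\) to obtain the one-round bounds and the densities of the good sets, and then lift to \(r\) rounds via Proposition~\ref{oneround2r-rounds}. The paper simply writes ``From Proposition~\ref{oneround2r-rounds} and Theorem~\ref{bound-one-round-random}, we obtain''; you helpfully make explicit the implicit reduction to uniform adversaries that Proposition~\ref{oneround2r-rounds} requires, and the translation between the memory gap \(y\) and the bound \(M\) (including checking that \(y \geq m+w\) is equivalent to \(M \leq m\cdot w - m - w\)), so the argument is sound.
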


\section{\poseprot{} based on depth-robust graphs}\label{sec:protocol-graph}

While the unconditional \dbposeprot{} protocol offers interesting security
properties, it comes at the cost of a high communication complexity, since the
verifier needs to transmit a nonce of length equal to the size of the prover's
memory. This section offers an alternative protocol where the prover computes
the labelling of a graph in its memory, starting from a small random seed
transmitted by the verifier. 

\subsection{Graph-based notation and \poseprot{} scheme}\label{sec:construction}

We first introduce the general notion of graph-labelling and discuss its use in
the literature. We let \( H \) be the set of all functions from \( \set{0,
1}^\kappa \) to \( \set{0, 1}^w \), for a suitably large \( \kappa \). Each new
session of the protocol will rely on a hash function \( h \) instantiated
randomly from this set. Elements from the set \( H \) are also called random
oracles, since we will make the random oracle assumption for \( h \) drawn from
this set. Given a directed acyclic graph (DAG) \( G \), the list of successors
and predecessors of the node \(v\) in \(G\) are respectively \(\Gamma^+(v)\) and
\(\Gamma^-(v)\). If \(\Gamma^-(v) = \emptyset \) then \(v\) is an input node.
For a node \( v \) of \( G \), we let \(\llp(v,G)\) be the length of the longest
path in \( G \) that ends in \(v\). If \(R \subseteq V(G)\), we denote by \(G
\setminus R\) the graph obtained after removing the nodes in \( R \) and keeping
all edges between the remaining nodes.

\begin{definition}
  Given a hash function 
  \( h \) and a DAG \( G \), \( l \colon V(G) \to {\{0, 1\}} ^ w \) is an
  \(h\)-labeling of \(G \) if and only if for each \( v \in V(G)\):
  \[
    l(v) \coloneqq h(v \concat l(v_1) \concat \ldots \concat l(v_d)) 
  \]
  where \( \sequence{v_1, \ldots, v_d} = \Gamma^-(v) \). If a node \( v \) has no predecessor, then \(l(v) = h(v)\). We also let
  \( \ell^-(v)= v || \ell(v_1) || \ldots || \ell(v_d) \). 
\end{definition}

Graph labelling as defined above is a well established technique used in
protocols for proofs of space~\cite{Dziembowski2015,Ateniese2014,Pietrzak2018},
memory hardness~\cite{Alwen2016,Alwen2017,Alwen2017b,Alwen2017a,Alwen2018} and
classic secure erasure~\cite{Dziembowski2011,Karvelas2014}. The typical
structure of these protocols is as follows: 
\begin{itemize} 
\item the verifier sends a nonce to the prover, which is used as a seed to
determine the hash function \( h \); 
\item the prover uses \( h \) to compute the labelling of an agreed-upon graph
\( G \), and stores a subset of the resulting labels, denoted \( O(G) \); 
\item the verifier challenges the prover to reply with the labels of several
randomly chosen vertices in the graph, and accepts the proof if these labels are
correct. 
\end{itemize}
Let \( m \) be the size of \( O(G) \) in words. The general goal of this
technique is to ensure that any cheating prover that uses less than \( m \)
words of memory to compute the responses can only do so correctly by paying a
noticeable amount of computational time. 
This has been achieved by using depth-robust graphs~\cite{Erdos1975,Pietrzak2018}, which are graphs that contain \emph{at least one
long path}, even if a significant proportion of nodes have been removed.
Intuitively, in the corresponding protocol, this means that the label for at
least one of the verifier challenges will be hard to compute if the prover has
cheated.

Given a DAG \( G_m \), a hash function \( h \in H \) (which we model as a random
oracle) and a list of nodes \(O(G_m) \subseteq V(G_m)\) of size \(m\), we define
our remote memory erasure protocol as follows: 
\begin{itemize}
  \item \( \Setup(m, w)\): return \(\rho = (G_m, O(G_m)),~\I = H \) 
  \item \(h \sample H \)
  \item \(\Precmp(\rho, h)\): compute the labels of the nodes in \(O(G_m)\),
  and output the concatenation of these labels \(\sigma = l(o_1) \concat
  \ldots \concat l(o_m)\) where \(\sequence{o_1, \dots, o_m} = O(G_m)\)
  \item \(\Chal(\rho) \): return a random vertex in \(O(G_m)\)
  \item \(\Resp(\rho, \sigma, x) \): responds with \(l(x)\), which was stored
  in \( \sigma \)
  \item \( \Vrfy(\rho, h, x, y)\): return true if and only if \( y = l(x) \)
\end{itemize}

In order to prove our graph based \dbposeprot{} secure, there are several issues
we need to address. First, as we need fast responses, our verifier can query
only one random challenge node per round. This means that it is not sufficient
to have a single long path in the graph in order to catch a cheating prover; we
will thus strengthen the depth-robustness property to require \emph{at least a
certain number of long paths} to be present. Another constraint, particularly
relevant to memory erasure, is that we should be able to compute the graph
labelling with minimal memory overhead, so that as much memory as possible can
be erased from the device. We call this property \emph{in-place}, since
intuitively it means that the set of labels to be stored should be computed in
almost the same amount of space as their total size. To our knowledge, no graph
in the literature exists that satisfies these two properties. We design one in
\Cref{sec:graphs}. 

\begin{definition}
  A graph \( G \) can be labelled \textbf{in-place} with respect to a list of
  nodes \( O(G) \subseteq V(G) \) if and only if there is an algorithm that
  outputs the list of labels for all nodes in \(O(G)\) using at most
  \(\abs{O(G)} \cdot w + \O(w)\) bits of memory.
\end{definition}

A second issue that we address lies in the tightness of the security bound,
i.e.\ how big is the gap between the memory erased by a cheating prover and the
memory it is supposed to erase. In our case study, this gap could be used to
store malware, so it should be as small as possible. We improve upon previous
security bounds for protocols based on graph labelling, first by performing a
fine-grained and formal security analysis, and second by identifying a
restricted class of adversaries, that simplifies the proofs while also further
improving the security bound. We relate this class to previous restrictions in
this area and argue that it is a strictly more general notion, resulting in
weaker restrictions for the adversary and therefore stronger security
guarantees.

\subsection{Depth-robustness is sufficient for security}

We show that a variation of the classic graph depth-robustness
property~\cite{Erdos1975,Pietrzak2018} is sufficient to prove security for the
\poseprot{} scheme from \Cref{sec:construction}. As explained above, while
depth-robustness in all previous works requires the existence of one single long
path after some nodes have been removed, we require the existence of several
long paths. The computation required by a cheating prover in this context is
proportional to the depth of the longest remaining path. While in most
previous works the length is linear in the size of the graph, we will tolerate
graphs with slightly shorter paths, i.e.\ of sublinear size. On the one hand,
this is useful in practice, since it will allow us to construct a class of
depth-robust graphs that can be labelled in-place efficiently. On the other
hand, we show that this does not affect security, as long as the computational
power of the adversary during the interactive phase is constrained in proportion
to the prescribed path length, which can be done by the time measurements we
described in \Cref{sec:model}.

\begin{definition}\label{def:depth-robust} Let \( G \) be a DAG and let \(O(G)
  \subseteq V(G)\) with \(\abs{O(G)} = \mu \). We say \(G\) is \( (\mu,
  \gamma) \)-\textbf{depth-robust} with respect to \( O(G) \) if and only if for every
  set \(R \subset V(G)\) s.t.
  \( \abs{R} < \mu \) there exists a subset of
  nodes \(O' \subseteq O(G)\) with \(\abs{O'} \geq \mu - \abs{R} \) such that
  for every node \( v \in O' \) there is a path in \( G \setminus R \) of
  length at least \( \gamma \) that ends in \( v \), i.e. 
  \( \llp(v, G \setminus R)\geq \gamma \). 
 \end{definition}

The security bounds obtained using such a depth-robust graph in the
graph-based \poseprot{} scheme are presented in \Cref{bound-r-rounds}, which can be obtained by a
direct application of \Cref{bound-one-round} and \Cref{oneround2r-rounds}. For
this protocol, the oracle \( \O \) in \Cref{fig:securityexperiment} gives to
\(\A_1\) oracle access to the hash function \(h\) resulting from the
initialization phase. The first part of \Cref{bound-r-rounds} shows that we obtain
better security bounds if we consider a notion of graph-restricted adversaries,
which will be discussed in the next subsection. Note that the parameter \( m \)
for the number of memory blocks in the experiment and the bound \( q<\gamma \) on
the number of oracle calls by \( \A_1 \) are related to the pair \( (m,\gamma) \)
determined by the depth-robustness of the graph. 

\begin{restatable}{corollary}{colboundrrounds}\label{bound-r-rounds} Assume that the graph-based \poseprot{}
  scheme is instantiated with parameters \((m,r,w)\) and an \( (m, \gamma) \)-\textbf{depth-robust} graph \( G \). Then, for any \( (M,q)\)-bounded
  adversary \( (\A_0,\A_1)\), with \(q< \gamma \), we have: 
   \[
    \probsubunder{h}{\mathsf{Exp}^{m,r,w}_{\A_0,\A_1} = \true} \leq {\left(
M'/m \right)}^r + 2^{-w_0} 
\text{ where:}
\] 
\begin{itemize}
\item if \( \A \) is \emph{graph-restricted}: 
\(w_0 = w\) and \( M'=\left\lceil M/w\right\rceil \)
\item else: \(w_0 = w-\log(m)-\log(q) \) and 
\( M' = \left\lceil M / w_0 \right\rceil \)
\end{itemize}
\end{restatable}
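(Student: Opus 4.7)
The plan is to read the corollary as a pure composition of three earlier results: \Cref{lemma:uniform} to reduce to uniform adversaries, the (section-local) one-round bound \textsf{bound-one-round} for a single challenge, and \Cref{oneround2r-rounds} to amplify to $r$ rounds. I would not need to reason about the graph or the random oracle directly in this proof; all of that work is confined to \textsf{bound-one-round}.

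First I would invoke \Cref{lemma:uniform} to replace the given $(M,q)$-bounded adversary $(\A_0,\A_1)$ by a uniform $(M,q)$-bounded adversary $\bar{\A}=(\bar{\A_0},\A_1)$ that wins the experiment with at least the same probability $\probsubunder{h}{\A^r} \leq \probsubunder{h}{\bar{\A}^r}$. Since $\bar{\A_0}$ simply returns the best fixed state $\sigma_\mu$ in every round, $\bar{\A}$ inherits the graph-restricted property from $\A$ when that assumption is in force, so the distinction between the two cases in the statement is preserved by the reduction.

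Next I would apply \textsf{bound-one-round} to $\bar{\A}$ to obtain, in each of the two cases, a subset $\I_{\mathsf{good}} \subseteq H$ together with the guarantees $\condprob{\bar{\A}^1}{h} \leq M'/m$ for every $h \in \I_{\mathsf{good}}$ and $|H \setminus \I_{\mathsf{good}}|/|H| \leq 2^{-w_0}$, where the pair $(M',w_0)$ is $(\lceil M/w\rceil, w)$ in the graph-restricted case and $(\lceil M/w_0 \rceil, w-\log m - \log q)$ in the general case. The hypothesis $q<\gamma$ is what makes this one-round bound meaningful: it is what allows the depth-robustness of $G$ to translate into an actual constraint on how many target labels $\bar{\A}$ can reconstruct within its oracle budget during the fast phase.

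Finally I would feed $\zeta = M'/m$ and the set $\I_{\mathsf{good}}$ into \Cref{oneround2r-rounds}, yielding
\[
\probsubunder{h}{\bar{\A}^r} \;\leq\; (M'/m)^r + \frac{|H \setminus \I_{\mathsf{good}}|}{|H|} \;\leq\; (M'/m)^r + 2^{-w_0},
\]
which combined with the reduction from the first step proves the corollary in both cases. The main obstacle is not in this corollary at all but in \textsf{bound-one-round}, which is where depth-robustness and the random-oracle accounting are actually used; once that theorem is available, the work here reduces to bookkeeping to confirm that the $(M',w_0)$ parameters produced by the one-round analysis match those plugged into \Cref{oneround2r-rounds} in each of the two cases.
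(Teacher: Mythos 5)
Your proposal is correct and follows the same route the paper indicates (it states the corollary follows ``by a direct application of \Cref{bound-one-round} and \Cref{oneround2r-rounds}''); you are simply more explicit about the implicit first step of passing to a uniform adversary via \Cref{lemma:uniform}, including the small but necessary observation that $\bar{\A}$ inherits graph-restrictedness because $\A_1$ is unchanged and $\sigma_\mu$ is among the states $\A_0$ could have produced.
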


Intuitively, the previous result shows that the attacker's
success probability is proportional to \( M \): the smaller the state it uses to
reply to our challenges, the higher the probability that it will fail to pass the
verification test. This result is not far from optimal, as this bound can
actually be achieved by an adversary that stores \(\frac{M}{w} \) of the labels
in \(O(G)\). For example, if we would like to erase a malware of size at least \(6\) KB from
a device with total memory \(100\) KB, then the parameters would be \( w = 256,
m = 100 \cdot 2 ^{13} / 256 = 3200, M = (100 - 5) \cdot 2 ^{13} = 778240\). If
we wanted to be certain that any graph-restricted attacker can pass the protocol
without erasing malware with probability at most \(10^{-3}\), then we need to
execute the protocol for \(112\) rounds.

\subsection{Graph-restricted adversary}\label{subsec:graph-restricted}

Several works studying protocols based on graph labelling related to ours
consider restricted classes of adversaries in order to obtain tight security
bounds and reductions to graph
pebbling~\cite{Dwork2005,Dziembowski2015,Alwen2016}. In all these notions, the
adversary can only make oracle calls corresponding to valid labels in the graph
and, in addition, the adversary is restricted in the type of computation that it
can apply to get a state to be stored in memory, e.g.\ it can only store
labels~\cite{Dwork2005,Dziembowski2015} or so-called entangles
labels~\cite{Alwen2016}. The notion that we consider in \Cref{def:restricted} is
more general, considering the full class of graph-restricted adversaries that
can perform any computation to obtain the state to be stored.

\begin{definition}\label{def:restricted} We say that \( \A=(\A_0,\A_1) \) is
  \emph{graph-restricted to} \(G\) if and only if all oracle calls done by
  \(\A_1\) correspond to valid labels (equal to \(\ell^-(v)\) for some node
  \(v\)), and its responses to the verifier challenges are always correct. 
\end{definition}

A graph-restricted adversary may compute arbitrary functions (for example
compression, cut labels into pieces, entangle them with new algorithms, etc),
but it doesn't do any guessing while making oracle queries nor when responding
challenges. The assumption is that the adversary knows what it is doing, i.e.
knows that an oracle query would be useless or that a particular response to the
challenge would be wrong. The notable
difference with respect to the previous classes of graph-playing adversaries,
e.g.\ the pebbling adversary from~\cite{Dziembowski2015} or the entangled pebbling
adversary from~\cite{Alwen2016}, is that we have no a priori restriction on how the labels of values are processed and stored. We discuss these notions in more detail in \Cref{appendix:relation}.

\subsection{Security proof: first step} 

The following theorem will be the main ingredient for proving
\Cref{bound-r-rounds}. We split its proof in two steps. The first step is
presented in this subsection and is independent of the adversarial class. The
second step is simpler for graph-restricted adversaries. We present the proof
for that case in the next subsection and for the case of general adversaries in
appendix. Our proof strategy builds upon the proofs from~\cite{Alwen2016}
and~\cite{Pietrzak2018}. Relying on our new notion of depth-robustness, we
combine ideas from both proofs, improve on their security bounds, and adapt them
to graph-restricted adversaries and the case where we only ask one challenge
from the prover.

\begin{theorem}\label{bound-one-round} Assume that the graph-based \poseprot{}
  scheme is instantiated with parameters \((m,1,w)\) and with a \( (m, \gamma)
  \)-\textbf{depth-robust} graph \( G \). Let \( \A \) be any \emph{\( (M,q)
  \)-bounded} adversary, with \(q < \gamma \). There exists a set of random
  oracles \( H_\mathsf{good}\subseteq H \) such that \( \A \)'s winning
  probability is bounded by \( \frac{M'}{m} \) within \(H_\mathsf{good}\),
  where:
  \begin{itemize}
    \item \(M'= \left\lceil \frac{M}{w} \right\rceil \) and
    \(\abs{H_\mathsf{good}} \geq \abs{H} \cdot (1 - 2^{-w})\) if \( \A \)
    is graph-restricted
     \item else: \(M'= \left\lceil \frac{M}{w_0} \right\rceil \) 
       and \(\abs{H_\mathsf{good}} \geq \abs{H} \cdot (1 - 2^{-w_0})\), where \(w_0 = w-\log(m)-\log(q)\).
   \end{itemize}
 \end{theorem}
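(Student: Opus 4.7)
The plan is to follow the well-established ``ex-post-facto'' / pebbling style of argument as in \cite{Alwen2016, Pietrzak2018}, adapting it to our one-challenge, multi-long-path setting. Fix \(h \in H\) and the (deterministic) adversary \( \A=(\A_0,\A_1) \). Because \( \A \) is \((M,q)\)-bounded and \(q<\gamma \), \(\A_1\) makes strictly fewer oracle queries than the length of any ``long'' path guaranteed by depth-robustness. The idea is to associate with the stored state \( \sigma \) a set \(R\) of nodes whose labels are effectively \emph{present} in \( \sigma \), and then use \((m,\gamma)\)-depth-robustness to show that most challenges in \(O(G)\) lie at the end of a path of length \(\gamma \) in \(G\setminus R\), hence unanswerable within \(q\) oracle queries.

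First I would carry out the graph-restricted case. Here, because \(\A_1\) only ever calls \(h\) on valid pre-labels \( \ell^-(v) \) and always returns correct answers, I can define \(R\) as the set of nodes \(v\) such that \( \ell(v) \) can be emitted by \(\A_1\) (on some challenge \(x\) and oracle trace) without \(\A_1\) querying \( \ell^-(v) \). Intuitively, these are the labels ``directly encoded'' in \( \sigma \). A standard incompressibility argument for the random oracle — each label \( \ell(v) \) is a fresh uniform \(w\)-bit string independent of everything not in its forward-cone — shows that \( |R| \leq M' = \lceil M/w \rceil \) except on an oracle set of measure \( \leq 2^{-w}\), which I exclude from \(H_\mathsf{good}\). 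For any node \(v \in O(G) \setminus R\) with \( \llp(v, G\setminus R) \geq \gamma \), correctly outputting \( \ell(v) \) forces \(\A_1\) to query \(h\) on each of the \( \geq \gamma \) pre-labels along that path, contradicting \(q<\gamma \). By \Cref{def:depth-robust} applied with the set \(R\), at least \( m-|R| \geq m - M'\) nodes of \(O(G)\) enjoy such a long path, so a uniformly chosen challenge fails with probability at least \(1 - M'/m\).

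For the general case, the subtlety is that \(\A_1\) may forge a correct response without querying the full pre-label, either because the label was implicitly encoded non-trivially in \( \sigma \) or because \(\A_1\) got lucky on a guess. To handle this, I would rerun the above argument on a truncated label length \(w_0 = w - \log(m) - \log(q)\): by a union bound over the (at most \(m\)) challenge choices and the (at most \(q\)) oracle queries, the probability that any $w$-bit label can be recovered from fewer than its ``compressed'' \(w_0\) bits of information is at most \(2^{-w_0}\); these oracles get removed from \(H_\mathsf{good}\). With this slack, the definition of \(R\) generalises to ``nodes whose labels are recoverable from \( \sigma \) up to the \(mq\) guesses'', and the same counting gives \( |R| \leq M' = \lceil M / w_0\rceil\). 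The rest of the argument — invoking \((m,\gamma)\)-depth-robustness and the bound \(q<\gamma \) on path recomputation — carries over verbatim.

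The main obstacle I anticipate is making the definition of \(R\) precise and formally justifying \( |R|\leq M' \) for general adversaries: one has to argue carefully that a state of \(M\) bits cannot simultaneously ``encode'' more than \(M/w_0\) truly random \(w\)-bit labels, even when \( \A_0 \) is computationally unbounded and may correlate its encoding with the oracle. This is where the random-oracle-based counting, together with the union-bound slack baked into \(w_0\), and the careful definition of ``recoverable via at most \(q\) queries'' must come together. Once \(R\) is in place, the depth-robustness step and the reduction of winning probability to \(M'/m\) are essentially combinatorial and straightforward.
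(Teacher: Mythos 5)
Your high-level strategy matches the paper's: identify a set of nodes whose labels the adversary effectively gets ``for free'' from $\sigma$, bound its size by an incompressibility argument against the random oracle, then invoke $(m,\gamma)$-depth-robustness together with $q<\gamma$ to conclude that a random challenge fails with probability at least $1-M'/m$. Your Step 1 (the graph-restricted case) and the reduction to depth-robustness are essentially the paper's Lemmas \ref{lemma:llp} and \ref{lemma:dr}.

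However, there is a genuine gap exactly where you predicted one. The set $R$ (``nodes whose labels are effectively present in $\sigma$'') is never defined precisely enough to carry the encoding argument, and making it precise is the actual content of the proof. The paper instead defines \emph{blue nodes} operationally through a parallel simulator $\S$ (Figure \ref{fig:S}) that runs $\A_1^{\O_h}(\sigma,o_1),\dots,\A_1^{\O_h}(\sigma,o_m)$ synchronously in lock-step: a node $v$ is blue if its label appears inside some queried pre-label $\ell^-(v')$ before $\ell^-(v)$ itself is ever queried in \emph{any} of the $m$ executions. The global ordering imposed by the round-by-round synchronization, and the inverse-topological tie-break inside a round, are what make the encoder/decoder pair well-defined: the decoder can replay $\S$ and reconstruct blue labels at exactly the right moment. ``Labels effectively present in $\sigma$'' does not, on its own, give the decoder a deterministic rule for deciding which labels to omit and when to reinsert them, which is exactly the information the compression must account for.

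The general case is where your route diverges most. You motivate $w_0 = w - \log m - \log q$ via a union bound over $mq$ ``guessing opportunities''. That is not what the extra $\log(mq)$ bits are for in the paper's argument (Lemma \ref{lemma:counting_general}): for a non-graph-restricted $\A_1$, a query or output of $\S$ need not be a genuine pre-label, so the decoder can no longer detect blue nodes on its own. The encoder must therefore transmit, per blue node, a pointer of $\log(mq)$ bits identifying which output of $\S$ first reveals that blue label. This turns the net compression into $\abs{B}\cdot(w-\log(mq)) - M$ bits, and \Cref{enc-lemma} then yields $\abs{B}\le\lceil M/w_0\rceil$ except on an oracle fraction $2^{-w_0}$. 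Your union-bound story lands on the same numbers, but it quantifies the probability of a different event (a lucky guess) rather than the compression achievable by a decoder lacking position information; as phrased it does not obviously compose with the incompressibility lemma, which is the engine of the bound. To close the gap you would need to either formalise the guessing argument as an actual encoder/decoder (at which point it collapses to the paper's construction) or give a genuinely probabilistic argument that bounds $\abs{B}$ directly, which is not straightforward when $\A_0$ is unbounded and may correlate $\sigma$ arbitrarily with $h$.
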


Consider an adversary \( \A=(\A_0,\A_1) \) against the memory challenge game for
our \poseprot{} instantiated with a graph \( G \). Let \(\sigma=\A_0(1^w,
\rho,h)\). Let \(O(G) = \set{o_1, \ldots, o_m}\) be the set of all challenge
vertices that can be given to \( \A_1 \) during the experiment.
For this protocol the adversary \(\A_1\) has access to the random oracle \(h\)
through \( \O \), which we make clear in the notation onwards by calling the
oracle function \(\O_h\). A query \(Q\) from \(\A_1\) to \(\O_h\) is good if
\(\exists v\colon Q = \ell^-(v)\). For every \( i\in \{1,\ldots,m\} \),
considering the execution of \( \A_1^{\O_h}(1^w, \rho, \sigma,o_i) \), we let:

\begin{itemize} 
\item \( Q_{i,j} \) be the input to the \( j \)-th oracle call made to \( \O_h
\) by \( \A_1 \) in this execution;
\item \( t_i \) be the total number of oracle calls made by \( \A_1 \) in this
execution; 
\item \( Q_{i,t_{i}+1} \) be the output of \( \A_1 \) on this
execution.
\end{itemize}

\begin{definition}[Blue node] 
We say that a node \( v\in V(G) \) is blue 
if and only if there exists \( v'\in V(G) \), \( i\in \{1,\ldots,m\} \) and \(
j\geq 1 \) such that: 
\begin{enumerate} 
\item \( v\in \Gamma^-(v') \) and \( Q_{i,j} = \ell^-(v') \) 
\item \( \forall i'\in \set{1,\ldots,m} \forall j' < j.~ Q_{i',j'}\neq
\ell^{-}(v) \)
\end{enumerate} 
We say that \(v\) is a \emph{blue node} from iteration \(j\), and that
\(Q_{i,j}\) is the query associated to \(v\). Denote by \(B_j\) all blue nodes
in iteration \(j\), and by \(B\) the set of all blue nodes. A bitstring is a
\emph{pre-label} if it is equal to \(\ell^{-}(v)\) for some \(v\), and a
\emph{possible pre-label} if it is the concatenation of a node \(v\) and \(w
\cdot \abs{\Gamma^-(v)}\) bits.
\end{definition}

The first point above implies that:
\[ \ell^-(v')= v' \concat y_1 \concat \ldots
\concat y_m \wedge \ell(v)\in \set{y_1,\ldots,y_m} \]
The second point will ensure that we
can extract the labels of blue nodes for free, i.e.\ without querying them to the
random oracle, by running \( \A_1 \) in parallel for all possible challenges on
the state \( \sigma \) computed by \( \A_0 \). Let \(T_i = t_i\) if the response
to the challenge \(o_i\) by \(\A_1\) is correct, or infinite otherwise. The
strategy for the proof of \Cref{bound-one-round} will be the following: 

\begin{itemize} 
\item \emph{First step:} prove that the success probability of the adversary is
smaller than the fraction between the number of blue nodes, whose label it has
stored, and the total number of labels it is supposed to store.
\item \emph{Second step:} prove an upper bound on the number of blue nodes; for
graph-restricted adversaries, this will be \( \left\lceil \frac{M}{w}
\right\rceil \), matching the value \( M' \) from \Cref{bound-one-round}.   
\end{itemize}

First, we show that to answer a challenge correctly, \(A_1\) needs to recompute
the labels of the longest path to any node that is not blue:

\begin{restatable}{lemma}{lemmallp}\label{lemma:llp}
  \(
    \forall i\in \{1,\ldots,m\} \colon T_i \geq \llp(o_i, G\setminus B)
  \), i.e.\ the time it takes to compute the label of \(o_i\) is at least the length
  of the longest path \(\llp(o_i, G\setminus B)\).
\end{restatable}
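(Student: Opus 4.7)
The plan is a downward induction along a longest path in $G \setminus B$ ending at $o_i$, using the random-oracle property together with the defining condition of blue nodes to extract a fresh oracle query of $\A_1$ in execution $i$ at each step. If $\A_1$'s response to $o_i$ is incorrect then $T_i = \infty$ and the bound holds vacuously, so I assume the response is the correct label $\ell(o_i)$. Set $\gamma_0 = \llp(o_i, G \setminus B)$ and fix a longest path $v_0 \to v_1 \to \cdots \to v_{\gamma_0} = o_i$ with $v_k \in V(G) \setminus B$ for every $k$; the objective is to produce a strictly increasing sequence $j_0 < j_1 < \cdots < j_{\gamma_0} \leq t_i$ such that $Q_{i,j_k} = \ell^-(v_k)$, which immediately yields $t_i \geq \gamma_0$.

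The base case $k = \gamma_0$ is the classical random-oracle argument: after restricting to a good set of oracles in which no lucky guess can pass, the correct output $\ell(o_i) = h(\ell^-(o_i))$ forces $\A_1$ to have issued the query $\ell^-(o_i)$ at some iteration $j_{\gamma_0} \leq t_i$. For the inductive step, suppose $j_{k+1}$ with $Q_{i,j_{k+1}} = \ell^-(v_{k+1})$ has already been identified. Because $v_k \in \Gamma^-(v_{k+1})$, the bitstring $\ell^-(v_{k+1})$ contains $\ell(v_k)$ as a substring, so $\A_1$ in execution $i$ must have known $\ell(v_k)$ by step $j_{k+1}$. Now I exploit $v_k \notin B$: the query at $j_{k+1}$ satisfies condition~(1) of the blue-node definition with the pair $(v_k, v_{k+1})$, so condition~(2) must fail, providing an earlier iteration at which the query $\ell^-(v_k)$ has already been made.

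The main obstacle is that blueness quantifies queries across all $m$ parallel executions whereas $T_i$ counts only those inside execution $i$, so the previously witnessed $\ell^-(v_k)$ query must still be pinned down to execution $i$. To close this gap, I would appeal to the independence of the parallel copies of $\A_1$, which share no state beyond the common input $\sigma$: apart from reading $\ell(v_k)$ directly from $\sigma$ or recovering it from the response to an earlier query in execution $i$, there is no way for execution $i$ to have assembled $\ell(v_k)$ in time for the query at $j_{k+1}$. The random-oracle property on good oracles collapses the second possibility to an earlier $\ell^-(v_k)$ query inside execution $i$, which supplies the required $j_k$; the first possibility is absorbed by the complementary bound on $|B|$ established in the second step of the proof of Theorem~\ref{bound-one-round}, where labels that $\A_0$ has hard-coded into $\sigma$ are charged against the memory budget $M$. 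Setting $j_k$ to the witnessed iteration completes the induction and establishes $T_i \geq \llp(o_i, G \setminus B)$ as claimed.
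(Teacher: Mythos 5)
Your overall strategy (downward induction along a longest path in $G\setminus B$, using non-blueness to extract an earlier query at each step) is the right one, but there is a genuine gap in the inductive step that your proposed fix does not close. You want to produce a strictly increasing sequence of indices $j_k$ with $Q_{i,j_k} = \ell^-(v_k)$ \emph{within execution $i$}. However, condition~(2) of the blue-node definition quantifies over all $m$ parallel executions: non-blueness of $v_k$ only guarantees that $\ell^-(v_k)$ was queried at an earlier round by \emph{some} execution $x$, not necessarily $x=i$. Your proposed repair — independence of the parallel copies plus charging the ``read from $\sigma$'' case against $|B|$ — does not hold up. Consider the scenario where $\A_0$ stores $\ell(v_k)$ in $\sigma$; execution $i$ reads $\ell(v_k)$ from $\sigma$ and uses it to form the query $\ell^-(v_{k+1})$ at round $j_{k+1}$, never querying $\ell^-(v_k)$; and meanwhile a different execution $x$ queries $\ell^-(v_k)$ at an earlier round. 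Then $v_k$ is \emph{not} blue, so it is not charged to $|B|$, yet your chain within execution $i$ breaks at step $k$. The fact that $v_k$ is not blue is exactly what prevents you from invoking the $|B|$ budget here.

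The paper's proof sidesteps this by defining $f_j$ as the smallest round index (\emph{across all executions}) at which $\ell^-(v_j)$ is queried, and proving $f_1 < f_2 < \cdots < f_k$ directly from non-blueness — a chain that never needs to be localized to execution $i$. The link to $t_i$ is made only once, at the top: since $\A_1$ in execution $i$ outputs the correct $\ell(o_i)$ and $o_i\notin B$, the random-oracle argument forces execution $i$ itself to query $\ell^-(o_i)$, hence $f_k \leq t_i$, and then $t_i \geq f_k \geq k$ follows. To salvage your argument you would need to replace ``$Q_{i,j_k}=\ell^-(v_k)$'' with the weaker ``some execution queries $\ell^-(v_k)$ at round $j_k$'' throughout the induction, which is precisely the paper's $f_j$.
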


Furthermore, if the graph is \( (m,\gamma) \)-\textbf{depth-robust}, such paths
are with high probability longer than \( \gamma \) if \( M \) is significantly
smaller than \( m\cdot w \).

\begin{restatable}{lemma}{lemmadr}\label{lemma:dr} If the graph \(G\) is \( (m,
  \gamma) \)-\textbf{depth-robust}, then:
  \[
    \probsubunder{i \sample [m]}{T_i \geq \gamma} 
    \geq 1 - \abs{B} m ^ {-1}
  \]
\end{restatable}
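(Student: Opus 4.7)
The plan is to apply the \((m,\gamma)\)-depth-robustness property directly to the set \(B\) of blue nodes, and then transfer the resulting lower bound on longest paths to a lower bound on \(T_i\) via the previous lemma.

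First I would dispose of the trivial case \(\abs{B} \geq m\), where the claimed bound \(1 - \abs{B}/m\) is non-positive and the inequality holds vacuously. So assume \(\abs{B} < m\), which is precisely the hypothesis needed to invoke \Cref{def:depth-robust} with \(R = B\). Depth-robustness then yields a subset \(O' \subseteq O(G)\) with \(\abs{O'} \geq m - \abs{B}\) such that every \(v \in O'\) satisfies \(\llp(v, G \setminus B) \geq \gamma\).

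Next I would combine this with \Cref{lemma:llp}, which gives \(T_i \geq \llp(o_i, G \setminus B)\) for every \(i \in [m]\). For every index \(i\) such that \(o_i \in O'\), we therefore obtain \(T_i \geq \gamma\). Since \(i\) is drawn uniformly from \([m]\) and the map \(i \mapsto o_i\) enumerates \(O(G)\), the probability that \(o_i \in O'\) is at least \(\abs{O'}/m \geq (m - \abs{B})/m = 1 - \abs{B} m^{-1}\), giving the desired bound.

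The argument is essentially a one-line application of the two preceding results. The only substantive choice is identifying \(R = B\) as the right set to remove: this is natural because \Cref{lemma:llp} already measures the longest path in \(G \setminus B\), and because, intuitively, the blue nodes are exactly those whose labels the adversary can obtain "for free" (via the parallel simulation described before \Cref{lemma:llp}), so removing them in the depth-robustness argument correctly captures the structure of the graph on which the adversary actually has to pay a computational cost. No subtle obstacle arises beyond keeping track of the case \(\abs{B} < m\) required to apply \Cref{def:depth-robust}.
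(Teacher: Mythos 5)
Your proof is correct and follows exactly the same route as the paper's: handle the trivial case \(\abs{B} \geq m\), apply depth-robustness with \(R = B\) to obtain the subset \(O'\), transfer the longest-path bound to \(T_i\) via \Cref{lemma:llp}, and conclude by counting the fraction of challenges landing in \(O'\). No differences worth noting.
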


Given the previous results, we can bound the probability of success of the
adversary in proportion to the number of blue nodes, as shown in the following
lemma. This concludes the first step of the proof of
\Cref{bound-one-round}.

\begin{lemma}
  Given an \emph{\( (M,q) \)-bounded} adversary \( \A \), with \(q < \gamma
  \), and a random oracle \(h \in H\) such that \(\abs{B} \leq \bar{B}\), then
  its probability of success is bounded by \(\bar{B} \cdot m^{-1} \).
\end{lemma}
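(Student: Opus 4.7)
The plan is to combine the two preceding lemmas with the resource bound $q<\gamma$ in a straightforward contrapositive argument. The key observation is that a winning execution forces $T_i$ to be small, while \Cref{lemma:dr} says $T_i$ is large with high probability. Matching these two facts directly yields the claim.

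First, I would unpack what it means for the adversary to win on challenge $o_i$. By definition, a win requires $\A_1^{\O_h}(1^w,\rho,\sigma,o_i)$ to output the correct label $\ell(o_i)$, so $T_i = t_i$ (finite) rather than infinity. Since $\A$ is $(M,q)$-bounded, the total number of oracle calls made by $\A_1$ satisfies $t_i \leq q$, and the hypothesis $q<\gamma$ gives $T_i \leq q < \gamma$ on every winning execution. Equivalently, whenever $T_i \geq \gamma$, the adversary loses on challenge $o_i$.

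Next, I would apply \Cref{lemma:dr}, which states that under the $(m,\gamma)$-depth-robustness of $G$, the fraction of challenges $o_i$ for which $T_i \geq \gamma$ is at least $1-|B|\cdot m^{-1}$. Since we are assuming $|B|\leq \bar{B}$, this lower bound becomes $1-\bar{B}\cdot m^{-1}$. Taking complements, the fraction of challenges for which $T_i<\gamma$ (a necessary condition for winning) is at most $\bar{B}\cdot m^{-1}$.

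Finally, since the verifier chooses $i$ uniformly at random from $[m]$ and the events ``$\A$ wins on $o_i$'' and ``$T_i<\gamma$'' satisfy the first implies the second, I conclude
\[
\probsubunder{i\sample[m]}{\A \text{ wins}} \leq \probsubunder{i\sample[m]}{T_i<\gamma} \leq \bar{B}\cdot m^{-1}.
\]
There is no real obstacle here; the non-trivial content is packaged in \Cref{lemma:llp} and \Cref{lemma:dr}. The only subtlety worth being careful about is ensuring the bound on $t_i$ is applied in the execution where the challenge was actually $o_i$, which is immediate from how $T_i$ is defined per-challenge.
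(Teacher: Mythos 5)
Your proof is correct and takes essentially the same approach as the paper: apply \Cref{lemma:dr} to lower-bound the probability that $T_i \geq \gamma$, then observe that the $(M,q)$-bound together with $q<\gamma$ makes winning impossible on any such challenge, giving $\probsubunder{i\sample[m]}{\A \text{ wins}} \leq \bar{B}\cdot m^{-1}$. Your unpacking of why a win forces $T_i \leq q$ is just a more explicit rendering of the paper's remark that the adversary ``cannot reply to these challenges on time.''
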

\begin{proof}
  From \Cref{lemma:dr} we deduce that the probability that \(T_i \geq \gamma
  \) is at least \(1 - \abs{B} m^{-1} \geq 1 - \bar{B} m^{-1}\). But given
  that \(q < \gamma \), it follows that the adversary cannot reply to these
  challenges on time, which implies that its probability of success is bounded
  by \( \bar{B} m^{-1} \), as claimed.
\end{proof}

\subsection{Security proof: second step for graph-restricted 
\texorpdfstring{\( \A \)}{A}} 

Next we show that, with high probability over the choice of the random oracle,
we obtain the desired upper bound on the number of blue nodes. That is, for a big
proportion of (good) random oracles, \( \abs{B} \) is smaller than the number of
blocks stored by \( \A_0 \) in \( \sigma \). This will help us bound the
prediction ability of a cheating adversary that did not store enough labels.
Recall that \( H \) is the set of all random oracles from \( \set{0, 1}^\kappa
\) to \( \set{0, 1}^w \). We will rely on the following well-known result.

\begin{lemma}[adapted from Fact 8.1 in~\cite{De2010}]\label{enc-lemma} If there
is a deterministic encoding procedure \(\mathit{Enc} \colon H \to \bin^s\) and a
decoding procedure \(\mathit{Dec} \colon \bin^s \to H\) such that \(
\probsubunder{ x\sample H}{\mathit{Dec}(\mathit{Enc}(x)) = x} \geq \delta \),
then \(s \geq \log \abs{H} + \log{\delta}\).
\end{lemma}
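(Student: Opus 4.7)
The plan is to exploit the standard incompressibility argument: a deterministic injective-on-good-inputs encoding into $s$ bits cannot recover more than $2^s$ distinct elements, so if it succeeds on a $\delta$-fraction of $H$ then $\delta \cdot |H| \le 2^s$.

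First I would define the set of successfully compressed oracles $S = \{x \in H : \mathit{Dec}(\mathit{Enc}(x)) = x\}$. Since we are working with uniform sampling, the hypothesis rewrites as $|S| \ge \delta \cdot |H|$. The key observation is that on $S$ the map $\mathit{Enc}$ is injective: if $x, x' \in S$ satisfy $\mathit{Enc}(x) = \mathit{Enc}(x')$, then applying the deterministic decoder gives $x = \mathit{Dec}(\mathit{Enc}(x)) = \mathit{Dec}(\mathit{Enc}(x')) = x'$.

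From injectivity of $\mathit{Enc}$ restricted to $S$ and the fact that the codomain $\bin^s$ has exactly $2^s$ elements, I would conclude $|S| \le 2^s$. Chaining this with the lower bound $|S| \ge \delta \cdot |H|$ yields $\delta \cdot |H| \le 2^s$. Taking the base-$2$ logarithm on both sides gives $s \ge \log |H| + \log \delta$, which is the desired inequality.

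There is no real obstacle here; the proof is a one-line pigeonhole/counting argument, and the only subtlety is to spell out that determinism of $\mathit{Dec}$ (together with the fixed-point condition $\mathit{Dec}(\mathit{Enc}(x)) = x$) is exactly what turns the hypothesis into injectivity of $\mathit{Enc}$ on $S$. The lemma will then be invoked in the main argument with $H$ instantiated to the set of random oracles and $\delta$ tied to the probability that a compressed description of the oracle can still be used to predict blue-node labels, thereby upper-bounding $|B|$ in terms of $M$ and obtaining the constant $M' = \lceil M/w \rceil$ appearing in \Cref{bound-one-round}.
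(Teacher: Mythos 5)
Your proof is correct and is the standard pigeonhole/incompressibility argument; the paper itself cites this as a known fact (Fact 8.1 in~\cite{De2010}) and does not reprove it, but your argument is exactly the one the cited reference uses. The only minor point worth noting is that your injectivity step implicitly relies on $\mathit{Dec}$ being a (deterministic) function, which is already built into the notation $\mathit{Dec}\colon\bin^s\to H$, so the argument is complete as written.
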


Intuitively, relying on the adversary \( \A \) and its set of blue nodes, we
will construct an encoder and a decoder to which we will be able to apply
\Cref{enc-lemma} to derive the desired bounds. 

\begin{definition}[Good oracle]\label{def:good} We say that \( h\in H \) is a good oracle for a
  \emph{graph-restricted} \( \A \) if \( \abs{B} \leq \left\lceil \frac{M}{w} \right\rceil \).
\end{definition}

Let \(\S \) be the oracle machine in \Cref{fig:S}. It executes \(\A_1\) in
parallel for all possible challenges. Notice that \( \S \) executes basically
the same operations as \(\A_1\) and does not make repeated queries to the
oracle. \( \S \) terminates when it has finished processing all parallel
executions of \( \A_1 \). In each round, \( \S \) processes the respective
oracle calls of each instance of \( \A \), and outputs all calls to the
environment at the end of the round. In the next lemma \( \S \) will be used by the
encoder and decoder that we construct, which will observe its outputs and
process all the oracle calls of \( \S \). The encoder will answer the oracle
calls by looking directly at \( h \), while the decoder will obtain the desired
values from the encoded state. The fact that \( \A_1 \) is graph-restricted
helps the decoder determine blue nodes directly.

\simulator{}

\begin{restatable}{lemma}{lemmacounting_restricted}\label{lemma:countingrestricted}
  Let \( \A \) be an attacker with parameters \( (m,1,w) \) and
  \(H_\mathsf{good} \subseteq H\) be the corresponding set of good oracles.
  Then \(\abs{H_\mathsf{good}} \geq (1 - 2^{-w} ) \cdot \abs{H}\).
\end{restatable}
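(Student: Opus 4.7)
The plan is to apply the encoding argument from \Cref{enc-lemma}. Let \( H_\mathsf{bad} = H \setminus H_\mathsf{good} \); I will exhibit an encoder/decoder pair \( (\mathit{Enc},\mathit{Dec}) \) that achieves an encoding of length at most \( \log |H| - w \) bits on every \( h \in H_\mathsf{bad} \) and decodes correctly on this set. Since \( \mathit{Dec}\circ\mathit{Enc} \) is the identity on \( H_\mathsf{bad} \), the encoder must be injective there, giving \( |H_\mathsf{bad}| \leq 2^{\log|H|-w} = |H|\cdot 2^{-w} \), and hence \( |H_\mathsf{good}| \geq (1 - 2^{-w})\cdot|H| \) as claimed.

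The encoder \( \mathit{Enc}(h) \) first computes \( \sigma \gets \A_0(1^w,\rho,h) \) and runs \( \A_1 \) on every challenge \( o_i \) using \( \O_h \) to determine the set \( B \) of blue nodes. By \Cref{def:good}, \( h \in H_\mathsf{bad} \) gives \( |B| > \lceil M/w \rceil \). The encoding is then the concatenation of \( \sigma \) (padded to exactly \( M \) bits) together with the table of values \( h(x) \) for every input \( x \in \{0,1\}^{\kappa} \) \emph{except} those of the form \( x = \ell^-(v) \) for \( v \in B \); the table is listed in the canonical order of inputs so no indices need to be stored. These \( |B| \) omitted values are exactly the ones that the decoder can recover for free by exploiting the structure of blue nodes.

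The decoder \( \mathit{Dec} \) reads \( \sigma \) from the encoded string and emulates \( \S \) from \Cref{fig:S} on input \( \sigma \), answering oracle queries as \( \S \) needs them. For a query \( Q \) that is not a pre-label of a blue node, the answer is read directly from the stored table; for a query \( Q = \ell^-(v) \) with \( v \in B \), the decoder returns the \( w \)-bit block of \( \ell^-(v') \) corresponding to \( \ell(v) \), where \( v' \) is the parent of \( v \) whose pre-label query made \( v \) blue. The fact that \( \S \) processes its pending list in inverse topological order ensures that every such \( \ell^-(v') \) has already been handled by the decoder when \( \ell^-(v) \) itself is needed, so the required block is indeed available. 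The graph-restricted assumption on \( \A \) is critical here: because every query of \( \A_1 \) is guaranteed to equal \( \ell^-(u) \) for some actual node \( u \), the structural test used by \( \S \) to identify possible pre-labels is exact, and the substring extracted as \( \ell(v) \) is the true label rather than an adversarially crafted bitstring that merely looks like one. Once \( \S \) terminates, the decoder has recovered \( h(x) \) for every blue pre-label \( x \), and completes \( h \) by reading off the remaining \( h(x) \) from the stored table.

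For the accounting, a full description of \( h \) needs \( \log|H| = 2^\kappa\cdot w \) bits. Our encoding pays \( M \) bits for \( \sigma \) but omits \( |B| \) labels, saving \( |B|\cdot w \geq (\lceil M/w\rceil + 1)\cdot w \geq M + w \) bits; the constant-size delimiters can be absorbed into the slack of the parameter \( \kappa \). The net length is at most \( \log|H| - w \), which closes the argument. The part that required the most care is the correctness of the decoder's substring extraction for blue pre-labels; the combination of the graph-restricted hypothesis with the inverse-topological order imposed by \( \S \) is exactly what keeps this step unambiguous, and it is precisely here that the unrestricted case must work harder, paying the \( \log m + \log q \) correction that appears in the second clause of \Cref{bound-one-round}.
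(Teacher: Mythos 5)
Your overall strategy matches the paper's: construct an encoder/decoder pair that exploits the blue nodes and the parallel simulator $\S$, show that the encoding saves more than $w$ bits when $|B| > \lceil M/w \rceil$, and invoke a compression bound to deduce $|H_\mathsf{bad}| \leq 2^{-w}|H|$. The arithmetic at the end ($|B|\cdot w \geq (\lceil M/w\rceil +1)\cdot w \geq M + w$) is also correct and matches the paper's accounting.

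However, there is a genuine gap in the encoding you chose. You store the surviving entries of the oracle table in \emph{canonical order of inputs}, omitting the blue pre-labels, and you claim that ``no indices need to be stored'' and that the decoder can ``read directly from the stored table.'' This does not work. To look up $h(Q)$ at its canonical-order position, the decoder must know how many omitted entries (blue pre-labels) lie lexicographically \emph{before} $Q$. But the blue pre-labels are only discovered gradually as the decoder re-runs $\S$: a blue node revealed in a later round of $\S$ may have a pre-label that is lexicographically smaller than a query made in an earlier round, and the decoder has no way to anticipate this shift. The indexing is therefore ambiguous during the simulation, and the circularity cannot be broken --- you cannot enumerate the blue pre-labels in advance without already knowing the labels of their predecessors, which is exactly what you are trying to recover. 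Note that the inverse-topological order of $\S$'s pending list, which you invoke, resolves a \emph{different} issue (ensuring a blue node's label is extracted before its own pre-label is queried); it does nothing to pin down the positions of the gaps in a canonical-order table.

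The paper sidesteps this by not using canonical order for the queried part of the table: it outputs the oracle responses \emph{in the order $\S$ queries them} (excluding responses to blue pre-labels, which the decoder computes for free), followed by all remaining oracle values in lexicographic order. The first part can be consumed sequentially by the decoder as it re-runs $\S$, with no indexing needed; the second part is read only after $\S$ terminates, by which time all blue pre-labels are known and the lexicographic layout is unambiguous. If you replace your canonical-order table by this split, the rest of your argument goes through and the proof closes.
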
 
\begin{proof}
  We show that there exists an encoder (and corresponding decoder) algorithm that using \( \A \)
  is able to compress a random function from \(H\), as long as the size of
  \(B\) is greater than \(\left\lceil \frac{M}{w} \right\rceil \), i.e.\ the
  function is not in \(H_\mathsf{good}\). 
  Then we apply \Cref{enc-lemma} to
  obtain an upper bound for the number of functions for which this is
  possible.

  The encoder works as follows: for a function \(h\), first run \(\A_0\) to
  obtain \( \sigma \). Second, run \( \S \) with input \( \sigma \) and keep
  track of its outputs. As the adversary is graph-restricted, all these
  outputs are either tuples that correspond to labels of challenge nodes, or
  pre-labels. In both cases, store the blue nodes in order. Once \(S\)
  finishes, if the number of blue nodes stored is less than or equal \(
  \left\lceil \frac{M}{w} \right\rceil \), output nothing. Else, output
  \(\sigma \), the responses \(c\) to all oracles calls made by \( \S \) in
  order (except the ones associated with blue nodes), and all the remaining
  oracle values \(c'\) (not asked by \(S\), nor labels of blue nodes) in
  lexicographic order. Note that the labels of blue nodes are not stored
  explicitly but encoded in \( \sigma \). 

  The decoder works as follows: given \(\sigma \), \(c\), \(c'\), it will
  output the whole function table for \(h \). First, it executes \( \S \) with
  input \(\sigma \). When \(\S \) makes an output, as the adversary is
  graph-restricted, the decoder can deduce the labels of the blue nodes
  associated to it (if any), and store these values. When \(\S \) makes an
  oracle call, if the response to the oracle call is known (because it is the
  label of a node that was stored before, or is a repeated call), then respond
  with that value. Else respond with the next value from \(c\). When \(S\)
  finishes, read all the remaining values in the input \(c'\). At this point
  the decoder knows the whole table for \(h\), and outputs it.

  The size of the encoding is exactly \(M + \log \abs{H} - \abs{B} \cdot w\),
  and it is correct with probability \(\delta \), which is the proportion of
  functions \(h\) such as the number of blue nodes is more than \(\left\lceil
  \frac{M}{w} \right\rceil \) (oracles not in \( H_\mathsf{good} \)). From
  \Cref{enc-lemma} we deduce \( M + \log \abs{H} - \abs{B} \cdot w \geq \log
  \abs{H} + \log(\delta) \) which, together with \(\abs{B} > \left\lceil
  \frac{M}{w} \right\rceil \), implies \( \delta \leq 2^{-w} \). We conclude
  that \(\abs{H_\mathsf{good}} = (1 - \delta) \cdot \abs{H} \geq (1 - 2^{-w} )
  \cdot \abs{H}\).
\end{proof}

Putting together \Cref{lemma:countingrestricted}, \Cref{def:good} for good oracles and \Cref{lemma:dr}, we obtain immediately the statement of \Cref{bound-one-round} for graph-restricted adversaries. The proof for general adversaries is similar, and can be found in the appendices. The main difference is that the decoder will need some additional advice to detect where the blue nodes are, resulting in worse bounds when applying \Cref{enc-lemma}.

\section{Depth-robust graphs that can be labelled in-place}\label{sec:graphs}

In this section we look for a family of \textbf{depth-robust} graphs w.r.t.\ a
subset of nodes that can be labelled \textbf{in-place} w.r.t.\ the same subset
of nodes. Although some classes of graphs from previous works,
e.g.~\cite{Karvelas2014,Ateniese2014}, satisfy pebbling-hardness notions related
to depth-robustness and can be labelled in place, they don't satisfy the strong
depth-robustness property that we require. 
We therefore propose a new construction. Our construction
is based on the graph in~\cite{Schnitger1983}, where a simple and recursively
constructible depth-robust graph w.r.t.\ edge deletions was proposed.
The recent results in~\cite{Blocki2021} showed that it is possible in general to
transform an edge depth-robust graph into a vertex depth-robust graph where only
one single long path is guaranteed to exist after node deletions. We require 
several long paths, though. Inspired by these results, although using a
different (and ad-hoc) transformation on the graph from~\cite{Schnitger1983}, we
designed a new class of graphs with the properties needed by the protocol in
\Cref{sec:protocol-graph}. The main result of this section is the following:

\begin{theorem}\label{thm:depth-robust-graph} There exists a family of DAGs \(
\set{G_n} \) where \(G_{n+1}\) has \( \O( 2 ^ n \cdot n^2) \) nodes and
 a subset of nodes \( O(G_{n+1}) \subset{V(G_{n+1})} \) of size \( 2^n \)
 such that:
\begin{enumerate} 
\item \( G_{n+1} \) is \( (2^n, 2^n) \)-\textbf{depth-robust} w.r.t.\ 
\( O(G_{n+1}) \);
\item \( G_{n+1} \) can be labelled \textbf{in-place} w.r.t.\ 
\( O(G_{n+1}) \). 
\end{enumerate}
\end{theorem}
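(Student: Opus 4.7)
The plan is to construct the family $\{G_n\}$ recursively, following the schema suggested by Figure \ref{main-graph}: $G_{n+1}$ is built from one copy of a \emph{connector} $H_n$ (that routes between a left block $G_n$ and a right block $H_{n-1}$), plus two further copies of $G_n$ glued onto the output side of $H_{n-1}$ via the $\tr$ operator. I would first fix explicit definitions of $H_n$ and $\tr$ (the natural candidates are Schnitger-style superconcentrators/butterfly-like connectors for $H_n$, and $\tr$ as a node-by-node serial composition that identifies the outputs of the left operand with the inputs of the right). The base case $G_1$ is a small constant-size DAG with $O(G_1)$ of size $2$. The designated set $O(G_{n+1})$ is the set of $2^n$ output nodes inherited from the two rightmost copies of $G_n$. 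A straightforward induction then gives the size bound: letting $|V(G_{n+1})| = 2\,|V(G_n)| + |V(H_n)| + |V(H_{n-1})|$ and using $|V(H_n)| = O(2^n\cdot n)$ for the connector yields $|V(G_{n+1})| = O(2^n\cdot n^2)$.

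For the depth-robustness claim I would argue by induction on $n$. Assume $G_n$ is $(2^{n-1}, 2^{n-1})$-depth-robust w.r.t.\ its $2^{n-1}$ outputs, and fix any $R\subseteq V(G_{n+1})$ with $|R|<2^n$. Partition $R$ among the four main components: the left $G_n$, the connector $H_n$, the middle $H_{n-1}$, and the two right copies $G_n^{L}, G_n^{R}$. The two right copies supply $2^n$ outputs in total; the induction hypothesis gives, in each right copy, at least $2^{n-1} - |R\cap G_n^{\star}|$ output nodes reachable by a path of length $\geq 2^{n-1}$ internal to that copy. Summed over the two copies this is at least $2^n - |R\cap(G_n^{L}\cup G_n^{R})|$ surviving outputs. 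For each such surviving output $v$, I would extend its internal path backwards through (the surviving portion of) $H_{n-1}$, $H_n$, and the left $G_n$, using the connector/superconcentrator property of $H_n$ (few node deletions cannot disconnect enough input-output pairs) together with the inductive depth-robustness of the left $G_n$. Chaining these subpaths produces a path of length at least $2^{n-1}+2^{n-1}=2^n$ ending at $v$, and the count of such $v$ is at least $2^n-|R|$, which is exactly what Definition~\ref{def:depth-robust} requires.

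For the in-place labelling property, the plan is to exhibit an explicit labelling algorithm that outputs the labels of $O(G_{n+1})$ using $2^n\cdot w + O(w)$ bits. The key idea is that in the recursive layout, labels of intermediate nodes in $H_n$, $H_{n-1}$, and the left $G_n$ need not be kept alive once they have been consumed: process the construction in topological order and overwrite intermediate labels in place, reusing a single $O(w)$-size scratch area for the current node being computed, and writing a label into the $O(G_{n+1})$-buffer only when a node of $O(G_{n+1})$ has been finalised. Concretely, one first labels the left $G_n$ inside the same $2^{n-1}\cdot w$ region that will later hold the outputs of $G_n^L$, streams these through $H_n$ (a layered connector can be traversed column-by-column with $O(w)$ extra memory), then streams through $H_{n-1}$ to feed the two right $G_n$ copies; an inductive in-place argument on each copy finishes the job, writing their outputs into their pre-allocated halves of the $2^n\cdot w$ output buffer.

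The main obstacle is the depth-robustness induction, specifically producing \emph{many} long paths rather than one. The classical Schnitger/Blocki-style arguments give a single surviving long path after deletions; here I must show $2^n-|R|$ long paths terminating at distinct outputs simultaneously. The crucial step is therefore the combinatorial lemma about $H_n$: for every sufficiently small $R\subseteq V(H_n)$, and for every output $v$ of $H_n$ not killed by $R$, there is an input of $H_n$ from which a length-$\Omega(2^n)$ path reaches $v$ in $H_n\setminus R$. Designing $H_n$ so that this stronger routing property holds with only $O(2^n\cdot n)$ nodes, and integrating the counts carefully across the four components so that the deletion budget is charged only once, is where the argument is most delicate and where I would spend the most care.
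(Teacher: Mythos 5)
Your construction has the right outline but the recursion is slightly miscounted: in the paper, $G_{n+1}$ consists of one left copy of $G_n$, one connector $H_n$, and one right copy of $G_n$ (not two), glued via a recursive operator $\tr$ that inserts further connectors $H_{n-1},H_{n-2},\dots$ at every scale. Consequently $|V(G_{n+1})| = 2|V(G_n)| + \sum_{i\le n}|H_i|$, not $2|V(G_n)| + |V(H_n)| + |V(H_{n-1})|$; the missing connector layers are exactly what gives the $\O(2^n n^2)$ count. More seriously, the depth-robustness argument as you have sketched it has a genuine gap, and you have pinned the difficulty on the wrong object. You say the "crucial step is a combinatorial lemma about $H_n$: for every small $R$ and every surviving output $v$ of $H_n$, some input reaches $v$ by a path of length $\Omega(2^n)$ in $H_n\setminus R$." That is not the route the paper takes, and it would be false for the butterfly-style connectors actually used: those have depth only $\O(n)$, so no path inside $H_n$ can have length $\Omega(2^n)$. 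The length $2^n$ in the theorem does not come from wandering inside a connector; it comes from a single path that visits many \emph{base nodes} of $G_{n+1}$, where the base nodes serve as length markers, and the connectors only need a pure disconnection bound (few removed nodes can separate at most that many input-output pairs, proved via König's theorem applied to the matching/vertex-cover structure of the connector).

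Concretely, the paper proves an intermediate long-path theorem rather than running the depth-robustness induction directly: after deleting $R$ with $|R|<2^n$, there is one path in $G_n\setminus R$ containing at least $2^n - |R|$ base nodes, and depth-robustness then falls out by counting suffixes of this path ending in $\Base(\RS(G_{n+1}))$. Crucially, the induction that yields the long-path theorem carries a strengthened hypothesis — the base nodes hit by the path inside $\RS(\LS^k(G_n))$ are also reachable from $\In(\CS(\LS^k(G_n)))$ for every $k$ — because that reachability is precisely what allows a path in the left subgraph to be spliced, through the surviving part of a connector, onto a path in the right subgraph. Your plan of "extending the internal path backwards through the surviving portions of $H_{n-1}$, $H_n$, and the left $G_n$" is exactly this splicing step, but without the strengthened inductive invariant (and without first reducing, via the König argument, to the case where all deletions lie on connector input/output boundaries) you have no way to guarantee that a surviving output of the right copy is actually reachable from a surviving path on the left. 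That missing invariant is the core of the paper's proof and is absent from your sketch.
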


If \( \Gamma \) is a graph, we say that a graph \( G \) is a
fresh copy of \( \Gamma \) if it is isomorphic and disjoint, and denote it by \(G \cong \Gamma \). 
Let \(L_n\) be a list of \(2^n\) nodes. For any DAG \(G\), let \(\In(G)\),
\(\Out(G)\) its input and output nodes respectively. If \(Y\) and \(Z\) are two
disjoint lists of nodes of the same size, denote by \(X = Y \colon Z\) the
concatenation of both lists. If \(X = \sequence{x_1, \ldots, x_n}\) and \(Y =
\sequence{y_1, \ldots, y_n}\)\ are lists of nodes of size \(n\), let \(X \to Y
= \set{(x_i, y_i) \mid 1 \leq i \leq n} \).

In our construction we use graph with high connectivity, the so-called connectors.
\begin{definition}[Connector]\label{def-connector} A directed acyclic graph with bounded
 indegree, \(n\) inputs, and \(n\) outputs is an \(n\)-connector if for any \(
 1 \leq k \leq n\) and any sequences \((s'_1, \ldots, s'_k) \) of
 inputs and \((t'_1, \ldots, t'_k) \) of outputs there are \(k\) vertex-disjoint paths connecting each \(s'_i\) to the corresponding \(t'_i\).
\end{definition}
 Next, we introduce notation for
 connecting two lists of nodes using a connector graph. Let \(H_n\) be a \(2^n\)-connector. Given two lists of nodes
 \(I\) and \(O\) of size \(2^n\), and a fresh copy of \(H_n\) named \(C\),
 we denote by \(I \harrow O\) the graph with nodes \(I \cup O \cup V(C) \) and
 edges \(E(C) \cup (I \to \In(C)) \cup (\Out(C) \to O) \).

Now we are ready to construct our graph family. For each graph \( G_n \) from
our family, we also define a list of so-called base nodes, which will serve for
connecting \( G_n \) with other graphs in order to construct the graph \(
G_{n+1} \). We define \(G_0\) to be the graph formed by a single node. Then,
\(G_{n + 1}\) is constructed from two copies of \(G_n\) and a copy of \(H_n\),
which we connect with a set of additional edges. First, every base node in the
first copy of \(G_n\) is connected through an edge to a corresponding input node
in the copy of \(H_n\). Second, the copy of \(H_n\) is connected to the second
copy of \(G_n\) through a recursive edge construction described below and illustrated in \Cref{main-graph}. Formally, we have: 
\begin{itemize}
 \item \(G_0\): \(V(G_0) = \set{v}\), \( E(G_0) =
 \emptyset \) and \(\Base(G_0) = \sequence{v}\). 
 \item \(G_n\) contains three components \((L, C, R)\), denoted by
 \(\LS(G_n)\), \(\CS(G_n)\), \( \RS(G_n) \), where \(L \cong R \cong G_{n - 1}\), and \(C \cong H_{n - 1}\). We let \(
 \Base(G_n) = \Base(L) \colon \Base(R) \) denote the base vertices.
 \item The operator \( \tr \) denotes a recursively defined subgraph, 
 \( X \tr G_0 = X \to \Base(G_0)\), and:
 \[ 
  (X_1 : X_2) \tr Y = (X_1 \tr \LS(Y)) \cup (X_2
  \harrow \In(\CS(Y))) 
 \]
 The vertices of \( G_n \) are:
  \[ 
  V(G_n) = V(L) \cup V(C) \cup V(R) \cup V(\Out(C) \tr R)
 \]
 The edges of \(G_n\) include the edges
 in the components \(L, C, R\), plus two new sets of edges: the first from
 \(\Base(L) \) to \(\In(C) \), and the second from \( \Out(C) \) to \(R\):
 \[
  \begin{split}
   E(X) & = E(L) \cup E(C) \cup E(R) \\
   & \cup (\Base(L) \to \In(C)) \cup E(\Out(C) \tr R)
  \end{split}
  \]
\end{itemize}

\graphconstruction{}

The next result is essential for our construction. It states that if some
nodes are removed from a connector with the objective of disconnecting
input/output pairs, then it is always possible to do so by just removing inputs
or outputs. For a set of nodes \(R\), let \(\Con(R)\) be all pairs of nodes
\((v, w)\), where \(v\) is an input and \(w\) is an output, such that all paths
from \(v\) to \(w\) contain at least one node in \(R\).

\begin{lemma}\label{connector-lemma} Let \(C_n\) be an \(n\)-connector and \(R
 \subset V(C_n)\) be a subset of its nodes with \(\abs{R} < n\). Then there
 exists a subset of nodes \(R'\) containing only input and output nodes, such
 that \(\abs{R'} \leq \abs{R} \) and \(\Con(R) \subseteq \Con(R')\).
\end{lemma}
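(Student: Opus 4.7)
The plan is to replace the internal (neither input nor output) nodes of \( R \) with an equal-or-smaller number of input/output nodes via König's theorem for bipartite graphs, while keeping the input and output nodes of \( R \) unchanged. First I would partition \( R \) into \( R_I = R \cap \In(C_n) \), \( R_O = R \cap \Out(C_n) \) and \( R_M = R \setminus (R_I \cup R_O) \), and set \( I' = \In(C_n) \setminus R_I \), \( O' = \Out(C_n) \setminus R_O \). The key observation is that any path in \( C_n \) from an input to an output can only contain inputs at its start and outputs at its end (inputs have indegree \( 0 \), outputs have outdegree \( 0 \)). Consequently, whenever \( u \in I' \), \( v \in O' \) and \( (u, v) \in \Con(R) \), every \( u \)-to-\( v \) path must pass through a node of \( R_M \).

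Next I would form a bipartite graph \( B \) on parts \( I' \) and \( O' \), with an edge \( \{u, v\} \) whenever \( (u, v) \in \Con(R) \), and bound its maximum matching by \( |R_M| \). Indeed, given a matching \( \{(u_1, v_1), \dots, (u_{k'}, v_{k'})\} \) in \( B \), the connector property of \( C_n \) (applicable since \( k' \leq \min(|I'|, |O'|) \leq n \)) yields \( k' \) pairwise vertex-disjoint paths from \( u_i \) to \( v_i \) in \( C_n \); by the observation each such path must contain a node of \( R_M \), and vertex-disjointness forces these nodes to be pairwise distinct, so \( k' \leq |R_M| \).

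Then I would apply König's theorem to obtain a vertex cover \( R'' \) of \( B \) of size at most \( |R_M| \), and set \( R' = R_I \cup R_O \cup R'' \). This set contains only input and output nodes, and \( |R'| \leq |R_I| + |R_O| + |R_M| = |R| \). To verify \( \Con(R) \subseteq \Con(R') \), take any \( (u, v) \in \Con(R) \): if \( u \in R_I \) or \( v \in R_O \) then \( (u, v) \in \Con(R') \) trivially; otherwise \( (u, v) \) is an edge of \( B \), so \( R'' \) covers it, and again \( (u, v) \in \Con(R') \).

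The main technical step I expect to require care is the matching bound in the second paragraph: one must precisely invoke the connector property with the correct sequences of endpoints and rule out that the \( R_M \)-nodes on different disjoint paths could coincide. The rest is essentially bookkeeping, relying on the structural fact that inputs (respectively outputs) cannot appear as internal vertices of a directed path in a DAG.
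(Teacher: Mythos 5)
Your proof is correct and follows essentially the same route as the paper's: build a bipartite graph on inputs and outputs whose edges record the disconnected pairs in \(\Con(R)\), bound the maximum matching via the connector property and vertex-disjointness, invoke König's theorem to get a small vertex cover, and take that cover as \(R'\). Your only departure is a cosmetic one --- you split off \(R_I\) and \(R_O\) first and apply König only to account for \(R_M\), whereas the paper runs the argument directly on the full bipartite graph of all inputs/outputs and bounds the matching by \(|R|\); both lead to the same set \(R'\) and the same size bound.
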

\begin{proof}
 Consider a bipartite graph \(B_R\) with \(n\) nodes on each side
 such that two
 nodes are connected if and only if the corresponding pair is in \(\Con(R)\). Consider a
 maximum matching \(M\) and the minimum vertex cover \(VC\) in this graph. From
 König's
 theorem~\cite{Konig1931}
 we deduce that \(\abs{M} = \abs{VC}\). Moreover, from the connector property
 of \(C_n\) we deduce that \(\abs{M} \leq \abs{R}\): otherwise, using the
 connector property between the sets of nodes in the matching, we would find
 \(\abs{R} + 1\) disjoint paths from the input to the output, and those cannot
 be covered by the nodes in \(R\). Then, \(\abs{VC} \leq \abs{R}\). Consider
 the set \(R' = VC\). If these nodes are removed from \(C_n\), consider the
 bipartite graph \(B_{R'}\) (analogous to \(B_{R}\)). Clearly \(B_R\) is a
 subgraph of \(B_{R'}\) by the vertex cover property. This implies that
 \(\Con(R) \subseteq \Con(R')\), as claimed.
\end{proof}

To prove the \textbf{depth-robustness} of \(G_n\), we will prove first that
after removing some nodes, there remains a long path containing nodes from
\(\Base(G_n)\).

\begin{restatable}{theorem}{theoremlongpath}\label{lab:theoremlongpath}
 Consider \(G_n\) and any subset of nodes \(R \subset V(G_n)\), with \(\abs{R}
 < 2^{n} \). There is a path \(L\) in \(G_n \setminus R\) such that
 \(\abs{V(L) \cap \Base(G_n)} \geq 2^{n} - \abs{R}\).
\end{restatable}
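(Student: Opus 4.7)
The plan is to prove the theorem by induction on $n$. The base case $n=0$ is immediate: $|R|<1$ forces $R=\emptyset$, and the single node of $G_0$ is a path with $|V(L)\cap \Base(G_0)|=1=2^0-|R|$. For the inductive step, write the three top-level components of $G_n$ as $\LS(G_n),\CS(G_n),\RS(G_n)$, with $\LS(G_n),\RS(G_n)\cong G_{n-1}$ and $\CS(G_n)\cong H_{n-1}$. Partition $R$ into $R_L,R_C,R_r,R_\tau$ according to whether a removed node lies in $V(\LS(G_n))$, $V(\CS(G_n))$, $V(\RS(G_n))$, or among the fresh nodes introduced by $\Out(\CS(G_n))\tr\RS(G_n)$, and set $a=|R_L|$, $c=|R_C|$, $b=|R_r|$, $d=|R_\tau|$, so $a+b+c+d<2^n$. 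I first dispatch the lopsided regimes with a single sub-path: if $a\ge 2^{n-1}$ then $b<2^{n-1}$, so the induction hypothesis applied to $\RS(G_n)\setminus R_r$ yields a path visiting at least $2^{n-1}-b$ base nodes of $\RS(G_n)\subseteq\Base(G_n)$, which is already $\ge 2^n-|R|$; the regimes $b\ge 2^{n-1}$ and $c+d\ge 2^{n-1}$ are handled analogously with the induction hypothesis path in $\LS(G_n)$ or $\RS(G_n)$ alone.

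In the remaining regime $a,b<2^{n-1}$ and $c+d<2^{n-1}$, I apply the induction hypothesis to $\LS(G_n)\setminus R_L$ and to $\RS(G_n)\setminus R_r$, obtaining paths $P_L$ and $P_r$ visiting at least $2^{n-1}-a$ and $2^{n-1}-b$ base nodes respectively. The task is to splice them into one path through $\CS(G_n)$ and the $\tr$-structure while losing at most $c+d$ further base nodes. Since $c<2^{n-1}$, \Cref{connector-lemma} lets me replace $R_C$ by an equivalent set $R_C'\subseteq\In(\CS(G_n))\cup\Out(\CS(G_n))$ of size at most $c$; applying the same lemma recursively inside each sub-connector of $\Out(\CS(G_n))\tr\RS(G_n)$ replaces $R_\tau$ by input/output removals in those sub-connectors of total size at most $d$. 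Using the bijection $\Base(\LS(G_n))\to\In(\CS(G_n))$, I pick the endpoint $b_L^*$ of $P_L$ among the bases visited by $P_L$ whose corresponding $\CS(G_n)$-input survives, and I pick the entry point into $\RS(G_n)$ along $P_r$ whose $\CS(G_n)$-output and sub-connector traversals all survive; a short arithmetic sub-case handles when either pool empties (which forces $a+c\ge 2^{n-1}$ or $b+d\ge 2^{n-1}$, so that $P_r$ or $P_L$ alone already meets the bound). Otherwise the connector property of \Cref{def-connector} supplies the $\CS(G_n)$-internal traversal, and the concatenated walk visits at least $(2^{n-1}-a)+(2^{n-1}-b)-(c+d)=2^n-|R|$ base nodes.

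The principal obstacle is the recursive $\tr$ operator: $\Out(\CS(G_n))\tr\RS(G_n)$ is not a single connector but a tree of $H$-connectors injecting signals at nested depths of $\RS(G_n)$'s recursion, and each base node of $\RS(G_n)$ has a unique route through this tree. To justify charging each removed sub-connector node to at most one lost base-node candidate, I plan an auxiliary induction on the depth of the $\tr$-decomposition asserting that, for every sub-instance $X\tr Y$ with fewer than $|X|$ removals inside, every surviving base node of $Y$ remains reachable from some surviving element of $X$ through the sub-connectors. Combining this auxiliary claim with \Cref{connector-lemma} and the outer induction on $n$ yields the theorem.
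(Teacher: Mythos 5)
Your overall architecture — outer induction on $n$, lopsided-regime dispatch, reduction of connector removals to their inputs and outputs via \Cref{connector-lemma}, then splicing $P_L$ and $P_r$ through $\CS(G_n)$ and the $\tr$-structure — matches the paper's strategy, and your lopsided-case arithmetic is correct. But there is a genuine gap at the step you yourself flag as the principal obstacle. Your plan is to apply the \emph{plain} induction hypothesis to $\RS(G_n)\setminus R_r$, obtain $P_r$, and then use an auxiliary induction over the $\tr$-decomposition to argue that surviving base nodes of $\RS(G_n)$ are reachable from $\Out(\CS(G_n))$. That auxiliary claim cannot be proved in isolation from $\RS(G_n)$'s interior. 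The $\tr$-tree delivers edges only into $\In(\CS(\LS^j(\RS(G_n))))$ for $j=0,\ldots,n-2$ (and one direct edge into $\Base(G_0)$); from those input vertices the route to a base node $v\in\Base(\RS(\LS^j(\RS(G_n))))$ passes through the interior connectors and $\tr$-structures of $\RS(G_n)$, which $R_r$ may have cut. So "every surviving base node of $Y$ remains reachable from some surviving element of $X$ through the sub-connectors" is false as stated: you can delete nodes strictly inside $Y$ (not in any $\tr$-sub-connector) and disconnect a surviving base node from all of $X$ while touching nothing your auxiliary induction looks at.

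The paper closes exactly this gap by \emph{strengthening the induction hypothesis on the theorem itself}: it proves that one can choose the path $L$ so that, for every $k$, the base nodes $V(L)\cap\Base(\RS(\LS^k(G_n)))$ are moreover reachable from $\In(\CS(\LS^k(G_n)))$. With that invariant, each base node of $P_r$ at nesting depth $k$ comes with a witness $w\in\In(\CS(G_v))$, and the remaining question — whether $w$ is reachable from $\Out(\CS(G_n))$ — really is a pure $\tr$-connector question, amenable to the counting argument you envisage. Without the strengthened IH you have no control over which base nodes $P_r$ passes through, and the plain IH gives you no hook to repair this after the fact. So the fix is not a separate auxiliary induction over the $\tr$-decomposition but a stronger statement carried through the main induction; the rest of your outline then goes through essentially as the paper's does.
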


\begin{proof}[Proof sketch]
 First, by using \Cref{connector-lemma} we reduce the problem by showing that
 if there is a counterexample set \(R\), then there is also a counterexample
 set \(R'\), but for which no node removed is strictly inside any connector
 graph, i.e.\ they must be part of the input or output. To conclude, we prove by
 induction a stronger statement, in which not only the path \(L\) exists, but
 the nodes in \(V(L) \cap \Base(\RS(\LS^k(G_n))) \) also must be reachable from the nodes in \(\In(\CS(\LS^k(G_{n})))\)
 for \(0 \leq k \leq n\). This property makes possible to glue together
 paths while proving the induction hypothesis. 
 For the full proof see \Cref{appendix:graph}.
\end{proof}

\begin{corollary}\label{col:main}
 \(G_{n+1}\) is \((2^{n}, 2^{n})\)-\textbf{depth-robust} w.r.t. \( \Base(\RS(G_{n+1})) \)
\end{corollary}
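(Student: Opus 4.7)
The plan is to derive the corollary as a direct consequence of \Cref{lab:theoremlongpath} applied to \(G_{n+1}\), together with the observation that, by construction, any directed path in \(G_{n+1}\) visits the base vertices of \(\LS(G_{n+1})\) strictly before those of \(\RS(G_{n+1})\).

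First, I would fix any \(R \subset V(G_{n+1})\) with \(|R| < 2^{n}\) and invoke \Cref{lab:theoremlongpath} on \(G_{n+1}\) (applying the theorem with index \(n+1\) in place of \(n\)). Since \(|R| < 2^{n} < 2^{n+1}\), the theorem delivers a path \(L\) in \(G_{n+1}\setminus R\) with
\[
|V(L)\cap \Base(G_{n+1})| \;\geq\; 2^{n+1}-|R|.
\]
Recall that \(\Base(G_{n+1}) = \Base(\LS(G_{n+1})) \colon \Base(\RS(G_{n+1}))\), where each half has size \(2^{n}\). Write \(a = |V(L)\cap \Base(\LS(G_{n+1}))|\) and \(b = |V(L)\cap \Base(\RS(G_{n+1}))|\). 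Since \(a \leq 2^{n}\), we conclude \(b \geq 2^{n}-|R|\).

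Next, I would argue about the order in which \(L\) meets these base nodes. By inspection of the recursive construction, the only edges leaving \(\LS(G_{n+1})\) go through \(\Base(\LS(G_{n+1})) \to \In(\CS(G_{n+1}))\), then through the connector \(\CS(G_{n+1})\), and only then into \(\RS(G_{n+1})\) through the \(\tr\) operator. Since the whole graph is a DAG, this means that on the directed path \(L\), every vertex of \(\LS(G_{n+1})\) that occurs, occurs before every vertex of \(\RS(G_{n+1})\). In particular, all \(a\) base nodes of \(\LS(G_{n+1})\) on \(L\) come before all \(b\) base nodes of \(\RS(G_{n+1})\) on \(L\).

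Finally, I would extract the desired family of witness paths. Enumerate the \(\RS\)-bases on \(L\) along the path as \(v_1,\ldots,v_b\), and let \(L_k\) denote the prefix of \(L\) ending at \(v_k\). Each \(L_k\) is a path in \(G_{n+1}\setminus R\) ending at \(v_k\in \Base(\RS(G_{n+1}))\) and contains at least \(a+k\) base vertices of \(G_{n+1}\) (all \(a\) from \(\LS\), plus \(v_1,\ldots,v_k\)), so its length is at least \(a+k-1\) (in fact strictly larger once one counts the connector vertices on the way). Selecting those \(v_k\) with \(k \geq 2^{n}-a\) yields at least \(b-(2^{n}-a)+1 = a+b-2^{n}+1 \geq 2^{n}-|R|+1\) vertices in \(\Base(\RS(G_{n+1}))\), each admitting a path of length \(\geq 2^{n}\) in \(G_{n+1}\setminus R\). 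Taking \(O'\) to be any \(2^{n}-|R|\) of these establishes the \((2^{n},2^{n})\)-depth-robustness of \(G_{n+1}\) with respect to \(\Base(\RS(G_{n+1}))\).

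The only delicate point is the bookkeeping between ``length'' measured as edges versus vertices and the precise count \(a+b \geq 2^{n+1}-|R|\); the slack of \(+1\) in the calculation above (and the unavoidable presence of connector vertices between \(\LS\) and \(\RS\) base nodes in each prefix) gives the needed \(2^{n}\) comfortably, so no separate case analysis is required.
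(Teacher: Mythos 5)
Your proof is correct and follows the same route as the paper's: apply \Cref{lab:theoremlongpath} to obtain the long path \(L\), observe that at least \(2^n - |R|\) of the base nodes on \(L\) must lie in \(\Base(\RS(G_{n+1}))\), and then take suitably ``late'' prefixes of \(L\) as the witness paths. Where you add value is in making explicit the ordering fact that all vertices of \(\LS(G_{n+1})\) on \(L\) precede all vertices of \(\RS(G_{n+1})\) on \(L\); the paper's sentence about the ``\(2^n - |R|\) rightmost elements in \(M\)'' having \(\geq 2^n\) predecessors only works because of this ordering, and the paper leaves it unstated. Your justification of the ordering (the only edges out of \(\LS\) go through \(\Base(\LS) \to \In(\CS)\), then through \(\CS\), then into \(\RS\), with no edges back) is exactly the right observation.

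The one place your bookkeeping diverges is the choice of prefix: with \(k \geq 2^n - a\) the prefix \(L_k\) contains \(a+k \geq 2^n\) base vertices, hence at least \(2^n - 1\) edges between them, and you appeal to the extra connector vertices to close the remaining gap of one. The paper's implicit count is cleaner: taking the \(2^n - |R|\) rightmost elements of \(M\) (equivalently, \(k \geq 2^n - a + 1\) in your notation, once you note that in the worst case \(a + b = 2^{n+1} - |R|\)) gives each of them at least \(2^n\) base-node predecessors on \(L\), hence a prefix of length \(\geq 2^n\) directly, with no appeal to connector vertices. Your connector argument is sound here (since \(a \geq 2^n - |R| > 0\) and \(b > 0\), the path must traverse at least one vertex of \(\CS(G_{n+1})\)), so there is no gap; it is just a slightly more roundabout way of absorbing the off-by-one, as you yourself flag.
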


\begin{proof}
 Consider a set \(R\) of size less than \(2^n\) and the path \(L\) given by the
 previous theorem (the theorem would be applicable even if \(2^n
 \leq \abs{R} < 2^{n + 1}\), but we are only using a restricted result). Then
 \(V(L) \cap \Base(G_{n+1}) \geq 2^{n + 1} - \abs{R}\). Let \(M = V(L) \cap
 \Base(\RS(G_{n+1}))\). Then \(\abs{M} \geq 2^{n} - \abs{R}\) given:
 \[
  \begin{split}
   & 2^n + \abs{M} = \abs{\Base(\LS(G_{n + 1}))} + \abs{M} \\
   & \geq \abs{V(L) \cap \Base(G_{n+1})} \geq 2^{n + 1} - \abs{R} 
  \end{split}
 \]
 We conclude that the \(2^{n} - \abs{R}\) rightmost elements (with respect to
 \(L\)) in \(M\) contain each at least \(2^{n + 1} - \abs{R} - (2^{n} -
 \abs{R}) = 2^n\) predecessors in \(L\), as claimed.
\end{proof}

As connectors we use the butterfly family of graphs~\cite{Blocki2021}, which can
be easily labelled \textbf{in-place}, and allow the next result to hold:
\begin{lemma}\label{lemma:inplace} 
  \(G_{n+1}\) can be labelled \textbf{in-place} with respect to \(\Base(G_{n+1})\), and also with respect to \(\Base(\RS(G_{n+1}))\), in \( \O( 2 ^ {n+1}
  \cdot {(n+1)}^2) \) time.
\end{lemma}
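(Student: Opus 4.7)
I would prove this by strong induction on $n$, strengthening the statement to cover the case where $G_n$ receives external inputs via the recursive $\tr$ structure. Concretely, the inductive hypothesis says: a memory block of size $2^n \cdot w + \O(w)$ bits initialized either with no external inputs (treating $G_n$ in isolation) or with an external input vector $X$ of $2^n$ words (connected to $G_n$ via $X \tr G_n$) can be overwritten with the labels of $\Base(G_n)$ in $\O(2^n n^2)$ time. The main claim then follows: for $\Base(G_{n+1})$, split the $2^{n+1}$-word memory into halves and run the procedure recursively on $\LS(G_{n+1})$ and $\RS(G_{n+1})$; for $\Base(\RS(G_{n+1}))$, reuse the single memory block of $2^n$ words, since the base labels of $\LS(G_{n+1})$ are not required in the output.

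The algorithm mirrors the recursive $\tr$ decomposition. Write the memory as $X_1 : X_2$ with halves of size $2^{n-1}$, and let $L, C, R$ be the components of $G_n$ with $L \cong R \cong G_{n-1}$ and $C \cong H_{n-1}$. The steps are: (i) recursively label $L$ (with external input $X_1$ if present), overwriting the first half with $\Base(L)$; (ii) if external input is present, label the connector between $X_2$ and $\In(C)$ in place in the second half, overwriting $X_2$ with the connector's output labels; (iii) for each index $i$, compute $\In(C)[i]$ by combining $\Base(L)[i]$ from the first half with the corresponding extra parent of $\In(C)[i]$ from the second half (if the external connector was present), and overwrite the $i$-th slot of the second half with this label; (iv) label $C$ in place in the second half, producing $\Out(C)$; (v) recursively label $R$ with external input equal to the contents of the second half. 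After step (v), the memory block contains $\Base(L) : \Base(R) = \Base(G_n)$. The case $\Base(\RS(G_{n+1}))$ is handled similarly, but we discard $\Base(L)$ after it has been consumed in step (iii), so the whole procedure fits in a single block of $2^n$ words.

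The main obstacle is the careful memory accounting together with establishing that every butterfly $H_k$ used in the construction admits an in-place labelling that overwrites its input layer with its output layer using only $\O(w)$ auxiliary space. This property follows from the layered structure of the butterfly family of~\cite{Blocki2021}: the graph has $k$ layers of $2^k$ nodes each, where each layer's labels depend only on the preceding one, so they can be computed sequentially in situ. The recursion does not blow up the memory budget because, at each depth, the algorithm restricts its attention to a specific sub-range of the memory block, and the nesting can be unrolled iteratively with $\O(w)$ state (a counter plus a pointer pair) to avoid any call-stack overhead. For the time complexity, each recursive level performs at most two butterfly labellings of size $2^{n-1}$, taking $\O(2^{n-1} n)$ time each, plus two recursive calls, yielding the recurrence $T(n) = 2T(n-1) + \O(2^n n)$, which solves to $T(n) = \O(2^n n^2)$ and gives the claimed $\O(2^{n+1}(n+1)^2)$ bound for $G_{n+1}$.
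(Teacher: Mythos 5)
Your proof is correct and follows essentially the same recursive decomposition the paper uses (label $\LS$, then the connector feeding $\In(\CS)$, then $\CS$ in place, then recurse on $\RS$). The one genuine improvement you make over the paper's sketch is to state the strengthened induction hypothesis explicitly: you parameterise the in-place labelling claim by an optional vector $X$ of external inputs attached via $X \tr G_n$. This is actually needed, because $\RS(G_{n+1})$ is not $G_n$ in isolation — its internal nodes $\In(\CS(\LS^k(\RS)))$ (for various $k$) receive extra predecessors through $\Out(\CS(G_{n+1})) \tr \RS(G_{n+1})$ — so the unstrengthened statement of the lemma does not directly supply a labelling routine for $\RS$. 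The paper's sketch glosses over this by silently treating the recursive call to $\RS$ as if the plain induction hypothesis applied; your formulation closes that gap while keeping the same memory budget ($2^n \cdot w + \O(w)$ per half, with $\In(\CS)[i]$ overwritten in place from $\Base(\LS)[i]$ in the second-claim case) and the same recurrence $T(n)=2T(n-1)+\O(2^n n)$ giving the $\O(2^{n+1}(n+1)^2)$ bound. The one place you are a bit glib is the claim that the recursion can be unrolled with only "a counter plus a pointer pair"; a stack of depth $n$ with $\O(1)$ bits per level is the honest accounting, but since $n \le \log m \le w$ this still fits in the $\O(w)$ overhead allowed by the in-place definition, so the conclusion stands.
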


\begin{proof}[proof sketch]
  We apply induction on both statements at the same time. For the induction step
  of the first statement, notice that in order to label \(G_{n + 1}\) with
  respect to its \( \Base \), we can first label \(\LS(G_{n + 1}) \) using \(2^n
  \cdot w + \O(w)\) memory. Then, while keeping the previous labels stored,
  compute the labels of \(\In(\CS(G_{n  + 1}))\) using \(2^n \cdot w + \O(w)\)
  extra memory. From these we can compute \textbf{in-place} the labels of the
  nodes in \(\CS(G_{n + 1})\) and later \(\Out(\CS(G_{n + 1})) \tr \RS(G_{n +
  1})\), as these subgraphs only contain copies of butterfly graphs, which can
  be easily labelled \textbf{in-place}. Then, by the induction hypothesis,
  compute the labels of \(\Base(\RS(G_{n + 1}))\) using the same memory as
  before. The proof for the induction step for the second statement is the same,
  except that the labels of \(\Base(\LS(G_{n + 1}) )\) are not stored, so this
  memory is overwritten afterwards.
\end{proof}

The proposed protocol for memory erasure depends on a graph and a subset of its
nodes \(O(G)\). In particular, the memory size in words needs to be the same as
the \(O(G)\). Thus, the graphs constructed previously can only be used if
the memory size is a power of two, which may not be true in practice. The next
results solve this issue: they show how to construct \textbf{depth-robust}
graphs that can be labelled \textbf{in-place} with respect to a subset of nodes of
arbitrary size.

\begin{restatable}{lemma}{lemmauniondepthrobust}\label{lemma:union:depthrobust}
  Let \(G_1\) be \((a_1, b)\)-\textbf{depth-robust} with respect to \( O(G_1) \)
  and \(G_2\) be \((a_2, b)\)-\textbf{depth-robust} with respect to \( O(G_2)
  \). Then the graph \(G\) defined by the disjoint union \(G_1\) and \(G_2\) is
  \((a_1 + a_2, b)\)-\textbf{depth-robust} with respect to \( O(G) = O(G_1) \cup
  O(G_2) \).
\end{restatable}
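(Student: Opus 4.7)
The plan is to fix an arbitrary set \(R \subset V(G)\) with \(\abs{R} < a_1 + a_2\) and split it along the disjoint union as \(R_1 = R \cap V(G_1)\) and \(R_2 = R \cap V(G_2)\), with sizes \(r_1\) and \(r_2\) satisfying \(r_1 + r_2 = \abs{R}\). Because \(G\) is the disjoint union of \(G_1\) and \(G_2\), any path in \(G \setminus R\) lies entirely inside one of the two components, and conversely any path of length \(\geq b\) in \(G_i \setminus R_i\) is also a path of the same length in \(G \setminus R\). This observation reduces the problem to reasoning about the two components independently.

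Next, I would apply the depth-robustness hypothesis componentwise. If \(r_i < a_i\), then by the \((a_i, b)\)-depth-robustness of \(G_i\) w.r.t.\ \(O(G_i)\) there is a subset \(O'_i \subseteq O(G_i)\) of size at least \(a_i - r_i\) every node of which is the endpoint of a path of length \(\geq b\) in \(G_i \setminus R_i\). Otherwise set \(O'_i = \emptyset\). I then take \(O' = O'_1 \cup O'_2 \subseteq O(G)\); by construction, every vertex of \(O'\) has a path of length \(\geq b\) ending at it in \(G \setminus R\).

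It remains to verify the cardinality bound \(\abs{O'} \geq a_1 + a_2 - \abs{R}\). Since \(R_1\) and \(R_2\) are disjoint and sit in disjoint components, \(O'_1\) and \(O'_2\) are disjoint, so \(\abs{O'} = \abs{O'_1} + \abs{O'_2}\). The assumption \(\abs{R} < a_1 + a_2\) rules out having both \(r_1 \geq a_1\) and \(r_2 \geq a_2\) simultaneously, so at most one of the two components contributes \(\emptyset\). A short case analysis then gives \(\abs{O'_i} \geq \max(0, a_i - r_i)\) for each \(i\), and
\[
\abs{O'} \geq \max(0, a_1 - r_1) + \max(0, a_2 - r_2) \geq (a_1 - r_1) + (a_2 - r_2) = a_1 + a_2 - \abs{R}.
\]
The argument is essentially routine once the disjoint-union structure is exploited; there is no real obstacle beyond bookkeeping the case where some \(r_i\) exceeds \(a_i\), which is handled by the crude bound \(\max(0, a_i - r_i) \geq a_i - r_i\).
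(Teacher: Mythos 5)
Your proof is correct and follows essentially the same approach as the paper's: split \(R\) along the two components, apply depth-robustness to each component, and combine the counts. The only cosmetic difference is that you replace the paper's explicit three-way case analysis with the compact observation \(\abs{O'_i}\geq\max(0,a_i-r_i)\geq a_i-r_i\), which is a slightly cleaner way of handling the same bookkeeping.
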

\begin{proof}
  Let \(R\) be a set of nodes in \(G\) with \(\abs{R} < a_1 + a_2\). Let \(r_1 =
  \abs{R \cap V(G_1)}\) and \(r_2 = \abs{R \cap V(G_2)}\). Notice that, if \(r_1
  < a_1\) then by the depth-robustness of \(G_1\) there are at least \(a_1 -
  r_1\) nodes \(v\) in \(G_1 \setminus R\) with \(\llp(v) \geq b\), and the
  analogous result is true for \(G_2\). We conclude by case analysis:
  \begin{itemize}
    \item if \(r_2 \geq a_2 \implies r_1 < a_1\), from which the result follows
    given that \(a_1 - r_1 \geq a_1 + a_2 - r_1 - r_2\).
    \item if \(r_1 \geq a_1\) the result follows by symmetry with the previous
    case.
    \item else \(r_2 < a_2 \wedge r_1 < a_1\), which means that we can find \(a -
    r_2\) paths in \(G_2\) and \(a - r_1\) paths in \(G_1\) with length greater
    than \(b\), ending in different vertices in \( O(G_1) \cup O(G_2) \).
  \end{itemize}
\end{proof}

\begin{restatable}{lemma}{lemmasubsetdepthrobust}\label{lemma:subset:depthrobust}
  Let \(G\) be \((a, b)\)-\textbf{depth-robust} with respect to \( O(G) \). Let \(O'\) be
  a subset of \(O(G)\) of size \(a'\). Then \(G\) is \((a',
  b)\)-\textbf{depth-robust} with respect to \( O' \).
\end{restatable}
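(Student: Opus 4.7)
The plan is to apply the $(a,b)$-depth-robustness of $G$ with respect to $O(G)$ and then intersect the resulting set of nodes with $O'$, using a counting argument to show the intersection is large enough.

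First, I would fix an arbitrary set $R \subset V(G)$ with $|R| < a'$. Since $O'$ is a subset of $O(G)$ of size $a'$ (and by \Cref{def:depth-robust} we have $|O(G)| = a$), it follows that $a' \leq a$, hence also $|R| < a$. This lets me invoke the hypothesis: there exists a subset $O^\star \subseteq O(G)$ with $|O^\star| \geq a - |R|$ such that every node $v \in O^\star$ satisfies $\llp(v, G\setminus R) \geq b$.

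Next, I would take $O'' = O^\star \cap O'$ as my candidate set of witnessed nodes for the $(a',b)$-depth-robustness with respect to $O'$. The property that each $v \in O''$ lies on a path of length at least $b$ in $G \setminus R$ is inherited directly from $O^\star$, so the only thing left is to show $|O''| \geq a' - |R|$. Since $O^\star \cup O' \subseteq O(G)$, inclusion-exclusion gives
\[
|O^\star \cap O'| = |O^\star| + |O'| - |O^\star \cup O'| \geq (a - |R|) + a' - a = a' - |R|,
\]
which is exactly what is required.

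This argument is completely elementary and I do not expect any real obstacle: it is a straightforward consequence of the definition plus a single inclusion-exclusion step. The only subtlety is the implicit assumption that $|O(G)| = a$ (which is built into the notation $(a,b)$-depth-robust), but this is standard and already used throughout \Cref{sec:graphs}.
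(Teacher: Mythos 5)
Your proof is correct, but it takes a different route from the paper's. You invoke the $(a,b)$-depth-robustness of $G$ directly on the given set $R$ (using only $|R| < a' \le a$), and then argue by inclusion-exclusion that the resulting witness set $O^\star$ must have a large overlap with $O'$. The paper instead first augments the removed set to $R \cup (O(G)\setminus O')$, which still has size $< a$, and applies depth-robustness to that; the witness set it produces is then automatically contained in $O'$ (since witnesses survive removal and cannot lie in $O(G)\setminus O'$), so no counting step is needed. Both arguments are one or two lines. Your version is perhaps more mechanical in that it makes no clever choice of $R$, at the cost of an explicit cardinality estimate; the paper's choice of $R$ makes containment free and gives the bound $a - |R| \ge a' - |R'|$ directly. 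Either proof is entirely adequate, and you correctly flagged the one implicit assumption ($|O(G)| = a$), which is indeed built into the paper's \Cref{def:depth-robust}.
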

\begin{proof}
  Let \(R'\) be a set of nodes with \(\abs{R'} < a'\). Let \(R = R' \cup (O(G)
  \setminus O')\). Notice that \(\abs{R} < a\). Then the nodes with longs paths
  guaranteed by the depth robustness of \(G\) in \(G \setminus R\) will also be
  nodes with long paths in \(G \setminus R'\). Moreover, by definition of \(R\)
  these nodes are in \(O'\) and there are at least \(a - \abs{R} = a' -
  \abs{R'}\) such nodes, as needed.
\end{proof}

\begin{restatable}{lemma}{lemmaunioninplace}\label{lemma:union:inplace} Let
  \(G\) be a graph that can be labelled \textbf{in-place} with respect to \(O(G)\), and
  \(G'\) the graph defined by the disjoint union of two copies of \(G\), \(G_1\)
  and \(G_2\). If \(O'\) is a subset of \(O(G_2)\), then \(G'\) can be labelled
  \textbf{in-place} with respect to \(O(G_1) \cup O'\).
\end{restatable}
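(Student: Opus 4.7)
The plan is to exploit the fact that \(G_1\) and \(G_2\) are vertex-disjoint components of \(G'\): no label in one component appears in the computation of any label in the other, so the two in-place labelling procedures can be invoked sequentially with no cross-dependency. The only delicate point is the order in which to run them, because we want the final memory footprint to respect the tighter bound \((|O(G_1)| + |O'|) \cdot w + \mathcal{O}(w)\) rather than the looser \(2|O(G)| \cdot w + \mathcal{O}(w)\).

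First I would invoke the hypothesised in-place labelling algorithm on \(G_2\) (which is a fresh copy of \(G\)), using at most \(|O(G_2)| \cdot w + \mathcal{O}(w)\) bits to produce the labels of all nodes in \(O(G_2)\). Since \(G_1 \cong G_2 \cong G\), this bound is at most \(|O(G_1)| \cdot w + \mathcal{O}(w)\), which fits inside the target budget. Having produced these labels, I would then overwrite the entries corresponding to \(O(G_2) \setminus O'\), retaining only the labels of the nodes in \(O'\). After this pruning step the memory actually in use is exactly \(|O'| \cdot w\) (plus the \(\mathcal{O}(w)\) scratch buffer).

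Next, I would invoke the in-place labelling algorithm on \(G_1\) in the remaining memory. By hypothesis this uses at most \(|O(G_1)| \cdot w + \mathcal{O}(w)\) additional bits and produces the labels of \(O(G_1)\). Because \(G_1\) and \(G_2\) are disjoint, the procedure for \(G_1\) does not touch the stored labels of \(O'\). The peak memory usage during this phase is therefore \(|O'| \cdot w + |O(G_1)| \cdot w + \mathcal{O}(w) = (|O(G_1)| + |O'|) \cdot w + \mathcal{O}(w)\), which matches the required in-place bound for the target set \(O(G_1) \cup O'\). Concatenating the stored labels yields the desired output.

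The main subtlety — really the only one — is the ordering. If one were to label \(G_1\) first and keep its entire output in memory while running the in-place algorithm on \(G_2\), the peak usage would be \(|O(G_1)| \cdot w + |O(G_2)| \cdot w + \mathcal{O}(w)\), exceeding the target whenever \(|O'| < |O(G_2)|\). Labelling \(G_2\) first and immediately pruning down to \(O'\) is what makes the accounting go through, and it is sound precisely because the disjoint-union structure means no label of \(G_2\) outside \(O'\) is ever needed again.
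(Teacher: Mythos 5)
Your proposal is correct and takes essentially the same approach as the paper's: label \(G_2\) in-place first, discard the labels of \(O(G_2)\setminus O'\), and then label \(G_1\) in-place in the freed space. The paper's proof sketch is terser but rests on exactly the ordering you identify as the crux.
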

\begin{proof}
  First compute the labels of \(O(G_2)\) \textbf{in-place}. Then keep storing
  only the labels from \(O'\). With the space left, compute the labels of
  \(O(G_1)\) \textbf{in-place}. The result follows.
\end{proof}

\begin{theorem}\label{thm:inplace}
  Let \(m\) be any memory size, and let \(n\) be the smallest integer such that
  \(2^{n + 1} \geq m\). Then there exists a \((m, 2^n)\)-\textbf{depth-robust}
  graph \(G\) with respect to \(O(G)\) that can be labelled \textbf{in-place} with respect to
  \(O(G)\).
\end{theorem}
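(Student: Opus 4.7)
The plan is to assemble $G$ from the graph family in \Cref{thm:depth-robust-graph} by applying the closure operations of \Cref{lemma:union:depthrobust}, \Cref{lemma:subset:depthrobust}, and \Cref{lemma:union:inplace}. Since $n$ is chosen to be the smallest non-negative integer with $2^{n+1} \geq m$, we have $2^n \leq m \leq 2^{n+1}$, so the task reduces to bridging the gap between the power-of-two construction of \Cref{thm:depth-robust-graph} and an arbitrary target size $m$.

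First, I would take two fresh copies $G_1 \cong G_2 \cong G_{n+1}$ of the graph from \Cref{thm:depth-robust-graph}. Each is $(2^n, 2^n)$-depth-robust and labellable in-place with respect to its distinguished output set $O(G_i)$ of size $2^n$. Let $G'$ be their disjoint union. By \Cref{lemma:union:depthrobust}, $G'$ is $(2^{n+1}, 2^n)$-depth-robust with respect to $O(G_1) \cup O(G_2)$, a set of size $2^{n+1}$.

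Next, choose any $O'' \subseteq O(G_2)$ with $|O''| = m - 2^n$; this is possible since $0 \leq m - 2^n \leq 2^n$. Set $O(G) = O(G_1) \cup O''$, which has size exactly $m$. Applying \Cref{lemma:subset:depthrobust} to $G'$ with this subset yields that $G'$ is $(m, 2^n)$-depth-robust with respect to $O(G)$. Finally, since $O'' \subseteq O(G_2)$, \Cref{lemma:union:inplace} applies directly and shows that $G'$ can be labelled in-place with respect to $O(G_1) \cup O'' = O(G)$, completing the construction.

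There is no substantive obstacle here: the argument is a straightforward composition of the three structural lemmas on top of the base construction of \Cref{thm:depth-robust-graph}. The only bookkeeping to verify is that the minimality of $n$ forces $m - 2^n$ to lie in $[0, 2^n]$, so the subset $O''$ of the required size always exists (with the degenerate case $m = 2^n$ simply yielding $O'' = \emptyset$ and $O(G) = O(G_1)$).
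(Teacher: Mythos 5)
Your proof is correct and follows essentially the same route as the paper's: form the disjoint union of two copies of $G_{n+1}$, apply \Cref{lemma:union:depthrobust} to get $(2^{n+1}, 2^n)$-depth-robustness, trim the output set to size $m$ via \Cref{lemma:subset:depthrobust}, and invoke \Cref{lemma:union:inplace} (together with the in-place property of $G_{n+1}$) for the labelling. The only cosmetic difference is that you cite \Cref{thm:depth-robust-graph} where the paper cites \Cref{col:main} and \Cref{lemma:inplace} directly, but these are the constituent results of that theorem, so the argument is the same.
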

\begin{proof}
  Take \(G = C_1 \cup C_2\) where \(C_1 \cong C_2 \cong G_{n+1} \) (as defined
  in our construction). By \Cref{col:main} and \Cref{lemma:union:depthrobust},
  \(G\) is \((2^{n + 1}, 2^n)\)-\textbf{depth-robust} with respect to \( O(C_1) \cup
  O(C_2)\). Let \(O'\) be a subset of \(O(C_2)\) with size \(m - 2^n\), and let
  \(O(G) = O(C_1) \cup O'\). Then by \Cref{lemma:subset:depthrobust} \(G\) is
  \((m, 2^n)\)-\textbf{depth-robust} with respect to \( O(G)\). Furthermore, by
  \Cref{lemma:inplace} and \Cref{lemma:union:inplace} \(G\) can be labelled
  \textbf{in-place} with respect to \(O(G)\), as claimed. 
\end{proof}

\section{Lightweight protocol}\label{sec:graph:lightweight}

In this section we focus on a trade-off between efficiency and security our
graph-based memory-erasure protocol. To this end, we construct new graphs with
relaxed depth-robustness properties, which will help boost the performance of
the protocol by a polylogarithmic factor. This will have the cost of offering
slightly less security. Hence, we suggest this lightweight version is used when
more performance is needed, and the original one when the highest level of
security is required.

Next we analyse the trade-off between the depth-robustness of the graph and the
protocol security in \Cref{bound-r-rounds}, which we reproduce to the reader's
advantage.

\colboundrrounds*{}

Notice that the parameter \( \gamma \) (which represents the depth of paths) in
the \textbf{depth-robustness} property must be greater than the number of
queries the adversary can do within a fast phase round. This is the only
restriction this parameter has. For the family of graphs constructed in
\Cref{sec:graphs}, \( \gamma \) is at least \(m\), the number of output nodes in
the graph. Note that this is quite high, considering that a realistic attacker
may at most do a small constant amount of operations during the fast phase,
and that the operations we bound are hash computations. Therefore, for
\( \kappa \) a small constant and each \(m\), we construct a \( (m, \kappa)
\)-\textbf{depth-robust} graph \(G'_m\). Applying \Cref{bound-r-rounds} to
\(G'_m\) we obtain a secure protocol against \((M, \kappa - 1)\)-bounded
adversaries with exactly the same security bounds.

For the sake of simplicity, assume \( \kappa \) is fixed in a small power of 2,
for example 16. Then, the graph \(G_4\) from the previous section is a
\((16,\kappa)\)-\textbf{depth-robust} graph with respect to \(O(G_4) = \set{o_1,
o_2, \ldots, o_{16}}\).

\begin{definition}
    Let \( Q_i \) for \(1 \leq i \leq 16\) be the subgraph of \(G_4\) that
    contains all nodes which belong to some path ending in a node from
    \(\set{o_1, \ldots, o_i}\), and \(O(Q_i) = \sequence{o_1, \ldots, o_i}\).
\end{definition}

\begin{lemma}
    \(Q_i\) is \((i,\kappa)\)-\textbf{depth-robust} with respect to \(O(Q_i)\).
\end{lemma}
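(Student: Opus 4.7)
The plan is to reduce depth-robustness of $Q_i$ to depth-robustness of $G_4$ by treating the ``unwanted'' challenge nodes $o_{i+1}, \ldots, o_{16}$ as if they were adversarially removed. More precisely, given a set $R \subseteq V(Q_i)$ with $|R| < i$, I would lift it to $R' = R \cup \{o_{i+1}, \ldots, o_{16}\}$ inside $G_4$ and apply the depth-robustness property of $G_4$ to $R'$.

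The first step is a counting check: since $|R| < i$ and we add $16 - i$ extra nodes, we get $|R'| < i + (16 - i) = 16$, so $R'$ is a valid attack set against the $(16,\kappa)$-depth-robustness of $G_4$. Invoking that property yields a subset $O' \subseteq O(G_4)$ with $|O'| \geq 16 - |R'| \geq i - |R|$, and for each $v \in O'$ a path of length at least $\kappa$ in $G_4 \setminus R'$ ending at $v$.

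The second step is to verify that $O'$ and these paths actually live in $Q_i \setminus R$. For the endpoints: since $\{o_{i+1}, \ldots, o_{16}\} \subseteq R'$, no node of $O'$ can be among them, hence $O' \subseteq \{o_1, \ldots, o_i\} = O(Q_i)$. For the paths themselves: each such path ends at some $o_j$ with $j \leq i$, so every node on it is an ancestor of $o_j$ and therefore lies on a path ending in a node of $\{o_1,\ldots,o_i\}$; by the definition of $Q_i$, each such node is in $V(Q_i)$. Finally, the path avoids $R'$ and $R \subseteq R'$, so it avoids $R$, meaning it is a path in $Q_i \setminus R$ of length at least $\kappa$.

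I do not anticipate a serious obstacle: the main conceptual point is just to confirm that paths in $G_4 \setminus R'$ ending at nodes of $O(Q_i)$ are automatically contained in $Q_i$, which follows immediately from how $Q_i$ was defined. The only small thing to be careful about is the bookkeeping of $|R'|$ and $|O'|$ to get the exact bound $|O'| \geq i - |R|$ required by $(i,\kappa)$-depth-robustness.
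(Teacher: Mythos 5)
Your proof is correct and follows essentially the same route as the paper, which simply invokes \Cref{lemma:subset:depthrobust} (whose proof is exactly your step of inflating $R$ to $R' = R \cup (O(G_4)\setminus O(Q_i))$ and applying depth-robustness of $G_4$). One nice point: you explicitly verify that the long paths guaranteed in $G_4\setminus R'$ actually lie inside $Q_i$ (because $Q_i$ contains all ancestors of $\{o_1,\ldots,o_i\}$), a step the paper's one-line citation of \Cref{lemma:subset:depthrobust} technically elides, since that lemma as stated only yields depth-robustness of $G_4$ with respect to $O(Q_i)$, not of the subgraph $Q_i$ itself.
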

\begin{proof}
    Notice that, by definition, \(Q_i\) is a subgraph of \(G_4\) (which is
    \((16, \kappa)\)-\textbf{depth-robust}) and \(O(Q_i) \subseteq O(G_4) \wedge
    \abs{O(Q_i)} = i\). Then, the result follows from
    \Cref{lemma:subset:depthrobust}.
\end{proof}

\begin{definition}
    Let \( G'_m \) be a graph and \(O(G'_m)\) a subset of nodes defined as
    follows. Let \(m = 16 \cdot k + i\) be the result of the Euclidean division
    lemma.
    \begin{itemize}
        \item If \(i = 0 \wedge k = 0\): \(G'_m\) is the empty graph
        \item If \(i = 0\) then: \(G'_m\) is the disjoint union of \( k \)
        graphs isomorphic with \(G_4\), and \(O(G'_{m})\) the union of the
        corresponding nodes in each copy of \(O(G_4)\).
        \item If \(i > 0\): \(G'_m\) is the disjoint union of \(k\) copies of
        \(G'_{16}\) and \(Q_i\), and \(O(G'_{m})\) the union of the
        corresponding nodes in each copy of \(O(G'_{16})\) and \(O(Q_i)\).
    \end{itemize}
\end{definition}

\begin{lemma}
     The graph \(G'_m\) is \( (m, \kappa)\)-\textbf{depth-robust} with respect
     to \(O(G'_m)\).
\end{lemma}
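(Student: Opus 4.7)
The plan is a direct application of the disjoint-union depth-robustness lemma (\Cref{lemma:union:depthrobust}), combined with the depth-robustness facts already established for the building blocks $G_4$ (equivalently $G'_{16}$) and $Q_i$. No new combinatorial argument is needed; the proof reduces to case analysis on the Euclidean division $m = 16k + i$ and a short induction on $k$.

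First I would dispose of the trivial case $m = 0$ (i.e.\ $k = 0$ and $i = 0$), where $G'_m$ is empty and the $(0,\kappa)$-depth-robustness condition is vacuous (no set $R$ satisfies $|R| < 0$). Next, for the case $i = 0$ and $k \ge 1$, I would use that $G_4$ is $(16,\kappa)$-depth-robust with respect to $O(G_4)$, as fixed at the start of \Cref{sec:graph:lightweight}, and then apply \Cref{lemma:union:depthrobust} exactly $k-1$ times to the $k$ disjoint copies of $G_4$ making up $G'_m$. This immediately gives that $G'_m$ is $(16k,\kappa)$-depth-robust with respect to the union of the corresponding copies of $O(G_4)$, which by definition is $O(G'_m)$; since $16k = m$, this is exactly the claim.

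For the remaining case $i > 0$, the preceding lemma states that $Q_i$ is $(i,\kappa)$-depth-robust with respect to $O(Q_i)$. Note also that $G'_{16}$ coincides with the $i=0$, $k=1$ case above and is therefore $(16,\kappa)$-depth-robust with respect to $O(G'_{16})$. I would then apply \Cref{lemma:union:depthrobust} $k$ times to combine the $k$ disjoint copies of $G'_{16}$ with $Q_i$, summing the first parameters of depth-robustness while keeping the second fixed at $\kappa$. The result is that $G'_m$ is $(16k + i,\kappa) = (m,\kappa)$-depth-robust with respect to the union of all the relevant $O(\cdot)$ sets, which is exactly $O(G'_m)$ by definition.

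There is essentially no obstacle beyond careful bookkeeping, since \Cref{lemma:union:depthrobust} handles the combinatorics of summing the first parameter across disjoint unions, and the second parameter $\kappa$ is already the uniform depth of all components. The only point to verify cleanly is that the nodes selected in $O(G'_m)$ match the union of the per-component output sets coming out of the repeated invocations of the union lemma; this is immediate from the construction of $O(G'_m)$ as the union of the corresponding output sets in each copy of $G'_{16}$ (or $G_4$) and in $Q_i$.
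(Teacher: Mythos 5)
Your proof is correct and follows essentially the same approach as the paper: decompose $G'_m$ into its disjoint components ($k$ copies of $G_4$, plus $Q_i$ when $i>0$) and apply \Cref{lemma:union:depthrobust} repeatedly to sum the first depth-robustness parameter. The paper states this more tersely; you simply spell out the case split and the bookkeeping of the repeated lemma applications.
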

\begin{proof}
    By definition, if \(m = 16 \cdot k + i\), \(G'_m\) is the disjoint union of
    \(k\) copies of \(G_4\), which is \((16, \kappa)\)-\textbf{depth-robust},
    and a copy of \(Q_i\) which is \((i, \kappa)\)-\textbf{depth-robust}. The
    result follows by applying \Cref{lemma:union:depthrobust} to these disjoint
    graphs which form \(G'_m\).
\end{proof}

\begin{lemma}
    If \(m \geq 16\), the graph \(G'_m\) can be labelled \textbf{in-place} with respect to
    \(O(G'_m)\).
\end{lemma}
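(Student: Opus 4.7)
The plan is to reduce the claim to the already-established in-place labelling property of \(G_4\) (Lemma~\ref{lemma:inplace} with \(n=3\)) and the closure of in-place labelling under disjoint unions. Write \(m = 16k + i\) with \(0 \leq i < 16\) and \(k \geq 1\). Then \(G'_m\) is the disjoint union of \(k\) copies of \(G_4\) and, when \(i > 0\), one copy of \(Q_i\), with output set being the union of the corresponding output sets. I will show each component is individually in-place labellable, and then argue that the disjoint union of in-place labellable graphs is itself in-place labellable with respect to the union of the output sets.

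For the \(Q_i\) component, note that \(Q_i\) is a subgraph of \(G_4\) and \(O(Q_i) \subseteq O(G_4)\). Running the \(G_4\) in-place labelling algorithm of Lemma~\ref{lemma:inplace} and discarding every output label not in \(O(Q_i)\) uses at most \(16 \cdot w + \mathcal{O}(w)\) bits of memory. Since \(16\) is a fixed constant and \(i \geq 1\), this bound is \(i \cdot w + \mathcal{O}(w)\), hence \(Q_i\) admits an in-place labelling with respect to \(O(Q_i)\).

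For the disjoint union, I will apply the same idea as in the proof of Lemma~\ref{lemma:union:inplace} sequentially. Label \(Q_i\) first (if \(i > 0\)) and retain only its \(i\) output labels; then label each of the \(k\) copies of \(G_4\) one at a time, retaining the \(16\) output labels of that copy before moving on to the next. During the processing of any single component, the memory in use is the previously retained labels (of total size at most \((m - 16) \cdot w\) bits) plus the \(16 \cdot w + \mathcal{O}(w)\) bits needed by the current in-place sub-routine, giving a total of \(m \cdot w + \mathcal{O}(w)\) bits, as required by the in-place definition. Formally, the argument extends \Cref{lemma:union:inplace} from two copies of the same graph to a finite disjoint union of (possibly different) in-place labellable graphs, with an entirely analogous proof.

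The main subtlety, though minor, lies in the memory accounting: one must ensure the additive \(\mathcal{O}(w)\) overhead does not accumulate over the \(k+1\) components. This is handled by reusing the auxiliary memory of each in-place sub-routine as soon as it terminates: only the retained output labels of already-processed components persist, while the \(\mathcal{O}(w)\) scratch space is freed and reused for the next component.
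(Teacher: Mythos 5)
Your proof is correct and takes essentially the same route as the paper: label \(Q_i\) first, then each copy of \(G_4\) in turn, reusing the scratch space between components. The intermediate claim that \(Q_i\) is individually in-place (by absorbing the \((16-i)\cdot w\) slack into the \(\mathcal{O}(w)\) term) is a dispensable shortcut; what actually carries the argument is your final memory accounting, which uses \(m \geq 16\) (encoded as \(k \geq 1\)) to keep the peak usage at \(m \cdot w + \mathcal{O}(w)\), and this is exactly the paper's own reason for the hypothesis.
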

\begin{proof}[Proof sketch]
    We label the output nodes of each component of \(G'_m\) independently.
    First, we compute and store the labels the nodes in \(Q_i\), which can be
    done using at most 16 words (because this is a subgraph of \(G_4\)). For
    this to hold the condition \(m \geq 16\) is necessary. The remaining space
    can be used to compute and store the output labels in each copy of \(G_4\),
    using the \textbf{in-place} properties of these graphs.
\end{proof}

By the previous lemmas, we conclude that the graph \(G'_m\) has the required
properties to serve as the basis of our secure memory-erasure protocol. It
remains to compute how many hash computations are done by a prover using our
protocol instantiated with \(G'_m\), in comparison to the original proposal. For
simplicity, we analyse the case where \(m = 2^k\) with \(k > 3\).

\begin{lemma}
    If \(m = 2^k\), labelling \(G_k\) requires exactly \( {(k^2 + k + 3)}\cdot 2
    ^{k + 1} - 2\) hash computations and labelling \(G'_m\) requires exactly \(
    367 \cdot 2 ^{k - 3} \)  hash computations. Therefore, the lightweight
    protocol is approximately \(\frac{k^2 + k + 3}{23}\) times more efficient.
\end{lemma}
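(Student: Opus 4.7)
The plan is to count the hash operations performed by the in-place labelling procedure on each graph, then take their ratio.

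\emph{Counting for \(G_k\).} I would proceed by induction on \(k\), directly analysing the in-place labelling procedure of \Cref{lemma:inplace}. That procedure labels \(G_{k+1}\) by first labelling the left copy of \(G_k\) in-place, then computing the labels for the middle connector \(H_k\) together with the \(\tr\)-cascade of nested connectors \(H_0, H_1, \ldots, H_{k-1}\), and finally labelling the right copy of \(G_k\) in-place using the induction hypothesis. Letting \(L(k)\) denote the number of hash computations for labelling \(G_k\) and \(M(k)\) the cost of the middle stage, this yields a recurrence of the form \(L(k+1) = 2L(k) + M(k)\). The middle cost \(M(k)\) is the sum of the contributions from \(H_k\) and from the connectors in the \(\tr\)-cascade; using the explicit node count of the butterfly-based connector family from~\cite{Blocki2021}, \(M(k)\) can be written in closed form. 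Solving the resulting linear recurrence by standard unrolling should give the target expression \((k^2+k+3) \cdot 2^{k+1} - 2\).

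\emph{Counting for \(G'_m\).} When \(m = 2^k\) with \(k > 3\), \(G'_m\) is by definition the disjoint union of \(2^{k-4}\) copies of \(G_4\), since the Euclidean division gives \(m = 16 \cdot 2^{k-4} + 0\). Because disjoint components are labelled independently, the total hash count is \(2^{k-4} \cdot L(G_4)\). Evaluating the closed form at \(k=4\) gives \(L(G_4) = 23 \cdot 32 - 2 = 734\), hence \(L(G'_m) = 734 \cdot 2^{k-4} = 367 \cdot 2^{k-3}\).

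\emph{Ratio.} Taking the quotient and simplifying,
\[
\frac{L(G_k)}{L(G'_m)} = \frac{(k^2+k+3) \cdot 2^{k+1} - 2}{367 \cdot 2^{k-3}} \approx \frac{16(k^2+k+3)}{367} \approx \frac{k^2+k+3}{23},
\]
using \(367/16 \approx 22.94 \approx 23\) and noting that the additive \(-2\) is negligible for the asymptotic.

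The main obstacle will be deriving \(M(k)\) exactly. This requires pinning down the precise node count of each butterfly-based connector \(H_n\) and carefully summing the contributions along the nested \(\tr\)-cascade, taking account of any label recomputation forced by the in-place constraint. Once the recurrence is correctly set up with the right constants, the closed form follows by elementary algebraic manipulation, and the \(G'_m\) case and the ratio follow immediately.
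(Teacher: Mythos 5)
Your overall strategy matches the paper's: reduce the hash count to a node count, set up a recurrence, solve it, and take the ratio. The gap is that you never make the crucial observation that the in-place labelling algorithm of the earlier section is \emph{optimal}: it hashes every node of the graph exactly once, with no recomputation, so the number of hash computations is precisely the number of nodes. The paper states this fact explicitly (invoking the in-place labelling lemmas) and then only needs to count vertices. You instead flag ``label recomputation forced by the in-place constraint'' as an open concern, which means your recurrence \(L(k+1) = 2L(k) + M(k)\) is not yet justified --- if the in-place ordering forced any recomputation, the hash count would exceed the node count, and plugging the connector sizes into \(M(k)\) would undercount.

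Once that observation is in place, the rest is pure node counting and proceeds essentially as you describe: the recurrence \(\abs{G_k} = 2\abs{G_{k-1}} + \sum_i \abs{H_i}\), using the butterfly size \(\abs{H_i} = 2(i+1)2^i\), determines the sequence; the paper then rewrites it as a fourth-order constant-coefficient linear recurrence \(\abs{G_k} = 7\abs{G_{k-1}} - 18\abs{G_{k-2}} + 20\abs{G_{k-3}} - 8\abs{G_{k-4}}\) with explicit initial conditions and checks the closed form \((k^2+k+3)\cdot 2^{k+1} - 2\). Your treatment of \(G'_m\) as \(2^{k-4}\) disjoint copies of \(G_4\), the value \(23\cdot 32 - 2 = 734\), and the ratio computation all agree with the paper. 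So the structure of your plan is right, but the optimality observation that makes it work is missing, and the recurrence you set up is never actually solved.
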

\begin{proof}
    Both graphs can be labelled optimally by the results shown in the previous
    section (\Cref{lemma:inplace,lemma:union:inplace,thm:inplace}). Therefore,
    it remains to count the exact number of nodes in \(G_k\), from which we can
    easily obtain the exact amount for \(G'_m\).

    By the recursive definition of \(G_k\), we have that its number of
    nodes is equal to:
    \[2 \cdot \abs{G_{k - 1}} + \sum_{i=0}^{k}\abs{H_{i}}\]
    The butterfly graph \(H_{i}\) has exactly \(2 \cdot (i  + 1) \cdot 2
    ^{i}\) nodes. Therefore, we can derive the recurrent equation:
    \[
    \abs{G_{k}} = 7 \cdot \abs{G_{k - 1}} - 18 \cdot \abs{G_{k - 2}} + 20 \cdot \abs{G_{k - 3}} - 8 \cdot \abs{G_{k - 4}}
    \]
    with initial conditions \(\abs{G_{0}} = 4, \abs{G_{1}} = 18, \abs{G_{2}} = 70, \abs{G_{3}} = 238 \). The exact solution to this recurrence is \(\abs{G_{k}} = {(k^2 + k + 3)}\cdot 2
    ^{k + 1} - 2\). From there we obtain that \(\abs{G_{4}} = 23 \cdot 32 - 2\) and \(\abs{G'_m} = 2^{k - 4} \cdot (23 \cdot 32 - 2)\). Therefore, by removing the small additive terms, we deduce that the lightweight protocol is approximately \(\frac{k^2 + k + 3}{23}\) times
    more efficient, as claimed.
\end{proof}

\noindent \emph{A brief discussion on performance and feasibility.}
We provide a prototype implementation of the
algorithm from \Cref{sec:protocol-graph} with the goal 
of showing that our depth-robust graphs can indeed be labelled in-place.
Our prototype 
implementation can be found at
\ifshowauthor{\footnote{\url{https://gitlab.uni.lu/regil/memory-erasure}}}\else{
the complementary materials}\fi 
in two languages: python and C.
To estimate memory overhead, we compiled our prototype to a 32-bit 
architecture, 
obtaining a program 
of size 3.4KB including the space necessary to store the 
hash values 
used for graph-labelling. 
To estimate computational time, we ran our prototype in a standard
desktop computer and simulated the erasure of 32KB of memory, which took 0.25s. 
As far as communication overhead is concerned, 
we did not directly test the speed of the fast phase. But, taking into account
the results obtained in a similar setting, that of performing distance-bounding 
via Bluetooth communication~\cite{Abidin2021}, we estimate that 128 rounds 
of the fast phase should
take around 10 seconds.
While recognizing the limitations of our performance analysis, we believe these 
preliminary values are promising and leave for future work a throughout 
performance 
analysis of software-based erasure protocols, including ours.

\section{Conclusions}

We presented the first three provable secure \poseprot{} protocols capable of
deterring provers from outsourcing the erasure proof to an external conspirator.
One protocol is simple to implement but suffers from high communication
complexity; the others ask the prover to label a depth-robust graph from a
random seed. Because the labelling algorithm is implemented by a
resource-constrained device, we introduced a class of graphs with depth-robust
properties that can be labelled in-place using hash functions. All protocols
were proven secure within a formal model. Notably, we proved security bounds for
the three protocols guaranteeing that all the prover's memory, except for a
small part, is erased. Those security bounds are tighter against a restricted,
yet plausible, adversary. Hence, future work directed to closing such gap is
needed.

\sloppy{}
\printbibliography{}

@inproceedings{Alwen2016,
  title = {On the Complexity of Scrypt and Proofs of Space in the Parallel Random Oracle Model},
  booktitle = {Annual {{International Conference}} on the {{Theory}} and {{Applications}} of {{Cryptographic Techniques}}},
  author = {Alwen, Joël and Chen, Binyi and Kamath, Chethan and Kolmogorov, Vladimir and Pietrzak, Krzysztof and Tessaro, Stefano},
  date = {2016},
  pages = {358--387},
  publisher = {{Springer}},
  annotation = {37 citations (Google Scholar) [2022-08-22]},
  file = {/home/rey/Zotero/mystorage/Security/Complexity/Memory Hardness/Alwen et al_On the Complexity of Scrypt and Proofs of Space in the Parallel Random Oracle.pdf;/home/rey/Zotero/storage/BDBAISC3/2016-100.pdf}
}

@inproceedings {data-erasure-1996,
author = {Peter Gutmann},
title = {Secure Deletion of Data from Magnetic and {Solid-State} Memory},
booktitle = {6th USENIX Security Symposium (USENIX Security 96)},
year = {1996},
address = {San Jose, CA},
url = {https://www.usenix.org/conference/6th-usenix-security-symposium/secure-deletion-data-magnetic-and-solid-state-memory},
publisher = {USENIX Association},
month = jul
}

@inproceedings{Alwen2017,
  title = {Scrypt Is Maximally Memory-Hard},
  booktitle = {Annual {{International Conference}} on the {{Theory}} and {{Applications}} of {{Cryptographic Techniques}}},
  author = {Alwen, Joël and Chen, Binyi and Pietrzak, Krzysztof and Reyzin, Leonid and Tessaro, Stefano},
  date = {2017},
  pages = {33--62},
  publisher = {{Springer}},
  annotation = {77 citations (Google Scholar) [2022-08-25] ZSCC: 0000046},
  file = {/home/rey/Zotero/mystorage/Security/Complexity/Memory Hardness/Alwen et al_2017_Scrypt is maximally memory-hard.pdf}
}

@inproceedings{Alwen2017a,
  title = {Depth-Robust Graphs and Their Cumulative Memory Complexity},
  booktitle = {Annual {{International Conference}} on the {{Theory}} and {{Applications}} of {{Cryptographic Techniques}}},
  author = {Alwen, Joël and Blocki, Jeremiah and Pietrzak, Krzysztof},
  date = {2017},
  pages = {3--32},
  publisher = {{Springer}},
  annotation = {56 citations (Google Scholar) [2022-09-29]},
  file = {/home/rey/Zotero/mystorage/Security/Complexity/Memory Hardness/Alwen et al_2017_Depth-robust graphs and their cumulative memory complexity.pdf}
}

@inproceedings{Alwen2017b,
  title = {Practical Graphs for Optimal Side-Channel Resistant Memory-Hard Functions},
  booktitle = {Proceedings of the 2017 {{ACM SIGSAC Conference}} on {{Computer}} and {{Communications Security}}},
  author = {Alwen, Joël and Blocki, Jeremiah and Harsha, Ben},
  date = {2017},
  pages = {1001--1017},
  annotation = {43 citations (Google Scholar) [2023-01-06]},
  file = {/home/rey/Zotero/mystorage/Security/Complexity/Memory Hardness/Alwen et al_2017_Practical graphs for optimal side-channel resistant memory-hard functions.pdf}
}

@inproceedings{Alwen2018,
  title = {Sustained Space Complexity},
  booktitle = {Annual {{International Conference}} on the {{Theory}} and {{Applications}} of {{Cryptographic Techniques}}},
  author = {Alwen, Joël and Blocki, Jeremiah and Pietrzak, Krzysztof},
  date = {2018},
  pages = {99--130},
  publisher = {{Springer}},
  annotation = {34 citations (Google Scholar) [2023-01-06]},
  file = {/home/rey/Zotero/mystorage/Security/Complexity/Memory Hardness/Alwen et al_2018_Sustained space complexity.pdf}
}

@inproceedings{Ammar2018,
  title = {Speed: {{Secure}} Provable Erasure for Class-1 Iot Devices},
  shorttitle = {Speed},
  booktitle = {Eighth {{ACM Conference}} on {{Data}} and {{Application Security}} and {{Privacy}}},
  author = {Ammar, Mahmoud and Daniels, Wilfried and Crispo, Bruno and Hughes, Danny},
  date = {2018},
  pages = {111--118},
  annotation = {ZSCC: 0000013},
  file = {/home/rey/Zotero/mystorage/Security/Memory Attestation/Ammar et al_2018_Speed.pdf}
}

@inproceedings{Ammar2020a,
  title = {Simple: {{A}} Remote Attestation Approach for Resource Constrained Iot Devices},
  shorttitle = {Simple},
  booktitle = {2020 {{ACM}}/{{IEEE}} 11th {{International Conference}} on {{Cyber-Physical Systems}} ({{ICCPS}})},
  author = {Ammar, Mahmoud and Crispo, Bruno and Tsudik, Gene},
  date = {2020},
  pages = {247--258},
  publisher = {{IEEE}},
  annotation = {ZSCC: 0000002},
  file = {/home/rey/Zotero/mystorage/Security/Memory Attestation/Ammar et al_2020_Simple.pdf}
}

@article{Ankergaard2021,
  title = {State-of-the-{{Art Software-Based Remote Attestation}}: {{Opportunities}} and {{Open Issues}} for {{Internet}} of {{Things}}},
  shorttitle = {State-of-the-{{Art Software-Based Remote Attestation}}},
  author = {Ankergaard, Sigurd Frej Joel Jørgensen and Dushku, Edlira and Dragoni, Nicola},
  date = {2021},
  journaltitle = {Sensors},
  volume = {21},
  number = {5},
  pages = {1598},
  publisher = {{Multidisciplinary Digital Publishing Institute}},
  annotation = {ZSCC: NoCitationData[s1]},
  file = {/home/rey/Zotero/mystorage/Security/Memory Attestation/Ankergaard et al_2021_State-of-the-Art Software-Based Remote Attestation.pdf}
}

@inproceedings{Armknecht2013,
  title = {A Security Framework for the Analysis and Design of Software Attestation},
  booktitle = {Proceedings of the 2013 {{ACM SIGSAC}} Conference on {{Computer}} \& Communications Security},
  author = {Armknecht, Frederik and Sadeghi, Ahmad-Reza and Schulz, Steffen and Wachsmann, Christian},
  date = {2013},
  pages = {1--12},
  annotation = {119 citations (Google Scholar) [2022-03-31] ZSCC: 0000115},
  file = {/home/rey/Zotero/mystorage/Security/Memory Attestation/Armknecht et al_2013_A security framework for the analysis and design of software attestation.pdf}
}

@inproceedings{Ateniese2014,
  title = {Proofs of Space: {{When}} Space Is of the Essence},
  shorttitle = {Proofs of Space},
  booktitle = {International {{Conference}} on {{Security}} and {{Cryptography}} for {{Networks}}},
  author = {Ateniese, Giuseppe and Bonacina, Ilario and Faonio, Antonio and Galesi, Nicola},
  date = {2014},
  pages = {538--557},
  publisher = {{Springer}},
  annotation = {118 citations (Google Scholar) [2022-04-01] ZSCC: 0000103},
  file = {/home/rey/Zotero/mystorage/Security/Memory Attestation/Ateniese et al_2014_Proofs of space.pdf}
}

@inproceedings{Blocki2021,
  title={A New Connection Between Node and Edge Depth Robust Graphs},
  author={Blocki, Jeremiah and Cinkoske, Mike},
  booktitle={12th Innovations in Theoretical Computer Science Conference (ITCS 2021)},
  year={2021},
  organization={Schloss Dagstuhl-Leibniz-Zentrum f{\"u}r Informatik}
}

@inproceedings{Castelluccia2009,
  title = {On the Difficulty of Software-Based Attestation of Embedded Devices},
  booktitle = {Proceedings of the 16th {{ACM}} Conference on {{Computer}} and Communications Security},
  author = {Castelluccia, Claude and Francillon, Aurélien and Perito, Daniele and Soriente, Claudio},
  date = {2009},
  pages = {400--409},
  annotation = {246 citations (Google Scholar) [2022-03-31] ZSCC: 0000231},
  file = {/home/rey/Zotero/mystorage/Security/Memory Attestation/Castelluccia et al_2009_On the difficulty of software-based attestation of embedded devices.pdf;/home/rey/Zotero/mystorage/Security/Memory Attestation/Castelluccia et al_2009_On the difficulty of software-based attestation of embedded devices2.pdf;/home/rey/Zotero/mystorage/Security/Memory Attestation/Castelluccia et al_2009_On the difficulty of software-based attestation of embedded devices3.pdf}
}

@book{Cohen1997,
  title = {Covering Codes},
  author = {Cohen, Gérard and Honkala, Iiro and Litsyn, Simon and Lobstein, Antoine},
  date = {1997},
  publisher = {{Elsevier}},
  annotation = {874 citations (Google Scholar) [2023-02-09]},
  file = {/home/rey/Zotero/mystorage/background/Coding_Theory/Cohen et al_1997_Covering codes.pdf}
}

@inproceedings{De2010,
  title = {Time Space Tradeoffs for Attacks against One-Way Functions and {{PRGs}}},
  booktitle = {Annual {{Cryptology Conference}}},
  author = {De, Anindya and Trevisan, Luca and Tulsiani, Madhur},
  date = {2010},
  pages = {649--665},
  publisher = {{Springer}},
  annotation = {ZSCC: 0000058},
  file = {/home/rey/Zotero/mystorage/Security/Memory Attestation/PoS/De et al_2010_Time space tradeoffs for attacks against one-way functions and PRGs.pdf}
}

@inproceedings{Dwork2005,
  title = {Pebbling and Proofs of Work},
  booktitle = {Annual {{International Cryptology Conference}}},
  author = {Dwork, Cynthia and Naor, Moni and Wee, Hoeteck},
  date = {2005},
  pages = {37--54},
  publisher = {{Springer}},
  annotation = {116 citations (Google Scholar) [2022-04-01] ZSCC: 0000112},
  file = {/home/rey/Zotero/mystorage/Security/Memory Attestation/PoS/Dwork et al_2005_Pebbling and proofs of work.pdf}
}

@inproceedings{Dziembowski2011,
  title = {One-Time Computable Self-Erasing Functions},
  booktitle = {Theory of {{Cryptography Conference}}},
  author = {Dziembowski, Stefan and Kazana, Tomasz and Wichs, Daniel},
  date = {2011},
  pages = {125--143},
  publisher = {{Springer}},
  annotation = {61 citations (Google Scholar) [2022-03-31] ZSCC: 0000054},
  file = {/home/rey/Zotero/mystorage/Security/Memory Attestation/PoS/Dziembowski et al_2011_One-time computable self-erasing functions.pdf}
}

@inproceedings{Dziembowski2015,
  title = {Proofs of Space},
  booktitle = {Annual {{Cryptology Conference}}},
  author = {Dziembowski, Stefan and Faust, Sebastian and Kolmogorov, Vladimir and Pietrzak, Krzysztof},
  date = {2015},
  pages = {585--605},
  publisher = {{Springer}},
  annotation = {321 citations (Google Scholar) [2022-04-01] ZSCC: 0000242},
  file = {/home/rey/Zotero/mystorage/Security/Memory Attestation/PoS/Dziembowski et al_2015_Proofs of space.pdf}
}

@article{Erdos1975,
  title = {On Sparse Graphs with Dense Long Paths},
  author = {Erdos, Paul and Graham, Ronald L. and Szemerédi, Endre},
  date = {1975},
  journaltitle = {Comp. and Math. with Appl},
  volume = {1},
  pages = {145--161},
  annotation = {93 citations (Google Scholar) [2022-12-23]},
  file = {/home/rey/Zotero/mystorage/Security/Complexity/Memory Hardness/Depth Robust Graphs/Erdos et al_1975_On sparse graphs with dense long paths.pdf}
}

@inproceedings{GilPons2022a,
  title = {Is {{Eve}} Nearby? {{Analysing}} Protocols under the Distant-Attacker Assumption},
  shorttitle = {Is {{Eve}} Nearby?},
  booktitle = {{{IEEE Computer Security Foundations Symposium}}, {{August}} 7-10, 2022, {{Haifa}}, {{Israel}}},
  author = {Gil Pons, Reynaldo and Horne, Ross James and Mauw, Sjouke and Trujillo-Rasua, Rolando and Tiu, Alwen},
  date = {2022},
  doi = {10.1109/CSF54842.2022.9919655},
  annotation = {0 citations (Google Scholar) [2022-04-04]},
  file = {/home/rey/Zotero/mystorage/Gil Pons et al_2022_Is Eve nearby.pdf}
}

@inproceedings{Karame2015,
  title = {Secure Erasure and Code Update in Legacy Sensors},
  booktitle = {International {{Conference}} on {{Trust}} and {{Trustworthy Computing}}},
  author = {Karame, Ghassan O. and Li, Wenting},
  date = {2015},
  pages = {283--299},
  publisher = {{Springer}},
  annotation = {ZSCC: 0000015},
  file = {/home/rey/Zotero/mystorage/Security/Memory Attestation/Karame_Li_2015_Secure erasure and code update in legacy sensors.pdf}
}

@inproceedings{Karvelas2014,
  title = {Efficient Proofs of Secure Erasure},
  booktitle = {International {{Conference}} on {{Security}} and {{Cryptography}} for {{Networks}}},
  author = {Karvelas, Nikolaos P. and Kiayias, Aggelos},
  date = {2014},
  pages = {520--537},
  publisher = {{Springer}},
  annotation = {ZSCC: 0000027},
  file = {/home/rey/Zotero/mystorage/Security/Memory Attestation/Karvelas_Kiayias_2014_Efficient proofs of secure erasure.pdf}
}

@article{Kuang2022,
  title = {A Survey of Remote Attestation in {{Internet}} of {{Things}}: {{Attacks}}, Countermeasures, and Prospects},
  shorttitle = {A Survey of Remote Attestation in {{Internet}} of {{Things}}},
  author = {Kuang, Boyu and Fu, Anmin and Susilo, Willy and Yu, Shui and Gao, Yansong},
  date = {2022},
  journaltitle = {Computers \& Security},
  volume = {112},
  pages = {102498},
  publisher = {{Elsevier}},
  annotation = {4 citations (Google Scholar) [2022-06-21]},
  file = {/home/rey/Zotero/mystorage/Security/Memory Attestation/Kuang et al_2022_A survey of remote attestation in Internet of Things.pdf}
}

@inproceedings{Malinowski2017,
  title = {Disproving the {{Conjectures}} from “{{On}} the {{Complexity}} of {{Scrypt}} and {{Proofs}} of {{Space}} in the {{Parallel Random Oracle Model}}”},
  booktitle = {International {{Conference}} on {{Information Theoretic Security}}},
  author = {Malinowski, Daniel and Żebrowski, Karol},
  date = {2017},
  pages = {26--38},
  publisher = {{Springer}},
  annotation = {1 citations (Google Scholar) [2022-09-12]},
  file = {/home/rey/Zotero/mystorage/Security/Complexity/Memory Hardness/Malinowski_Żebrowski_2017_Disproving the Conjectures from “On the Complexity of Scrypt and Proofs of.pdf}
}

@article{Meneghello2019,
  title = {{{IoT}}: {{Internet}} of Threats? {{A}} Survey of Practical Security Vulnerabilities in Real {{IoT}} Devices},
  shorttitle = {{{IoT}}},
  author = {Meneghello, Francesca and Calore, Matteo and Zucchetto, Daniel and Polese, Michele and Zanella, Andrea},
  date = {2019},
  journaltitle = {IEEE Internet of Things Journal},
  volume = {6},
  number = {5},
  pages = {8182--8201},
  publisher = {{IEEE}},
  file = {/home/rey/Zotero/mystorage/Security/IoT/Meneghello et al_2019_IoT.pdf}
}

@article{Paul1976,
  title = {Space Bounds for a Game on Graphs},
  author = {Paul, Wolfgang J. and Tarjan, Robert Endre and Celoni, James R.},
  date = {1976},
  journaltitle = {Mathematical systems theory},
  volume = {10},
  number = {1},
  pages = {239--251},
  publisher = {{Springer}},
  annotation = {198 citations (Google Scholar) [2022-03-31] ZSCC: 0000190},
  file = {/home/rey/Zotero/mystorage/Security/Memory Attestation/PoS/Paul et al_1976_Space bounds for a game on graphs.pdf}
}

@inproceedings{Perito2010,
  title = {Secure Code Update for Embedded Devices via Proofs of Secure Erasure},
  booktitle = {European {{Symposium}} on {{Research}} in {{Computer Security}}},
  author = {Perito, Daniele and Tsudik, Gene},
  date = {2010},
  pages = {643--662},
  publisher = {{Springer}},
  annotation = {135 citations (Google Scholar) [2022-03-31] ZSCC: 0000100},
  file = {/home/rey/Zotero/mystorage/Security/Memory Attestation/Perito_Tsudik_2010_Secure code update for embedded devices via proofs of secure erasure.pdf}
}

@inproceedings{Pietrzak2018,
  author       = {Krzysztof Pietrzak},
  editor       = {Avrim Blum},
  title        = {Proofs of Catalytic Space},
  booktitle    = {10th Innovations in Theoretical Computer Science Conference, {ITCS}
                  2019, January 10-12, 2019, San Diego, California, {USA}},
  series       = {LIPIcs},
  volume       = {124},
  pages        = {59:1--59:25},
  publisher    = {Schloss Dagstuhl - Leibniz-Zentrum f{\"{u}}r Informatik},
  year         = {2019},
  url          = {https://doi.org/10.4230/LIPIcs.ITCS.2019.59},
  doi          = {10.4230/LIPIcs.ITCS.2019.59},
  timestamp    = {Tue, 11 Feb 2020 15:52:14 +0100},
  biburl       = {https://dblp.org/rec/conf/innovations/Pietrzak19.bib},
  bibsource    = {dblp computer science bibliography, https://dblp.org}
}

@inproceedings{Schnitger1983,
  title = {On Depth-Reduction and Grates},
  booktitle = {24th {{Annual Symposium}} on {{Foundations}} of {{Computer Science}} (Sfcs 1983)},
  author = {Schnitger, Georg},
  date = {1983},
  pages = {323--328},
  publisher = {{IEEE}},
  annotation = {24 citations (Google Scholar) [2022-10-04]},
  file = {/home/rey/Zotero/mystorage/Security/Complexity/Memory Hardness/Depth Robust Graphs/Schnitger_1983_On depth-reduction and grates.pdf}
}

@inproceedings{Seshadri2004,
  title = {{{SWATT}}: {{Software-based}} Attestation for Embedded Devices},
  shorttitle = {{{SWATT}}},
  booktitle = {{{IEEE Symposium}} on {{Security}} and {{Privacy}}, 2004. {{Proceedings}}. 2004},
  author = {Seshadri, Arvind and Perrig, Adrian and Van Doorn, Leendert and Khosla, Pradeep},
  date = {2004},
  pages = {272--282},
  publisher = {{IEEE}},
  keywords = {software based},
  annotation = {859 citations (Google Scholar) [2022-03-31] ZSCC: 0000718},
  file = {/home/rey/Zotero/mystorage/Security/Memory Attestation/Seshadri et al_2004_SWATT.pdf}
}

@article{Trujillo-Rasua2019,
  title = {Secure Memory Erasure in the Presence of Man-in-the-Middle Attackers},
  author = {Trujillo-Rasua, Rolando},
  date = {2019},
  journaltitle = {Journal of Information Security and Applications},
  volume = {57},
  pages = {102730},
  keywords = {formal verification},
  annotation = {ZSCC: 0000000[s1]},
  file = {/home/rey/Zotero/mystorage/Security/Memory Attestation/Trujillo-Rasua_2019_Secure memory erasure in the presence of man-in-the-middle attackers.pdf}
}

@article{EMVCo2021,
  title = {Contactless Specifications for Payment Systems},
  author = {EMVCo, E. M. V.},
  date = {2021},
  journaltitle = {Book C-2, Kernel},
  volume = {2},
  annotation = {5 citations (Google Scholar) [2023-04-18]},
  file = {/home/rey/Zotero/mystorage/Security/Distance Bounding/EMVCo_2021_Contactless specifications for payment systems.pdf}
}

@inproceedings{Radu2022,
  title = {Practical {{EMV}} Relay Protection},
  booktitle = {2022 {{IEEE Symposium}} on {{Security}} and {{Privacy}} ({{SP}})},
  author = {Radu, Andreea-Ina and Chothia, Tom and Newton, Christopher JP and Boureanu, Ioana and Chen, Liqun},
  date = {2022},
  pages = {1737--1756},
  publisher = {{IEEE}},
  annotation = {4 citations (Google Scholar) [2023-04-18]},
  file = {/home/rey/Zotero/mystorage/Security/Distance Bounding/Radu et al_2022_Practical EMV relay protection.pdf}
}

@inproceedings{Rasmussen2010,
  title = {Realization of {{RF Distance Bounding}}.},
  booktitle = {{{USENIX Security Symposium}}},
  author = {Rasmussen, Kasper Bonne and Capkun, Srdjan},
  date = {2010},
  pages = {389--402},
  annotation = {ZSCC: 0000213},
  file = {/home/rey/Zotero/mystorage/Security/Distance Bounding/Rasmussen_Capkun_2010_Realization of RF Distance Bounding.pdf}
}

@inproceedings{Unruh2007,
  title = {Random Oracles and Auxiliary Input},
  booktitle = {Annual {{International Cryptology Conference}}},
  author = {Unruh, Dominique},
  date = {2007},
  pages = {205--223},
  publisher = {{Springer}},
  annotation = {77 citations (Google Scholar) [2022-06-22]},
  file = {/home/rey/Zotero/mystorage/Security/Cryptography/Random Oracle/Unruh_2007_Random oracles and auxiliary input.pdf}
}

@inproceedings{Boureanu2020,
  title = {Security {{Analysis}} and {{Implementation}} of {{Relay-Resistant Contactless Payments}}},
  booktitle = {Proceedings of the 2020 {{ACM SIGSAC Conference}} on {{Computer}} and {{Communications Security}}},
  author = {Boureanu, Ioana and Chothia, Tom and Debant, Alexandre and Delaune, Stéphanie},
  date = {2020},
  pages = {879--898},
  annotation = {8 citations (Google Scholar) [2023-04-27] ZSCC: 0000004},
  file = {/home/rey/Zotero/mystorage/Security/Distance Bounding/Boureanu et al_2020_Security Analysis and Implementation of Relay-Resistant Contactless Payments.pdf}
}

@inproceedings{Conti2022,
  title = {Evexchange: {{A}} Relay Attack on Electric Vehicle Charging System},
  shorttitle = {Evexchange},
  booktitle = {Computer {{Security}}–{{ESORICS}} 2022: 27th {{European Symposium}} on {{Research}} in {{Computer Security}}, {{Copenhagen}}, {{Denmark}}, {{September}} 26–30, 2022, {{Proceedings}}, {{Part I}}},
  author = {Conti, Mauro and Donadel, Denis and Poovendran, Radha and Turrin, Federico},
  date = {2022},
  pages = {488--508},
  publisher = {{Springer}},
  annotation = {6 citations (Google Scholar) [2023-04-27]},
  file = {/home/rey/Zotero/mystorage/Security/Distance Bounding/recent/Conti et al_2022_Evexchange.pdf}
}

@article{Konig1931,
  title = {Graphs and Matrices},
  author = {Konig, Dénes},
  date = {1931},
  journaltitle = {Matematikai és Fizikai Lapok},
  volume = {38},
  pages = {116--119},
  annotation = {131 citations (Google Scholar) [2023-05-08]}
}

@inproceedings{Bellare1993,
  title = {Random Oracles Are Practical: {{A}} Paradigm for Designing Efficient Protocols},
  shorttitle = {Random Oracles Are Practical},
  booktitle = {Proceedings of the 1st {{ACM Conference}} on {{Computer}} and {{Communications Security}}},
  author = {Bellare, Mihir and Rogaway, Phillip},
  date = {1993},
  pages = {62--73},
  annotation = {5897 citations (Google Scholar) [2022-06-17]},
  file = {/home/rey/Zotero/mystorage/Security/Cryptography/Random Oracle/Bellare_Rogaway_1993_Random oracles are practical.pdf;/home/rey/Zotero/mystorage/Security/Cryptography/Random Oracle/Bellare_Rogaway_1993_Random oracles are practical2.pdf}
}

@misc{wired10,
  author = {Lily Hay Newman},
  title = {{An Elaborate Hack Shows How Much Damage IoT Bugs Can Do}},
  howpublished = "\url{https://www.wired.com/story/elaborate-hack-shows-damage-iot-bugs-can-do/}",
  year = {2010}, 
}

@inproceedings{Parno2010,
  title = {Bootstrapping Trust in Commodity Computers},
  booktitle = {{{IEEE Symposium}} on {{Security}} and {{Privacy}}},
  author = {Parno, Bryan and McCune, Jonathan M. and Perrig, Adrian},
  date = {2010},
  publisher = {{IEEE}},
}

@inproceedings{Kil2009,
  title = {Remote Attestation to Dynamic System Properties: {{Towards}} Providing Complete System Integrity Evidence},
  shorttitle = {Remote Attestation to Dynamic System Properties},
  booktitle = {{{IEEE}}/{{IFIP International Conference}} on {{Dependable Systems}} \& {{Networks}}},
  author = {Kil, Chongkyung and Sezer, Emre C. and Azab, Ahmed M. and Ning, Peng and Zhang, Xiaolan},
  date = {2009}
}

@article{Abidin2021,
  title = {Secure, Accurate, and Practical Narrow-Band Ranging System},
  author = {Abidin, Aysajan and El Soussi, Mohieddine and Romme, Jac and Boer, Pepijn and Singelée, Dave and Bachmann, Christian},
  date = {2021},
  journaltitle = {IACR Transactions on Cryptographic Hardware and Embedded Systems}
}

@inproceedings{Bursuc2024,
  title = {Software-Based Memory Erasure with Relaxed Isolation Requirements},
  booktitle = {Proc. 37th {{IEEE Computer Security Foundations Symposium}}   ({{CSF}}'24)},
  author = {Bursuc, Sergiu and Gil-Pons, Reynaldo and Mauw, Sjouke and Trujillo-Rasua, Rolando},
  date = {2024},
  pages = {to appear}
}
\fussy{}

\section{Appendix}

\subsection{Proof of Erasure vs Proof of Space}\label{appendix:relation}

In a Proof of Space (PoS) protocol, a prover aims to convince a verifier that 
it has 
reserved a non-trivial amount of memory space, making it a popular alternative 
to proof of work. 
A PoS, as defined
in~\cite{Dziembowski2015}, has two phases: one where the
verifier interacts with an attacker \( \A_0 \), which is supposed to compute and
store a state \( \sigma_0 \) of a certain size; another one where a second 
attacker 
\( \A_1 \)
takes \( \sigma_0 \) as input and answers the verifier challenges. The verifier 
then 
accepts or rejects the proof based on the responses received from \( \A_1 \). 
Let \( \sigma_1 \)
be the state of maximal size used by \( \A_1 \). There are two notions of
proof of space security defined in~\cite{Dziembowski2015, Pietrzak2018}, one 
lower-bounding the
size of \( \sigma_0 \) and the other lower-bounding the size of \( \sigma_1 \).
The definition in~\cite{Ateniese2014} only lower-bounds the size of \( \sigma_1
\). These notions are close to what we
need, except that they isolate \( \A_1 \) from \( \A_0 \) in the second phase,
which amounts to the device isolation assumption. 
We allow \( \A_0 \) to
help \( \A_1 \) in the second phase, subject to the constraint that the response
to the challenge in each round has to come from \( \A_1 \), i.e.\ from the
local device. \( \A_0 \) can freely modify the state of \( \A_1 \) between
challenges.

Leaving aside the isolation issue in the security model, a proof of
space~\cite{Dziembowski2015,Ateniese2014,Pietrzak2018} could in theory
be applied to erase memory, since the stored space is typically a random
looking sequence of data. However, all current constructions are
targeted for a different application scenario, where the prover is a
powerful device (e.g.\ a cloud server) that has more resources at its
disposal in order to prove that it has stored the claimed amount of memory.
They require \( O(m\log(m)) \) memory to prove memory storage of size \( m
\). For a proof of erasure, this gap is too high, as it means a large
proportion of the memory is not guaranteed to be erased. Memory-hardness
results~\cite{Alwen2017} cannot be applied in our case for similar reasons.

\subsection{Notions related to the graph-restricted adversary}\label{appendix:graph-restricted}

We review existing restrictions for an adversary against protocols based on
graph labelling, and show their relation with our new restriction presented in \Cref{subsec:graph-restricted}. The main goal
of these restrictions is to reduce the security analysis of the considered
protocol to the analysis of complexity bounds for a combinatorial game on
graphs, which is typically easier to perform, e.g.\ relying on known
graph-theoretical results. 

\noindent \emph{Pebbling adversary:} The use of graph-labelling in security
protocols can be traced back to the work of Dwork et
al.~\cite{Dwork2005}, who
proposed its use for proofs of work. Proofs of work don't require
pre-computation and storage of a prescribed state; the adversary
in~\cite{Dwork2005} simply has to compute and return the challenged label. Since
guessing a previously unseen label can happen only with negligible probability,
the security analysis for such an adversary can be reduced to the classic notion
of graph-pebbling complexity~\cite{Paul1976}, where the adversary is restricted
to playing a game on the graph (applying the hash function corresponds to
placing a pebble). This reduction is tight: the difference between the
probability of winning the game on the graph and that of breaking security is
negligible.

\noindent \emph{Adversary with pre-stored pebbles:} The pebbling technique
cannot be applied directly to proofs of space~\cite{Dziembowski2015}, since the
adversary is supposed to perform some pre-computation. Then it could use the
available space to encode information about the labels of the graph. When
challenged, it will attempt to make the minimal number of oracle calls that,
combined with information stored in the state, allows it to obtain the needed
responses. For their security proof, Dziembowski et al.~\cite{Dziembowski2015}
make the simplifying assumption that the best the adversary can do is to choose
a set of labels on the graph, not necessarily corresponding to the challenge
nodes, and store them in the memory, i.e.\ the adversary cannot compress or
combine labels. This fixed set of labels is then used to reply to any of the
given challenges. Relying on this assumption, the security analysis can then be
reduced to a pebbling game on the graph where the power of \( \A \) is slightly
increased, i.e.\ it is allowed to pre-store pebbles.

\emph{Adversary with algebraically entangled pebbles. } Alwen et
al.~\cite{Alwen2016} have further relaxed this restriction, allowing the
adversary to store several labels in one block of memory. However, these pebbles
are not directly accessible to \( \A \), but are entangled in an algebraic
relation that allows to derive a subset of pebbles once another subset is
available. They show that this is possible in practice: the simplest example is
storing the exclusive or of several labels, but more complex encodings are
possible, e.g.\ based on polynomials. Interestingly, under some conjectures, they
have shown that the general computational adversary can be reduced with minimal
security loss to a graph-restricted adversary with entangled pebbles. However,
these conjectures have been later disproved in~\cite{Malinowski2017}.





\subsection{Bounds on random bitstrings}\label{appendix:prot:random}

Before proving \Cref{bound-one-round-random}, we need some 
preliminary
definitions and results. Let \(S_c = \sum_{j = 0}^{c}\binom{m}{j}{(2^w - 1)}^j
\) for \(c \geq 0\). This coincides with the cardinality of a Hamming sphere
over an alphabet of size \(2^w\) as defined in~\cite{Cohen1997}.

\begin{lemma}\label{lemma:c_ineq} If \(0 \leq c < m\) and \(2^w \geq m + 3\)
    then \( m (m + 1)  \binom{m}{c}  {(2 ^w - 1)}^c \geq 2^w \cdot S_{c - 1}\).
\end{lemma}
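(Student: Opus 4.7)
The plan is to rewrite everything in terms of $T_j := \binom{m}{j}(2^w-1)^j$, so that the target inequality reads $m(m+1) \cdot T_c \geq 2^w \cdot \sum_{j=0}^{c-1} T_j$. First I would dispose of the trivial case $c = 0$, where $S_{-1}$ is an empty sum and the inequality holds automatically. From here on assume $1 \leq c \leq m-1$.

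The key observation is that the sequence $(T_j)_{j=0}^{c-1}$ is non-decreasing under the hypothesis $2^w \geq m+3$. Indeed,
\[
\frac{T_{j+1}}{T_j} = \frac{(m-j)(2^w-1)}{j+1},
\]
which is $\geq 1$ iff $2^w \geq (m+1)/(m-j) + 1 = (m+1)/(m-j)$ would imply… more cleanly, iff $(m-j)(2^w-1) \geq j+1$. For $0 \leq j \leq c-2$ the hypothesis $c \leq m-1$ gives $m - j \geq m - c + 2 \geq 3$, and then $(m-j)(2^w-1) \geq 3(2^w-1) \geq j+1$ follows comfortably from $2^w \geq m+3$. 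This yields the crude bound
\[
S_{c-1} \;=\; \sum_{j=0}^{c-1} T_j \;\leq\; c \cdot T_{c-1}.
\]

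Next I would plug in the exact ratio $T_{c-1}/T_c = c/((m-c+1)(2^w-1))$ to obtain
\[
S_{c-1} \;\leq\; \frac{c^2}{(m-c+1)(2^w-1)} \cdot T_c.
\]
Substituting into the target inequality cancels the common factor $T_c$ and reduces the whole claim to the purely arithmetic statement
\[
m(m+1)\,(m-c+1)\,(2^w-1) \;\geq\; c^2 \cdot 2^w.
\]

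Finally, this last inequality follows from three elementary bounds, each valid under the stated hypotheses: (i) $c^2 \leq (m-1)^2 \leq m(m+1)$ since $c \leq m-1$; (ii) $m - c + 1 \geq 2$ since $c \leq m-1$; and (iii) $2(2^w-1) \geq 2^w$ since $2^w \geq 2$ (which is implied by $2^w \geq m+3 \geq 4$). Multiplying (ii) and (iii) gives $(m-c+1)(2^w-1) \geq 2^w$, and combining with (i) yields $m(m+1)(m-c+1)(2^w-1) \geq m(m+1)\cdot 2^w \geq c^2 \cdot 2^w$, as required. The only subtle point is checking the monotonicity range correctly: we need $T_j \leq T_{j+1}$ only for $j \leq c-2$ (not for $j = c-1$), which is precisely the range where the hypothesis $c < m$ guarantees a sufficient gap $m - j \geq 3$ for the crude estimate $2^w \geq m+3$ to close.
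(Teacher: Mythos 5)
Your proof is correct, and it takes a genuinely different route from the paper. The paper proves the lemma by induction on $c$: the base case $c=0$ is trivial, and the induction step establishes $m(m+1)\,T_{c+1} \geq m(m+1)\,T_c + 2^w\,T_c$ (where $T_j = \binom{m}{j}(2^w-1)^j$), so that applying the induction hypothesis $m(m+1)\,T_c \geq 2^w S_{c-1}$ and adding the leftover term $2^w T_c$ yields $2^w S_c$ on the right. You instead argue \emph{directly}: you establish that the sequence $T_0 \leq T_1 \leq \cdots \leq T_{c-1}$ is non-decreasing under $2^w \geq m+3$, so the tail sum admits the crude bound $S_{c-1} \leq c\,T_{c-1}$, and then you express $T_{c-1}$ as an explicit multiple of $T_c$ and reduce the claim to the purely arithmetic inequality $m(m+1)(m-c+1)(2^w-1) \geq c^2 \cdot 2^w$, which you discharge by three elementary estimates. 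Both approaches hinge on the same structural fact --- the top term $T_c$ dominates the partial sum $S_{c-1}$ --- but your monotonicity argument avoids the induction entirely and makes the slack in the bound very visible, whereas the paper's inductive step packages the same comparison one index at a time. Your handling of the boundary --- noting that monotonicity is needed only for $j \leq c-2$, where $m-j \geq 3$ gives plenty of room --- is exactly the kind of care this style of argument requires, and you got it right.
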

\begin{proof}
    By induction on \(c\). The base case \(c = 0\) follows directly. We need to
    prove the induction step:
    \(
        m \cdot(m + 1) \cdot \binom{m}{c+1} \cdot {(2 ^w - 1)}^{c + 1} 
        \geq 2^w \cdot S_{c}
    \).
    First we prove:
    \[
        \begin{split}
            & m (m + 1)  \binom{m}{c+1}  {(2 ^w - 1)}^{c + 1} \\
        & \geq m (m + 1)  \binom{m}{c}  {(2 ^w - 1)}^{c} + 2^w  
        \binom{m}{c}  {(2 ^w - 1)}^c \\
        & \iff \frac{m \cdot(m + 1)}{c + 1} \cdot (2 ^w - 1) 
        \geq \frac{m \cdot(m + 1)}{m - c} + \frac{2^w}{m - c}\\
        & \iff \frac{m \cdot(m + 1) \cdot (m - c)}{c + 1} \cdot (2^w - 1) 
        \geq m \cdot(m + 1) + 2^w\\
        \end{split}
    \]
    which follows from 
    \(
        \frac{m \cdot(m + 1) \cdot (m - c)}{c + 1} \cdot (2^w - 1) 
        \geq m \cdot(m + 1) + 2^w
    \). 
    We conclude: 
    \[
        \begin{split}
            & m \cdot(m + 1) \cdot \binom{m}{c+1} \cdot {(2 ^w - 1)}^{c + 1} \\
            & \geq  m \cdot(m + 1) \cdot \binom{m}{c} \cdot {(2 ^w - 1)}^{c} 
            + 2^w \cdot \binom{m}{c} \cdot {(2 ^w - 1)}^c \\
            & \geq 2^w \cdot S_{c - 1} \cdot {(2 ^w - 1)}^{c - 1} + 2^w \cdot \binom{m}{c} \cdot {(2 ^w - 1)}^c \\
            & = 2^w \cdot S_{c} \qedhere
        \end{split} 
    \] 
\end{proof}

\begin{lemma}\label{lemma:cmax_ineq} 
If \(y \geq m + w\) and \(c\) is maximal s.t. \(S_c \leq 2^y  \), then \(c
\geq \left\lceil \frac{y - m - w + 1}{w} \right\rceil \).
\end{lemma}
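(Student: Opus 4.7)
My plan is to show directly that $S_k \leq 2^y$ for $k = \lceil (y-m-w+1)/w \rceil$; since $c$ is defined as the \emph{maximal} index with $S_c \leq 2^y$, this immediately yields $c \geq k$, which is the claim.

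The first step is to obtain a simple upper bound on $S_k$ that separates the ``$w$-part'' from the ``$m$-part''. Since $(2^w-1)^j \leq (2^w)^j \leq (2^w)^k$ for every $j \leq k$, and $\sum_{j=0}^{k}\binom{m}{j} \leq \sum_{j=0}^{m}\binom{m}{j} = 2^m$, I can factor out $(2^w)^k$ and bound the remaining binomial sum by $2^m$, giving
\[
S_k \;\leq\; 2^{wk}\cdot 2^{m} \;=\; 2^{wk+m}.
\]
The virtue of this estimate is that it requires no hypothesis beyond $w \geq 1$ (in particular, it does not need the $2^w \geq m+3$ hypothesis of \Cref{lemma:c_ineq}).

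The second step is an elementary arithmetic manipulation of the ceiling. Using the standard inequality $\lceil a/b\rceil \leq (a+b-1)/b$ with $a = y-m-w+1$ and $b = w$, I obtain
\[
k \;\leq\; \frac{(y-m-w+1)+(w-1)}{w} \;=\; \frac{y-m}{w},
\]
so $wk \leq y-m$, and therefore $wk + m \leq y$. Combined with the bound from the first step, this gives $S_k \leq 2^{wk+m} \leq 2^y$, so $k$ itself is a valid candidate in the definition of $c$, forcing $c \geq k$.

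I do not expect any real obstacle: the only subtle point is choosing a bound on $S_k$ that is loose enough to require no additional hypothesis on $w$ relative to $m$ (so that the statement holds under just $y \geq m+w$), yet tight enough that the residual term $2^m$ is absorbed exactly by the slack of one ``$w$-block'' built into the ceiling $\lceil (y-m-w+1)/w\rceil$. Replacing $(2^w-1)^j$ by the uniform bound $2^{wk}$ and the tail $\sum_{j\leq k}\binom{m}{j}$ by $2^m$ achieves precisely this balance.
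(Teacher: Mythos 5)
Your proof is correct and is essentially the paper's argument stated contrapositively: the paper assumes $c < \lceil (y-m-w+1)/w\rceil$, derives $(c+1)w + m \leq y$, and bounds $S_{c+1} \leq 2^m(2^w-1)^{c+1} < 2^{m+w(c+1)} \leq 2^y$ to contradict maximality, while you bound $S_k$ directly for $k = \lceil (y-m-w+1)/w\rceil$ using the identical estimates ($(2^w-1)^j \leq 2^{wk}$ and $\sum_j\binom{m}{j}\leq 2^m$) together with the same ceiling manipulation $k \leq (y-m)/w$. The direct form is slightly cleaner to read but mathematically the same.
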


\begin{proof}
    Assume the contrary, \(c < \left\lceil \frac{y - m - w + 1}{w} \right\rceil
    \implies 
    (c + 1) \cdot w + m \leq y \). Then:
    
    \[
        \begin{split}
            & S_{c + 1} = \sum_{j = 0}^{c + 1} \binom{m}{j} \cdot {(2 ^w - 1)}^j \leq 
            \sum_{j = 0}^{c + 1} \binom{m}{j} \cdot {(2 ^w - 1)}^{c + 1} \\
            & \leq 2^m \cdot {(2 ^w - 1)}^{c + 1} 
            < 2^{m+ w\cdot(c + 1)}
            \leq 2^y 
        \end{split}
    \]
    which is a contradiction, as \(c\) was the largest integer with this
    condition.
\end{proof}

Denote by \(d(\psi, \psi')\) the number of blocks in which two bitstrings of size \(m
\cdot w\) differ. For a set \(R \subseteq \bin^{n\cdot w}\), let \(d_c(R, \psi')\)
the number of \(\psi\in R\) s.t. 
\(d(\psi, \psi') \geq c\).

\begin{lemma}\label{lemma:local-error} If \(c \geq 1\), \(\psi' \in \bin^{m \cdot
    w}\), \(R \subseteq \bin^{m \cdot w}\) then \( d_c(R, \psi') \geq \abs{R} -
    S_{c - 1} \).
\end{lemma}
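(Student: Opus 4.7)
The plan is to proceed by a direct counting argument based on the structure of Hamming spheres over an alphabet of size $2^w$. Viewing an element of $\{0,1\}^{m \cdot w}$ as a word of length $m$ over the alphabet $\{0,1\}^w$, the block distance $d(\psi,\psi')$ is precisely the Hamming distance in this word representation.

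First, I would rewrite the quantity of interest as a complement:
\[
d_c(R,\psi') = |R| - \bigl|\{\psi \in R : d(\psi,\psi') < c\}\bigr|.
\]
Thus it suffices to upper-bound the number of elements of $R$ lying in the (closed) Hamming ball of radius $c-1$ around $\psi'$, namely the set $B_{c-1}(\psi') = \{\psi \in \{0,1\}^{m\cdot w} : d(\psi,\psi') \le c-1\}$. Obviously, $|\{\psi \in R : d(\psi,\psi') < c\}| \le |B_{c-1}(\psi')|$ since this subset of $R$ is contained in the ball.

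Next I would compute $|B_{c-1}(\psi')|$ exactly. For each $0 \le j \le c-1$, the number of $\psi$ with $d(\psi,\psi') = j$ is obtained by choosing which $j$ of the $m$ blocks differ (giving $\binom{m}{j}$ choices) and, for each such block, picking any of the $2^w - 1$ values that differ from the corresponding block of $\psi'$. This yields $\binom{m}{j}(2^w-1)^j$ strings at distance exactly $j$, and summing over $j$ from $0$ to $c-1$ gives
\[
|B_{c-1}(\psi')| \;=\; \sum_{j=0}^{c-1}\binom{m}{j}(2^w-1)^j \;=\; S_{c-1}.
\]

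Combining the two bounds yields $d_c(R,\psi') \ge |R| - S_{c-1}$ as required. There is no real obstacle here; the only thing to be careful about is the indexing convention (strict inequality $d(\psi,\psi') \ge c$ in the definition of $d_c$ corresponds to the closed ball of radius $c-1$), and the fact that the hypothesis $c \ge 1$ ensures the sum defining $S_{c-1}$ is well-defined and nonempty.
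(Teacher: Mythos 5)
Your proposal is correct and matches the paper's argument: both bound the number of elements of $R$ within block-distance $c-1$ of $\psi'$ by the size $S_{c-1}$ of the Hamming ball of radius $c-1$, then subtract from $|R|$. The only cosmetic difference is that the paper first splits on whether $|R| \le S_{c-1}$, a case distinction your phrasing absorbs automatically.
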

\begin{proof}
    If \(\abs{R} \leq S_{c - 1}\) the claim is trivial. Assume \(\abs{R} > S_{c
    - 1}\). Notice that there are exactly \(\binom{m}{j}{(2^w - 1)}^j\)
    bitstrings \( r \) of size \(m\cdot w\) such that \(d(\psi, \psi') = j\). Then,
    there are at most \(\sum_{j = 0}^{c - 1}\binom{m}{j}{(2^w - 1)}^j = S_{c -
    1}\) bitstrings in \(R\) such that they differ from \(\psi'\) in less than
    \(c\) blocks.
\end{proof}

\begin{lemma}\label{lemma:global-error} If \(c \geq 1, y \geq 0\) are integers,
    \(\cup_{i=1}^k R_i\) is a partition of \(\bin^{m\cdot w}\) with \(k \leq
    2^{m\cdot w - y}\), and \(\psi'_i \in \bin^{m\cdot w}\), then 
    \( 
        \sum_{i=0}^k d_c(R_i, \psi'_i) 
        \geq 2^{m\cdot w - y} \cdot 
        ( 2^y - S_{c - 1})
    \).
\end{lemma}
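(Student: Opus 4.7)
The plan is to obtain the global bound by summing the local bound provided by \Cref{lemma:local-error} over all parts of the partition, then using the cardinality constraint on \(k\).

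First, I would apply \Cref{lemma:local-error} to each pair \((R_i, \psi'_i)\), which yields \(d_c(R_i, \psi'_i) \geq \abs{R_i} - S_{c-1}\). This inequality is valid whether or not \(\abs{R_i} - S_{c-1}\) is non-negative, since \(d_c(R_i, \psi'_i) \geq 0\) by definition and thus trivially dominates any negative right-hand side. This observation is important because some parts of the partition may be small.

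Second, I would sum these \(k\) inequalities and use the fact that \(\{R_i\}\) partitions \(\bin^{m\cdot w}\), so \(\sum_{i=1}^{k} \abs{R_i} = 2^{m\cdot w}\). This gives
\[
\sum_{i=1}^{k} d_c(R_i, \psi'_i) \;\geq\; 2^{m\cdot w} - k \cdot S_{c-1}.
\]
Finally, invoking the hypothesis \(k \leq 2^{m\cdot w - y}\) and noting that \(S_{c-1} \geq 0\), the right-hand side is at least
\[
2^{m\cdot w} - 2^{m\cdot w - y} \cdot S_{c-1} \;=\; 2^{m\cdot w - y}\bigl(2^{y} - S_{c-1}\bigr),
\]
which is exactly the claimed bound.

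Since the argument is essentially a telescoping of \Cref{lemma:local-error}, there is no substantial obstacle. The only subtlety worth highlighting is the sign issue mentioned above: one must be careful that even if individual local bounds become vacuous (negative), the summed inequality remains valid because \(d_c \geq 0\) still allows \(d_c \geq \abs{R_i} - S_{c-1}\) to hold pointwise. Beyond this minor point, the proof is a direct two-line computation.
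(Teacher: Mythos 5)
Your proof is correct and follows essentially the same route as the paper's: apply \Cref{lemma:local-error} to each part, sum over the partition using $\sum_i \abs{R_i} = 2^{m\cdot w}$, and use $k \leq 2^{m\cdot w - y}$. The only cosmetic difference is that the paper pads the partition with empty sets to assume $k = 2^{m\cdot w-y}$ exactly, while you keep $k$ as is and invoke the monotonicity $k \cdot S_{c-1} \leq 2^{m\cdot w-y}\cdot S_{c-1}$ at the end; both handle the sign of $\abs{R_i}-S_{c-1}$ correctly since \Cref{lemma:local-error} already covers small $R_i$.
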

\begin{proof}
    Without loss of generality assume \(k = 2^{m \cdot w - y}\) (we can always
    add empty sets to the partition). By \Cref{lemma:local-error}, \( d_c(R_i,
    \psi'_i)\geq \abs{R_i} - S_{c - 1} \). Adding for all \(R_i\):
    \[
        \sum_{i=0}^k d_c(R_i, \psi'_i) 
        \geq \sum_{i=0}^k \left( \abs{R_i} - S_{c - 1} \right) 
        = 2^{m\cdot w - y} \cdot ( 2^y - S_{c - 1}) \qedhere
    \]
\end{proof}

\boundoneroundrandom*{}

\begin{proof}
Next we prove the second result. Assume that \(y \geq m + w\) and \(2^w \geq m +
3 \). For each \( R_\sigma \) as in the proof for the first result, we will be
interested in lower-bounding the number of errors that \(\A_1 \) makes while
answering the queries of a challenger using some \( \psi \in R_\sigma \). Notice
this is exactly \(d(R_\sigma, \psi')\) defined above. Clearly the set of all
possible \(R\) is a partition of \( \bin^{m\cdot w}\) with size at most
\(2^{m\cdot w - y}\). Let \(c\) be the largest integer such that \(S_c \leq
2^y\) holds. By \Cref{lemma:cmax_ineq}, we have that \(c \geq \left\lceil
\frac{y - m - w + 1}{w} \right\rceil \). Let \(\I^2_{\mathsf{good}}\) be the set
of all substrings \(\psi \) such that \(\A_1 \) makes at least \(c\) errors while
answering queries about \(\psi \). From \Cref{lemma:global-error} we deduce that
\(\abs{\I^2_{\mathsf{good}}} \geq 2^{m\cdot w - y} \cdot s \), where \(s = 2^y -
S_{c - 1} \). By \Cref{lemma:c_ineq}:
\[
    \begin{split}
        & 2^{m\cdot w} = 2^{m\cdot w - y} \left( S_{c - 1} + s\right) \\
        & \leq 2^{m\cdot w - y} \left( m (m + 1)\cdot 2^{-w}  \binom{m}{c} 
        \cdot {(2 ^w - 1)}^c + s \right) \\
        & \leq 2^{m\cdot w - y} \cdot (m (m + 1) \cdot 2^{-w} + 1) \cdot s 
        \\
        & \implies \abs{\I^2_{\mathsf{good}}} 
            \geq 2^{m\cdot w} 
        \left(1 -  m (m + 1)  2 ^ {-w}\right)
    \end{split}
\]
Then \(\A_1\) makes \( \left\lceil \frac{y - m - w + 1}{w} \right\rceil \)
errors for \( 2^{m\cdot w} \cdot \left(1 - (m^2 + m) \cdot 2 ^ {-w}\right) \)
bitstrings.
\end{proof}

\subsection{Reduction of \poseprot{} security to depth-robustness}\label{appendix:prot:graph}

\lemmallp*{}
\begin{proof}
    Fix \(i\). If the response to the challenge \(o_i\) is not correct, then the
    claim is trivial (\(T_i\) is infinite). The case \(o_i \in B\) is also
    trivial. Assume \(o_i \notin B\) and that the answer to the challenge
    \(o_i\) is correct. Consider the longest path ending in \(o_i\), let it be
    \(\sequence{v_1, v_2, \ldots, v_{k-1}, v_k = o_i}\), where \(k = \llp(o_i,
    G\setminus B)\). As none of the nodes in this path are blue, all their
    labels were computed by asking the oracle for the corresponding value. Let
    \(f_j\) be the smallest index such that \(Q_{x,f_j} = \ell^-(v_j)\) for some
    \(x\). As \(v_k = o_i\), then \(t_i \geq f_k\).
    
    We prove that \(\forall j \colon 1 \leq j < k\) we have \( f_j < f_{j +
    1}\). Each inequality can be proved using the same argument, so we prove
    only \(f_1 < f_2\). As \(v_1\) is not blue and \(v_2\) is a successor of
    \(v_1\), then the query \(\ell^ -(v_2)\) in round \(f_2\) (which contains
    the label of \(v_1\)) must have happened after the query \(\ell^-(v_1)\) in
    round \(f_1\). This implies \(f_2 > f_1\), as needed.

    As \(\forall i \colon f_i \geq 1\) then \( f_k \geq k \implies T_i \geq k\).
\end{proof}

\lemmadr*{}
\begin{proof}
We can assume \( \abs{B} < m \), since the result trivially holds otherwise.
Therefore, from the definition of \textbf{depth-robustness}, there exist a set
\( O'\subseteq \set{o_i}\setminus B \) with \( |O'|\geq m-|B| \) s.t.
\( \forall o_i\in O' \colon \llp(o_i, G\setminus B)\geq \gamma \). From
\Cref{lemma:llp}, we deduce \( \forall o_i\in O' \colon T_i\geq \gamma \).
Therefore, we deduce
\[ 
    \probsubunder{i\sample [m]}{T_i\geq \gamma } 
    \geq \probsubunder{i\sample[m]}{o_i\in O' } \geq 1 - |B| m^{-1} \qedhere
\]
\end{proof}

The next definition and lemma hold for general adversaries.
\begin{definition}[Good oracle] We say that \( h\in H \) is a good oracle for an
    adversary \( \A=(\A_0,\A_1) \) if we have \( \abs{B} \leq \left\lceil
    \frac{M}{w-\log(m\cdot q)} \right\rceil \).
\end{definition}

\begin{restatable}{lemma}{lemmacountinggeneral}\label{lemma:counting_general}
    If \( \A \) has parameters \( (m,1,w)
    \) and \(H_\mathsf{good} \subseteq H\) be the corresponding set of good
    oracles. Then \(\abs{H_\mathsf{good}} \geq
    \abs{H} \cdot (1 - 2^{-w + \log(m \cdot q)})\).
\end{restatable}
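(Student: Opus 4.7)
The plan is to mirror the proof strategy used for \Cref{lemma:countingrestricted}: build an encoder/decoder pair that, relying on the adversary, compresses any \(h\) producing too many blue nodes, and then apply \Cref{enc-lemma} to bound the fraction of such bad oracles. The essential new ingredient, needed because a general adversary's oracle queries need not correspond to valid pre-labels, is that the decoder can no longer recognise which queries reveal blue-node labels simply by inspecting them; the encoder must therefore supply explicit advice pointing to the queries (and slots within them) that carry blue labels.

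First I would have the encoder, on input \(h\), run \(\A_0\) to produce \(\sigma\) and then simulate \(\S\) on \(\sigma\), recording the blue-node set \(B\) together with the index \((i,j)\) of the associated query \(Q_{i,j}\) for each \(v\in B\). If \(|B| \leq \lceil M/(w-\log(m\cdot q)) \rceil\) the encoder would output nothing; otherwise it would write \(\sigma\), then the list of query indices \((i,j)\) for the blue nodes in a fixed canonical order (each taking \(\log(m\cdot q)\) bits, since \(\S\) runs \(\A_1\) in parallel over \(m\) challenges each of which issues at most \(q\) queries), then the list \(c\) of oracle responses to the calls made by \(\S\) except those whose output is already a previously identified blue-node label, and finally the remaining values of \(h\) in \(c'\) in lexicographic order. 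The decoder would replay \(\S\) on \(\sigma\) in lockstep: whenever \(\S\) issues a query flagged by the advice, it extracts the revealed blue-node labels directly from the bits of \(Q_{i,j}\), using the bounded graph structure together with the canonical ordering to locate the correct \(w\)-bit slice with only constant overhead that can be absorbed into the \(\log(m\cdot q)\) term. All other oracle calls are answered from the table of already-known labels or by consuming the next entry of \(c\); after \(\S\) terminates the remaining table entries are read from \(c'\), recovering \(h\) in full.

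A counting argument then closes the proof. The total encoded length is \(M + \log|H| - |B|\cdot w + |B|\cdot\log(m\cdot q) = M + \log|H| - |B|\cdot(w-\log(m\cdot q))\), which is strictly less than \(\log|H|\) precisely when \(|B| > \lceil M/(w-\log(m\cdot q)) \rceil\). Applying \Cref{enc-lemma} exactly as in the restricted case then gives \(\delta \leq 2^{M - |B|\cdot(w-\log(m\cdot q))} \leq 2^{-w+\log(m\cdot q)}\), and therefore \(|H_\mathsf{good}| \geq |H|\cdot(1 - 2^{-w+\log(m\cdot q)})\) as claimed. The main obstacle will be convincing ourselves that \(\log(m\cdot q)\) bits of advice per blue node really suffice for the decoder to pick out the correct \(w\)-bit slice inside the flagged query even when \(\A_1\) is unrestricted: this is where the bounded indegree of our graphs and the agreed canonical ordering of blue nodes carry the argument, and any residual dependence on the maximum indegree has to be shown to be a constant swallowed into the \(\log(m\cdot q)\) cost.
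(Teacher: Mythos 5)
Your proposal follows the paper's proof essentially verbatim: you add per-blue-node advice identifying which simulator outputs reveal blue-node labels, charge \(\log(m\cdot q)\) bits per advice entry, compute the encoding length as \(M + \log|H| - |B|\cdot(w-\log(m\cdot q))\), and apply \Cref{enc-lemma} to conclude \(\delta \leq 2^{-w+\log(m\cdot q)}\), which is exactly the paper's argument (the only cosmetic difference is that you index advice by the \((i,j)\) pair of parallel instance and query number, while the paper indexes by position in \(\S\)'s output stream; both cost the same). The closing worry about whether \(\log(m\cdot q)\) bits suffice to locate the \(w\)-bit slice is legitimate to flag, but the paper sidesteps it the same way you do, by relying on the decoder's replay of \(\S\) and the known graph structure to determine which slices of a flagged possible pre-label are newly revealed labels.
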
 
\begin{proof}
    The proof follows closely the one of \Cref{lemma:countingrestricted}, but
    must be extended for general adversaries.

    As before, the encoder will output \(0\) if \(h \in H_{\mathsf{good}}\).
    Else, it will output \(\sigma, p, c, c'\), where the new value \(p\) is a
    list of indices, where each index \(a_i\) corresponds to the \(i\)-th output
    of \( \S \), and indicates that it contains the first appearance of the
    label of a blue node. Each of these pairs can be encoded using \(\log(m
    \cdot q)\) bits.

    The decoder will also execute \( \S \) as before. The difference is that
    when \( \S \) outputs a value, only if the corresponding index is in \(p\),
    it will extract and store the labels of the blue nodes associated with that
    output. At the end of execution, the decoder will have stored all labels of
    blue nodes.

    As the size of \(p\) is at most the number of blue nodes, the size of the
    encoding is at most \( M + \abs{B} \cdot \log(m \cdot q) + \log \abs{H} -
    \abs{B} \cdot w \), and it is correct with probability \(\delta = 1 -
    \abs{H_{\mathsf{good}} } / \abs{H}\). From \Cref{enc-lemma} we deduce:
    \[
        \begin{split}
            & M + \log \abs{H} - \abs{B} \cdot (w - \log(m \cdot q) )
            \geq \log \abs{H} + \log(\delta) \\
            &\implies \delta \leq 2^{-w + \log(m \cdot q)} \quad \text{from \(\abs{B} 
            > \left\lceil \frac{M}{w - \log(m \cdot q)} \right\rceil \)}
        \end{split}
    \]
    Hence 
    \(
        \abs{H_\mathsf{good}} = (1 - \delta) \cdot \abs{H} \geq
    (1 - 2^{-w + \log(m \cdot q)} ) \cdot \abs{H}
    \).
\end{proof}

\subsection{Graph construction}\label{appendix:graph}

\theoremlongpath*{}
\begin{proof}
Assume the contrary, there exists a set \(R\) such that there is no path
containing \(2^n - \abs{R}\) nodes from \(\Base(G_n)\). Consider the partition
\(B \cup R_1 \cup \ldots \cup R_k \) of the vertices in \(R\) such that the
nodes in each partition \(R_i\) belong to exactly one subgraph \(H_i\) (defined
by some \( \harrow \), or the \( \CS \) of some component) and \(B = R \cap
\Base(G_n)\). If \(R'_i\) is the set of nodes guaranteed to exist by
\Cref{connector-lemma}, then removing \(R'_i\) instead of \(R_i\) doesn't
increase the connectivity between nodes in \(\Base(G_n)\). We deduce that if the
result is true for \(R'\), then after removing the nodes in \(R' = B \cup R'_1
\cup \ldots \cup R'_k\) from \(G_n\), there is a path containing \(2^n -
\abs{R'} \geq 2^n - \abs{R}\) nodes from \(\Base(G_n)\). The vertices in \(R'\)
are either in the input or output of some \(H_i\), or in \(\Base(G_n)\).

It remains to prove that if the nodes in \(R'\) are removed, there is a path
with the required features. We prove this by induction in a stronger statement:
in \(G_n \setminus R'\) there exists a path \(L\) with the required properties
such that the nodes in \(V(L) \cap \Base(\RS(\LS^k(G_n)))\) are reachable from
\(\In(\CS(\LS^k(G_n)))\) for \(0 \leq k < n\).

The base cases \(G_0\) and \(G_1\) can be checked manually. For
the induction step assume in \(R'\) there are \(l, r, c\) nodes from \(\LS(G_{n
+ 1})\), \(\RS(G_{n + 1})\) and \(\CS(G_{n + 1}) \cup \left( \Out(\CS(G_{n +
1})) \tr \RS(G_{n + 1}) \setminus \RS(G_{n + 1})\right) \), respectively. By the
induction hypothesis, there is a path \(L_l\) in \( \LS(G_{n + 1})\) with the
required properties, i.e. 
\(\abs{V(L_l) \cap \Base(\LS(G_{n + 1}))} \geq 2^n -
l\). There is also \(L_r\) in \( \RS(G_{n + 1})\).

We will finish the proof by case analysis. If \(r + c \geq 2^n\), then \(2^n - l
\geq 2^{n + 1} - \abs{R'}\) and \(L = L_l\) has the required properties. Else,
\(r + c < 2^n\). Let \(c_i = \abs{R' \cap \In(\CS(G_{n + 1}))}\). 

Next we prove that there is a list of nodes 
\( F \subseteq V(L_r) \cap
\Base(\RS(G_{n + 1}))\)
such that \(\abs{F} \geq 2^n - r - c + c_i \) and all
nodes in \(F\) are reachable from at least one node in \(\Out(\CS(G_{n + 1}))\). 

Consider a node \(v \in V(L_r) \cap \Base(\RS(G_{n + 1}))\). Let \(k\) be the
largest such that \(v \in \LS^k(\RS(G_{n + 1})) \), and \(G_v = \LS^k(\RS(G_{n +
1}))\). If \(k = n\), then the only way that this node is not reachable from
\(\Out(\CS(G_{n + 1}))\) is if its predecessor in that set belongs to \(R'\). If
\( k < n\), then by the induction hypothesis, given that \(v \in
\Base(\RS(G_v))\), then \(v\) is reachable from a node in \(\In(\CS(G_v))\),
call this node \(w\). The only way that \(v\) is not reachable from
\(\Out(\CS(G_{n + 1}))\) through \(w\) is if all paths in the connector
\(\bar{H}\) (which is isomorphic to \(H_{n - k - 1}\)) from this set to \(w\)
are covered by nodes in \(R'\). As \(w\) is not covered, this is only possible
if all nodes from \(\Out(\CS(G_{n + 1}))\) connected to \(\In(\bar{H})\) are in
\( R' \), which are exactly \(2^{n - k - 1}\) nodes. But there are at most
\(2^{n - k - 1}\) nodes in \(\Base(\RS(G_v))\). Applying the previous deduction
for all \(v\), we deduce that the number of nodes in \( V(L_r) \cap
\Base(\RS(G_{n + 1}))\) not reachable from \(\Out(\CS(G_{n + 1}))\) is at most
the number of nodes that were removed from \(\Out(\CS(G_{n + 1}))\). As this
number is at most \(c - c_i\), the claim follows.

Consider now the path \(P_F\) determined by the list of nodes \(F\). All these
nodes are reachable from \(\Out(\CS(G_{n + 1}))\). Furthermore, as \(c_i \leq c
+ r < 2^n\), these nodes are also reachable from some node in \(\In(\CS(G_{n +
1}))\). We will concatenate this path with a suitable subset of \(L_l\). Let \(D
\subseteq L_l \cap \Base(\LS(G_{n + 1}))\) be the set of nodes that are
connected to a node in \(\In(\CS(G_{n + 1}))\). Then \(\abs{D} \geq 2^n - l -
c_i\). Consider the path \(P_D\) determined by the list of nodes \(D\). This
path can be extended with \(P_F\) and the resulting path contains at least \(2^n
- l - c_i + 2^n - r - c + c_i = 2^{n + 1} - \abs{R'}\) nodes in \(\Base(G_{n +
1})\), as was needed.
\end{proof}

\end{document}